\numberwithin{equation}{section}
\begin{document}
\title[Scattering from potentials near a subspace]{Scattering for Schr\"{o}dinger operators with potentials concentrated near a subspace}
\author{Adam Black}
\address{Department of Mathematics\\
  Yale University\\
 New Haven, CT 06511}
 \email[]{adam.black@yale.edu}
\author{ Tal Malinovitch}
\address{Department of Mathematics\\
  Yale University\\
 New Haven, CT 06511}
  \email[]{tal.malinovitch@yale.edu}
\maketitle
\begin{abstract}
    We study the scattering properties of Schr\"{o}dinger operators with bounded potentials concentrated near a subspace of $\bbR^d$. For such operators, we show the existence of scattering states and characterize their orthogonal complement as a set of surface states, which consists of states that are confined to the subspace (such as pure point states) and states that escape it at a sublinear rate, in a suitable sense. We provide examples of surface states for different systems including those that propagate along the subspace and those that escape the subspace arbitrarily slowly. Our proof uses a novel interpretation of the Enss method \cite{Enss} in order to obtain a dynamical characterisation of the orthogonal complement of the scattering states. 
\end{abstract}
\tableofcontents
 \clearpage
\section{Introduction}\label{intro}
In this paper, we study the scattering properties of Schr\"{o}dinger operators with potentials concentrated near a subspace of $\bbR^d$. This is one of many models of a quantum particle interacting with a surface. For such operators, we show the existence of scattering states and characterize their orthogonal complement as a set of surface states, which consists of states that are confined to the subspace (such as pure point states) and states that escape it at a sublinear rate, in a suitable sense. We provide examples of surface states for different systems including those that propagate along the subspace and those that escape the subspace arbitrarily slowly.\par
\subsection{Motivation and prior work}
Our work is motivated by the vast literature studying the scattering theory of Schr\"{o}dinger operators with potentials that decay at infinity. Typically, these are self-adjoint operators on $\calH=L^2(\bbR^d)$ of the form
\begin{align}\label{HDef}
    H=H_0+V
\end{align}
where $H_0=-\Delta$ and $V$, the potential, is a real-valued multiplication operator. For \emph{short range} potentials, that is, those with sufficiently fast decay, one is interested in showing that the \emph{wave operators}
\begin{align*}
    \Omega^{\pm}=\slim_{t\rightarrow \mp\infty}e^{itH}e^{-itH_0}
\end{align*}
exist on all of $\calH$ and are \emph{asymptotically complete} in the sense that their range is equal to the continuous subspace of $H$. Intuitively, states in the range of $\Omega^\pm$ behave like free waves as $t\rightarrow \mp\infty$, in the following sense: if $\Omega^- \psi=\varphi$, then
\begin{align*}
    \lim\limits_{t \rightarrow \infty }\|e^{-itH_0}\psi -e^{-itH}\varphi\|=0
\end{align*}
 Asymptotic completeness then means that \emph{all} states in the continuous subspace of $H$ scatter to free waves.\par
We make no attempt to comprehensively review the multitude of results concerning which assumptions on $V$ yield asymptotic completeness. However, we mention the seminal work of Agmon \cite{agmon1975spectral} (and the references therein), in which asymptotic completeness is shown for $V$ satisfying, for instance,
\begin{align*}
    V(x)=O(|x|^{-(1+\epsilon)})\text{ as }x\rightarrow\infty
\end{align*}
Our paper is based on the work on Enss \cite{Enss} showing asymptotic completeness for potentials satisfying a short range condition, which for a bounded potentials can be written as
\begin{align}\label{EnssCondition}
    \|V\chi_{B_r^c}\|\in L^1(r)
\end{align}
where $\chi$ denotes an indicator function and $B_r^c$ is the complement of the ball of radius $r$ in $\bbR^d$.\par
In a related direction, many authors have investigated the scattering theory of Schr\"{o}dinger operators with \emph{anisotropic} potentials that have different behavior in different coordinate directions (see, for example, \cite{Carmona,davies1977scattering,DaviesSimon,boutet1996some,Richard}). Building on one-dimensional results of Carmona \cite{Carmona}, Davies and Simon \cite{DaviesSimon} investigated potentials $V$ that are periodic in the coordinate directions $\{x_1,...,x_{d-1}\}$ but with different spatial asymptotics as $x_d$ goes to plus or minus infinity. They showed that in this setting, the absolutely continuous subspace of $H$ decomposes into pieces that, under the evolution of $H$, move to $\pm\infty$ in the $x_d$ coordinate and surface states that are localized near the hypersurface $\{x_d=0\}$ for all time. We review this result more thoroughly in Section \ref{ExamplesSection}, but for now we note that even if $V$ goes to $0$ rapidly as $x_d\rightarrow \pm\infty$, there may still exist surface states in the ac subspace of $H$. Furthermore, a state in the range of $\Omega^\pm$ cannot be localized near a hypersurface for all time (see Section \ref{IsoIntersection}) so the presence of ac surface states may be thought of as an obstruction to asymptotic completeness. Such states may also be seen if $V$ decays sufficiently slowly in some directions. In this case, originally studied by Yafaev \cite{yafaev1979break}, one may observe states which disperse away from the support of $V$ slower than a free wave (see Section \ref{ExamplesSection} for more details). Finally, we remark that asymptotic completeness may also fail in the sense that $\textrm{Ran}(\Omega^-)$ may no longer be equal to $\textrm{Ran}(\Omega^+)$. For $d=3$, one may observe such behavior in settings similar to those considered below \cite{davies1977scattering}. \par
In view of the circle of ideas recalled above, one may naturally ask what can be said about the scattering theory of potentials that decay at infinity \emph{but only in some coordinate directions}. By this, we mean a potential $V$ that is concentrated (in a sense to be specified later) near the surface $\{x\in\bbR^d\mid x_{k+1}=\cdots x_d=0\}$ for some $1\leq k<d$. The aforementioned class of examples shows that one cannot expect asymptotic completeness in this setting because some states may undergo transport along the surface. However, one has the following very plausible physical picture: a state which moves away from the surface as time evolves should feel the influence of the potential less and less, so it should behave asymptotically like a free particle and therefore be in the range of the wave operator. This suggests that there is a dichotomy between states that remain near the surface and those that are asymptotically free \emph{irrespective of the precise nature of $V$.} So, one should really ask: for $V$ as above, is the orthogonal complement of $\Ran{\Omega^\pm}$ given by the space of surface states? The present paper is an affirmative answer to this question.\par
Before stating our results, let us mention that many authors have studied the spectral and scattering theory of surface potentials due in part to their physical importance. We refer the reader to \cite{davies1977scattering,de2003dynamical,frank2003scattering,grinshpun1995localization,hundertmark2000spectral,JL-Cor,JL-Spectral,JL-Sur} for some idea of the questions that have been investigated for surface models. In these papers and others, the authors are usually interested in surface potentials with some additional structure. For instance, among other examples, Davies and Simon \cite{DaviesSimon} consider a partially periodic potential so that they may leverage symmetry. Other authors investigate random surface potentials \cite{de2003dynamical,grinshpun1995localization} or a (possibly discrete) half-space model with some boundary condition (such as \cite{frank2003scattering,JL-Cor,JL-Spectral,JL-Sur}). In many of these cases, additional structure allows for a better description of the surface subspace than one might hope for in full generality, either by showing it is trivial \cite{JL-Cor} or by giving a more restrictive definition \cite{DaviesSimon}. In this paper, we make significantly weaker assumptions on $V$ - only that it is bounded and has the right decay away from the surface - at the price of a more inclusive description of the surface states. Therefore, many of these prior models fall within the purview of our theorem.
% \clearpage
\subsection{Model and results}
We consider a self-adjoint operator $H$ on $\calH=L^2(\bbR^d)$ of the form (\ref{HDef}), where $V$ is a real-valued bounded potential such that
\begin{align*}
	&\supp V\subset \{x\in \bbR^d\mid \|x^\perp\|=\|(x_{k+1},\dots,x_d)\|\leq r_0\}=:S_{r_0}^k\\
	&\sup_{x\in\bbR^d}|V(x)|=M<\infty 
\end{align*}
for some $r_0>0$ and $1\leq k< d$. Here and throughout, $\|\cdot \|$ refers to either the euclidean norm or the norm of $\calH$. Since $k$ is fixed throughout the paper, we will suppress it in the notation.
We define the space of surface states to be
\begin{align*}
    \calH_{\textrm{sur}}=\{\psi \in \calH \mid \forall v>0, \lim\limits_{t\rightarrow \infty}\|\chi_{S_{vt}}e^{-itH} \psi\|= \|\psi\|\}
\end{align*}
Our main theorem is that
\begin{theorem}\label{thms}
\leavevmode
\begin{enumerate}[label={(\roman*)},itemindent=1em]
    \item \label{existenceTheorem}(Existence) For all $\psi \in \mathcal{H}$ the limits $\Omega^\pm\psi$ exist. Furthermore, $\sigma(H_0)\subset \sigma_\mathrm{ac}(H)$.
    \item \label{Complete} (Completeness) We have 
    \begin{align*}
    \calH=\calH_\mathrm{sur}\oplus\Ran(\Omega^-)
\end{align*}
\end{enumerate}
\end{theorem}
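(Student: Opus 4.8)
The plan for (i) is to use Cook's method in the product coordinates $x=(x_\parallel,x^\perp)$, under which $\calH=L^2(\bbR^k)\otimes L^2(\bbR^{d-k})$, $H_0=H_0^\parallel\otimes I+I\otimes H_0^\perp$ and $e^{-itH_0}=e^{-itH_0^\parallel}\otimes e^{-itH_0^\perp}$. Since $V=V\chi_{S_{r_0}}$ and $\chi_{S_{r_0}}=I\otimes\chi_{\{\|x^\perp\|\leq r_0\}}$, for a product state $\psi=\psi_\parallel\otimes\psi_\perp$ with $\widehat{\psi_\perp}\in C_c^\infty(\bbR^{d-k}\setminus\{0\})$ one has $\|Ve^{-itH_0}\psi\|\leq M\,\|\psi_\parallel\|\,\|\chi_{\{\|x^\perp\|\leq r_0\}}e^{-itH_0^\perp}\psi_\perp\|$, and the last factor is $O(|t|^{-N})$ for every $N$ by non-stationary phase: a free packet whose perpendicular momentum avoids a neighborhood of $0$ leaves the slab $\{\|x^\perp\|\leq r_0\}$ and thereafter has rapidly decaying mass there. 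Such states are dense and $e^{itH}e^{-itH_0}$ is unitary, so Cook's estimate yields existence of $\Omega^\pm$ on all of $\calH$; these are isometries, and $H\Omega^\pm=\Omega^\pm H_0$ gives $\sigma(H_0)=\sigma_\mathrm{ac}(H_0)\subseteq\sigma_\mathrm{ac}(H)$.

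For (ii), note first that $\Omega^-$ is an isometry, so $\Ran(\Omega^-)$ is closed and $H$-invariant, and that $\calH_\mathrm{sur}$ is a closed subspace invariant under $e^{-itH}$ (both checks are elementary from the definitions). Hence the asserted decomposition is equivalent to $\calH_\mathrm{sur}=\Ran(\Omega^-)^\perp$, which I would prove by two inclusions. The inclusion $\calH_\mathrm{sur}\subseteq\Ran(\Omega^-)^\perp$ is easy: given $\psi\in\calH_\mathrm{sur}$ and $\eta$ with $\widehat\eta$ compactly supported away from $\{p^\perp=0\}$ (so $\|p^\perp\|\geq\rho$ on $\supp\widehat\eta$ for some $\rho>0$), split $\langle\psi,\Omega^-\eta\rangle=\lim_t\langle\chi_{S_{\rho t}}e^{-itH}\psi,e^{-itH_0}\eta\rangle+\lim_t\langle\chi_{S_{\rho t}^c}e^{-itH}\psi,e^{-itH_0}\eta\rangle$; the first term vanishes because $\|\chi_{S_{\rho t}}e^{-itH_0}\eta\|\to0$ (the free wave has left the slab of half-width $\rho t$, its perpendicular speed exceeding $\rho$), the second because $\psi\in\calH_\mathrm{sur}$. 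As such $\eta$ are dense and $\Ran(\Omega^-)$ is closed, $\psi\perp\Ran(\Omega^-)$.

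The reverse inclusion $\Ran(\Omega^-)^\perp\subseteq\calH_\mathrm{sur}$ is the crux, and is where I would use the Enss method. Since $\Ran(\Omega^-)^\perp$ reduces $H$ and $\calH_\mathrm{sur}$ is closed, it suffices to treat $\psi$ with $\psi=g(H)\psi$, $g\in C_c^\infty(\bbR)$. Suppose $\psi\notin\calH_\mathrm{sur}$: then for some $v_0>0$ there are $t_n\to\infty$ and $\epsilon_0>0$ with $\|\chi_{S_{v_0 t_n}^c}e^{-it_n H}\psi\|\geq\epsilon_0$. Put $\phi_n=\tilde\chi_n e^{-it_n H}\psi$, $\tilde\chi_n$ a smooth version of $\chi_{S_{v_0 t_n}^c}$ supported in $\{\|x^\perp\|>v_0 t_n/2\}$; then $\|\phi_n\|\geq\epsilon_0$, $g(H)\phi_n=\phi_n+o(1)$, and $V\phi_n=0$ for $n$ large. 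The key step of the adapted Enss argument is a phase-space splitting by the \emph{sign of the radial perpendicular momentum}: write $\phi_n=\phi_n^++\phi_n^-$ with smooth cutoffs of $\{x^\perp\cdot p^\perp>0\}$ and $\{x^\perp\cdot p^\perp<0\}$. Because $\phi_n^\pm$ lives at perpendicular distance exceeding $v_0 t_n/2$, the free classical trajectory of $\phi_n^+$ never approaches $S_{r_0}$ for positive times, and that of $\phi_n^-$ never approaches $S_{r_0}$ for negative times; the matching propagation estimates --- after separating the slow-perpendicular-momentum components, whose backward trajectories drift out to distance of order $t_n$, and using the energy cutoff to bound velocities --- give $\sup_{s\geq0}\|(e^{-isH}-e^{-isH_0})\phi_n^+\|\to0$ and $\sup_{0\leq u\leq t_n}\|(e^{iuH}-e^{iuH_0})\phi_n^-\|\to0$. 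From the first, $\Omega^{-*}\phi_n^+=\phi_n^++o(1)$, so $\phi_n^+$ lies within $o(1)$ of $\Ran(\Omega^-)$; from the second, $e^{it_n H}\phi_n^-=e^{it_n H_0}\phi_n^-+o(1)$, which is supported at perpendicular distance of order $v_0 t_n$, so pairing with the fixed vector $\psi$ (for which $\|\chi_{S_{v_0 t_n/4}^c}\psi\|\to0$) forces $\|\phi_n^-\|\to0$. Thus $\mathrm{dist}(\phi_n,\Ran(\Omega^-))\to0$; applying the unitary $e^{it_n H}$, which preserves $\Ran(\Omega^-)$, and using $\psi\in\Ran(\Omega^-)^\perp$ gives $\langle\psi,e^{it_n H}\phi_n\rangle\to0$, whereas $\langle\psi,e^{it_n H}\phi_n\rangle=\langle e^{-it_n H}\psi,\tilde\chi_n e^{-it_n H}\psi\rangle\geq\|\chi_{S_{v_0 t_n}^c}e^{-it_n H}\psi\|^2\geq\epsilon_0^2$ --- a contradiction.

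The main obstacle, I expect, is precisely the family of propagation estimates behind the claim that $\phi_n^+$ evolves freely forward and $\phi_n^-$ evolves freely backward with errors vanishing as $n\to\infty$. Unlike the classical Enss argument there is no RAGE input available --- $\chi_{S_{r_0}}$ is not $H$-relatively compact --- so the statement "the state is far from the interaction region along a sequence of times" is supplied solely by negating the surface condition; and since $V$ does not decay in the parallel directions the estimates must be genuinely perpendicular, which I would handle by decomposing $\phi_n^\pm$ into dyadic shells in $x^\perp$ together with smooth localization in $p^\perp$, taking particular care of the slowly-moving components and of the interplay between the (nearly sharp) spatial cutoffs and the momentum multipliers.
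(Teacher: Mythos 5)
Your proof of part (i) is essentially the paper's: Cook's method together with non-stationary phase in the perpendicular variables, after factoring $e^{-itH_0}=e^{-itH_0^\parallel}\otimes e^{-itH_0^\perp}$ and using $V\chi_{S_{r_0}}=V$.

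For part (ii) your overall strategy agrees with the paper's --- an Enss-type decomposition run without RAGE, splitting the far part of the state by the sign of $x^\perp\cdot p^\perp$ and controlling each piece by propagation estimates from non-stationary phase --- but you aim at $\calH_\mathrm{sur}=\Ran(\Omega^-)^\perp$ directly by contradiction, whereas the paper routes through an auxiliary space $\tilde{\calH}_\textrm{sur}$ built from Davies' phase-space POVMs $P_\delta(E)$, proves span ($\calH=\tilde\calH_\textrm{sur}+\Ran\Omega^-$), proves trivial intersection, and then proves $\tilde\calH_\textrm{sur}=\calH_\textrm{sur}$ by a separate non-stationary phase lemma. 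Your direct route is conceptually attractive, and the POVM formalism (Proposition \ref{PNormBound}, Corollary \ref{SpaceLoc}) is essentially how the paper makes precise the ``smooth cutoffs of $\{x^\perp\cdot p^\perp\gtrless 0\}$'' you gesture at.

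Two points, however, are genuine logical gaps rather than mere imprecisions. First, the step ``so pairing with the fixed vector $\psi$ \dots forces $\|\phi_n^-\|\to0$'' is a non sequitur: $\langle\psi,e^{it_nH}\phi_n^-\rangle\to0$ does not control $\|\phi_n^-\|$. The true statement $\|\phi_n^-\|\to0$ is what the paper proves in Lemma \ref{Inlemma}, and the mechanism is different from pairing --- one splits $\|\phi_n^-\|\leq\|\chi_\mathrm{in}(e^{-it_nH}-e^{-it_nH_0})\psi\|+\|\chi_\mathrm{in}e^{-it_nH_0}\psi\|$, bounds the first summand by Cook on the adjoint, and for the second approximates $\psi$ by $\chi_{S_R}\psi$ and uses an \emph{operator norm} bound on $\chi_{S_R}e^{it_nH_0}\chi_\mathrm{in}$. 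In your outline one could instead just establish $\langle\psi,e^{it_nH}\phi_n^\bullet\rangle\to0$ for each piece and sum, which is weaker and suffices for the contradiction; but as written you invoke a norm bound you haven't justified. Second, the slow-$p^\perp$ component is the crux and deserves its own argument, not a parenthetical: with only the sign of $x^\perp\cdot p^\perp$ you cannot push the outgoing Cook integral over $[0,\infty)$ (a slow outgoing packet at distance $\sim v_0t_n$ reaches $S_{r_0}$ after time $\sim v_0t_n/m$, so the integral is not uniformly small in $n$). The paper isolates this slow piece as $\bbR^{2k}\times B_n^c\times B_m$ inside $W_{n,m;\mathrm{sur}}$ and proves a dedicated statement, Lemma \ref{VertClaim}, that it vanishes along the \emph{diagonal} scaling $n\sim vt$ with $m<v/16$, using backward propagation for time $\leq t$ only. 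Your ``backward trajectories drift out to distance of order $t_n$'' is the right picture, but absent this lemma (and a precise choice of $m$ relative to $v_0$) the proposal does not close.
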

The existence result may essentially be found in \cite{hundertmark2000spectral}, though we supply our own proof. See also Chapter 2, Section 10 of \cite{perry1983scattering} for a related existence theorem.\par
\begin{remark}
The above theorem may be easily generalized to allow $V$ satisfying
\begin{align}
	&\|\chi_{S_R^c}V\|_{\textrm{op}}\in L^1_R\label{shortrangeCond} \\
	&\sup_{x\in\bbR^d}|V(x)|=M<\infty
\end{align}
that is, potentials $V$ which decay perpendicular to the surface in a short range way. Broadly speaking, the $L^1$ condition enters in a similar way as in \cite{Enss}. For simplicity of presentation we have restricted to the case where $\chi_{S_{R}}^cV$ is in fact $0$ for $R$ large enough, but we have explained how to adapt our proof to this generalization in Appendix \ref{appendix:Decay}.
\end{remark}
\begin{remark}
The definition of $\calH_\textrm{sur}$ is closely related to the notion of a \emph{minimal velocity estimate} as exhibited in \cite{hunziker1999minimal}. A typical estimate of this type for a state $\psi$ might be of the form
\begin{align*}
    \|\chi_{B_{vt}}e^{-itH}\psi\|\leq Ct^{-\ell}\|\psi\|
\end{align*}
for some $\ell>0$ and all $v$ less than some $v_0$. Such an estimate usually results from a Mourre estimate on some energy interval, in the presence of which one already expects asymptotic completeness (for a self-contained exposition of these ideas, see Chapter 4 of \cite{derezinski1997scattering}). An easy corollary of our Theorem \ref{thms} is that a state $\psi$ is a scattering state if it satisfies a minimal velocity estimate relative to the region $S_{vt}$, i.e., if for some $v>0$
\begin{align*}
    \liminf\limits_{t\rightarrow \infty}\|\chi_{S_{vt}}e^{-itH}\psi\|=0
\end{align*}
Thus, our theorem provides a dynamical criterion for asymptotic completeness, which may be verified via commutator methods. 
\end{remark}

\subsection{Methodology: the Enss Method of scattering}
We rely on the Enss method of scattering originally developed in \cite{Enss}, whose geometric flavor is well-suited to our problem. The Enss method realizes the physical intuition developed above: if $V$ satisfies (\ref{EnssCondition}), a state which moves away from the origin under the $H$ evolution is asymptotically free.
In Enns' original argument, one fixes a state $\psi$ in the absolutely continuous subspace of $H$ and finds a sequence of times $t_n\rightarrow\infty$ for which $\psi_n=e^{-it_nH}\psi$ satisfies
\begin{align*}
    \|\chi_{B_n}\psi_n\|\rightarrow 0
\end{align*}
so that $\psi_n$ is moving away from the origin. This is possible for $V$ a relatively bounded perturbation of $H_0$ with relative bound less than $1$ by the celebrated RAGE theorem \cite{AmreinGeorgescu,Ruelle}, which says that a state $\psi$ in the continuous subspace escapes every compact set $K$ in a time mean sense:
\begin{align*}
    \lim_{T\rightarrow\infty}\frac{1}{T}\int_0^T\|\chi_K\psi_t\|\,dt=0
\end{align*}
Along the sequence $\{t_n\}_{n=1}^\infty$, one then performs a \emph{phase space decomposition} of $\psi_n$ into incoming and outgoing pieces:
\begin{align*}
    \psi_n=\psi_{n,\textrm{in}}+\psi_{n,\textrm{out}}+o(1)
\end{align*}
Both $\psi_{n,\textrm{in}}$ and $\psi_{n,\textrm{out}}$ are spatially localized far from the origin with momenta that point roughly toward or away from the origin respectively. These phase space properties of $\psi_{n,\textrm{in/out}}$ guarantee that
\begin{align*}
&\lim_{n\rightarrow\infty} \|(\Omega^--\id)\psi_{n,\textrm{out}}\|=0\\
&\lim_{n\rightarrow\infty} \|(\Omega^+-\id)\psi_{n,\textrm{in}}\|=0
\end{align*}
from which asymptotic completeness is an easy consequence.\par
In trying to apply the above outline to our setting verbatim, one encounters the problem that one cannot use the RAGE theorem to see that a continuous state moves away from the surface, as the surface is not compact. To proceed, we provide a novel interpretation of Enss original argument that does not rely on any a priori properties of the continuous subspace. Working in the original Enss setting, we fix a state $\psi$ orthogonal to $\Ran(\Omega^-)$ and perform a phase space decomposition along an arbitrary time sequence increasing to infinity, now keeping the piece of $\psi$ close to the origin (in the above, this piece was $o(1)$ by the RAGE theorem):
\begin{align*}
    \psi_n=\psi_{n,\textrm{bounded}}+\psi_{n,\textrm{in}}+\psi_{n,\textrm{out}}+o(1)
\end{align*}
Here, $\psi_n=e^{-it_nH}\psi$ as before and $\psi_{n,\textrm{bounded}}$ is essentially $\chi_{B_n}\psi_n$. One can argue that $\psi_{n,\textrm{in}}$ goes to $0$ as $n\rightarrow\infty$ and the fact that $\psi\perp \Ran(\Omega^{-})$ implies the same for $\psi_{n,\textrm{out}}$. Thus, $\psi_n$ is asymptotically equal to $\psi_{n,\textrm{bounded}}$ and by varying over all time sequences one may show that
\begin{align}\label{BoundEq}
    \lim\limits_{n\rightarrow\infty}\liminf_{t\rightarrow\infty} \|\chi_{B_n}\psi_t\|=\|\psi\|
\end{align}
In other words, Enss' argument provides a geometrical characterization of the orthogonal complement of $\Ran(\Omega^-)$ as the set of bound states. Indeed, it is a consequence of the RAGE theorem that the states satisfying (\ref{BoundEq}) are precisely the pure point states of $H$, but one need not know this to obtain this interesting theorem.\par
Our adaptation of this argument to surface scattering will require that the operators implementing the phase space decomposition have better monotonicity properties than those originally used by Enss. To this end, we adopt Davies' \cite{davies1980enss} point of view on the Enss' method by defining families of phase space observables. This formulation allows us to define the decomposition in a natural way, via operators which are almost projections onto subsets of phase space. Choosing these operators in the correct way allows us to study the evolution in a lower dimensional space, i.e. only in the directions  perpendicular to the surface. For the reader's convenience, we have collected various results about these observables in Appendix \ref{DaviesProperties}. This is particularly important because throughout the proof we will use a phase space characterisation of the surface states. The precise definition of this characterisation will be given in Section \ref{HsurTildeDef}, but for now it can be described as consisting of states that either evolve close to the surface or propagate away from the surface with momenta roughly parallel to the surface.
\begin{remark}
 A natural question that arises from these two characterisations of $\calH_\textrm{sur}$ is: can there truly be surface states that propagate away from the subspace? If so, these states would have to do so at a sublinear rate and with highly restricted momenta. Indeed, following \cite{DaviesSimon}, one may define
\begin{align*}
    \calH_\textrm{sur}'(H)=\{\psi\mid\lim_{R\rightarrow\infty} \sup_{t\geq 0} \|\chi_{S_R^c}e^{-itH}\psi\|=0\}
\end{align*}
which contains all states that evolve close to the subspace. This definition will be convenient to work with in Section \ref{ExamplesSection}. As shown in Proposition \ref{Hsur'}, $\calH_\textrm{sur}'\subset \calH_\textrm{sur}$, so we may reformulate our question as: is there some choice of potential $V$ so that $\calH_{\textrm{sur}}\setminus \calH_\textrm{sur}'(H)$ is non-empty?\par
Indeed, such potentials do exist: following Yafaev \cite{yafaev1979break}, in Section \ref{ExamplesSection} we show that $V$ decaying like a long range potential in the $x^\parallel$ direction may produce such states. However, we will show in Section \ref{ExamplesSection} that at least for $V$ partially periodic or $V$ that decays to a limit at $\infty$ quickly enough, $\calH_\textrm{sur}'= \calH_\textrm{sur}$.
\end{remark}

\addtocontents{toc}{\setcounter{tocdepth}{-10}}
\subsection*{Outline of paper} In Section \ref{Def} we provide some notation as well as define $\tilde{\calH}_{\textrm{sur}}$, the auxiliary surface subspace that will be used in the proof of Theorem \ref{thms} extensively. In Section \ref{Existence}, we prove \ref{existenceTheorem} of Theorem \ref{thms}, in other words the existence of scattering states. In Section \ref{StateDecom}, we develop the Enns decomposition (Theorem \ref{thm:decomposition}) for our setting, stated using the phase space observables of Davies. The decomposition is proved, as in the original Enss paper \cite{Enss}, by combining Cook's method with several applications of non-stationary phase. This decomposition is the main ingredient used to show, in Section \ref{Spans}, that $\tilde{\calH}_{\textrm{sur}}$ and $\Ran(\Omega^\pm)$ span all of $\calH$, as described in the sketch above. In Section \ref{Intersection} we show that the intersection of these two subspaces is trivial, yielding our first completeness result (Lemma \ref{SemiComplete}). For this, we show that the intersection is unitarily equivalent to $\tilde{\calH}_{\textrm{sur}}(H_0)$, the surface states of the free evolution, which we show to be trivial by a direct computation. Then, we again use the method of non-stationary phase to give a better characterization of the surface states, namely to show that $\tilde{\calH}_{\textrm{sur}}$ is in fact equal to $\calH_{\textrm{sur}}$. In Section \ref{ExamplesSection} we consider some special classes of potentials and discuss their surface states, relating them to known results where relevant. Finally, in Appendix \ref{appendix:Decay}, we explain how to accommodate short range decay of the potential away from the surface.
\subsection*{Acknowledgment} We are grateful to our advisor, Wilhelm Schlag, for leading us towards this problem, and for his guidance and encouragement during this work. We also thank Michael Weinstein and Amir Sagiv for discussions that improved the definition of $\calH_\textrm{sur}$.
\addtocontents{toc}{\setcounter{tocdepth}{2}}

% \clearpage
\section{Definitions and Results}\label{Def}
\subsection{Notation and Conventions}
For any $\ell>0$ we use the following
\begin{itemize}
    \item We let $\calH$ denote $L^2(\bbR^d)$ with norm $\|\cdot\|$ and use the convention that its inner product $\braket{\cdot,\cdot}$ is anti-linear in the first argument and linear in the second.
    \item The symbols $\|\cdot\|$ and $\braket{\cdot,\cdot}$ will also be used for the norm and inner product on $\bbR^\ell$.
    \item $d(\cdot,\cdot)$ is used for the distance between points or subsets of $\bbR^\ell$.
    \item $B_r$ will mean the ball of radius $r$ centered at the origin in either $\bbR^\ell$ or $\calH$ depending on context.
    \item For $A\subset \bbR^\ell$, $A^c$ denotes its complement.
    \item $\chi_A$ will mean the indicator function of $A\subset \bbR^\ell$.
    \item $A\Subset B$ denotes that $A$ is compactly contained in $B$.
    \item $\calS=\calS(\bbR^d)$, the Schwartz space. 
    \item We use the following convention for the Fourier transform of $f\in\calH$:
    \begin{align*}
        &\hat{f}(\xi)=\calF(f)(\xi)=(2\pi)^{-\frac{d}{2}}\int\limits_{\bbR^d}f(x)e^{-ix\xi} \,dx\\
        &\calF^{-1}(\hat{f})(x)=(2\pi)^{-\frac{d}{2}}\int\limits_{\bbR^d}\hat{f}(\xi)e^{ix\xi} \,d\xi
    \end{align*}
    \item For $x=(x_1,\dots,x_d)\in \bbR^d=\bbR^k\times \bbR^{d-k}$ we will often write $x^\parallel= (x_1,\dots,x_k)$ and $x^\perp=(x_{k+1},\dots,x_d) \in \bbR^{d-k}$ for $k$ some integer $1\leq k \leq d-1$. We will refer to the $\bbR^k$ components as longitudinal and the $\bbR^{d-k}$ components as transverse.
    \item $S_R\subset \bbR^d$ is the set of points within $R$ of $\bbR^k\times \{0\}$:
    \begin{align*}
        S_R=\{x\in \bbR^d\mid \|x^\perp\|\leq R\}
    \end{align*}
    \item For $\alpha>0$ define the following family of subspaces of $\calH$
    \begin{align*}
        \calD_{\alpha}=\textrm{Span}(\{\psi_i^\parallel\otimes \psi_i^\perp \mid \psi^\parallel_i\in L^2(\bbR^k), \psi ^\perp_i \in \calS(\bbR^{d-k}), \supp \widehat{\psi^\perp_i} \Subset B_\alpha^c\})
    \end{align*}
    \item For the definitions of $P_{\delta}(E)$ and $\hat{\eta}_{x,p;\delta}$ see Section \ref{HsurTildeDef} below. 
\end{itemize}
\subsection{Definition of the auxiliary surface subspace}\label{HsurTildeDef}
As mentioned above, for the proof of part \ref{Complete} of Theorem \ref{thms}, asymptotic completeness, it will be more convenient to work with a different subspace, denoted $\tilde{\calH}_{\textrm{sur}}$. We will show in Section \ref{SurSpaceChar} that it is in fact equal to $\calH_{\textrm{sur}}$.
The definition of this subspace and the arguments that follow depend crucially on the ability to localize a state into a subset of phase space. For this, we will follow the formulation of phase space observables developed in \cite{davies1976quantum}. 
To this end, choose $\eta\in\calS(\bbR^d)$, such that $\|\eta\|=1$ and $\supp \hat{\eta}\subset B_1$. Let $\eta_\delta$ be such that $\hat{\eta}_\delta(p)=\delta^{-\frac{d}{2}}\hat{\eta}(\frac{p}{\delta})$, a rescaling of $\eta$, so that $\supp \hat{\eta}^d_\delta\subset B_\delta$ and $\|\eta_{\delta}\|=1$.\par
Now define the following family of coherent states by translating $\eta_\delta$ in phase space:
\begin{align*}
    &\hat{\eta}_{x,p;\delta}(\xi)=e^{-ix\xi}\hat{\eta}_\delta(\xi-p)
\end{align*}
or equivalently
\begin{align*}
    &\eta_{x,p;\delta}(y)=e^{ip(y-x)}\eta_\delta(y-x)
\end{align*}
We use this to define a family, depending on $\delta>0$, of positive-operator-valued measures as in \cite{davies1980enss}, which serve as phase space observables. For any $E\subset \bbR^{2d}$ Borel and $\psi \in \calH$ let
\begin{align*}
    P_\delta(E)\psi=(2\pi)^{-d}\iint\limits_E \braket{\eta_{x,p;\delta},\psi} \eta_{x,p;\delta} \,dx\,dp
\end{align*}
which is a weakly convergent integral. These operators are closely related to the Fourier-Bargmann transform 
$\mathscr{F}_{\eta_\delta}:L^2(\bbR^d)\rightarrow L^2(\bbR^{2d})$ defined, for instance, in \cite{combescure2012coherent} Section 1.3.3. In our notation, $\mathscr{F}_{\eta_\delta}$ may written as
\begin{align*}
    (\mathscr{F}_{\eta_\delta} \psi)(x,p)= (2\pi)^{-\frac{d}{2}}\braket{\eta_{x,p;\delta},\psi}
\end{align*}
Using this, we can write $P_\delta(E)$ as
\begin{align*}
    P_\delta(E)\psi= (2\pi)^{-\frac{d}{2}}\iint\limits_E (\mathscr{F}_{\eta_\delta} \psi)(x,p) \eta_{x,p;\delta} \,dx\,dp=\mathscr{F}_{\eta_\delta}^{*}\chi_E\mathscr{F}_{\eta_\delta}\psi
\end{align*}
where $\mathscr{F}_{\eta_\delta}^{*}$ is the adjoint of $\mathscr{F}_{\eta_\delta}$. Note that $P_\delta(E)$ is self-adjoint and non-negative by construction. See \cite{davies1976quantum} for more details about the basic properties of these positive-operator-valued measures.\par
In this paper, we will choose $\eta$ that factors into functions of $x^\parallel$ and $x^\perp$:
\begin{align*}
    \eta=\eta^\parallel\otimes \eta^\perp
\end{align*}
where $\eta^\parallel \in \calS(\bbR^k)$ and $ \eta^\perp \in \calS(\bbR^{d-k})$. From now on, we will label the coordinates of  $\bbR^{2d}$ as $(x^\parallel,p^\parallel,x^\perp,p^\perp)$ where $(x^\parallel,p^\parallel)\in \bbR^k\times \bbR^k$ and  $(x^\perp,p^\perp)\in  \bbR^{d-k}\times \bbR^{d-k}$.  For $E^\parallel\subset \bbR^{2k}, E^\perp\subset \bbR^{2(d-k)}$, we can write
\begin{align*}
    P_\delta(E^\parallel\times E^\perp)=P^\parallel_\delta(E^\parallel)\otimes P^\perp_\delta(E^\perp)
\end{align*}
(see Proposition \ref{TensorClaim}).\par For $n>0$ and $m>0$, we define the far set in phase space to have space coordinates in $S_n^c$ (that is, $x^\perp \in B_n^c$) and momentum in $S_m^c$ (that is, $p^\perp \in B_m^c$), as well as its complement, the surface set:
\begin{align*}
    &W_{n,m;\textrm{far}}=  \bbR^{2k} \times (B_n^c\times B_m^c)\\
    &W_{n,m;\textrm{sur}}=(W_{n,m;\textrm{far}})^c=  \bbR^{2k} \times (B_n \times \bbR^{d-k}) \sqcup \bbR^{2k} \times (B_n^c \times B_m)
\end{align*}
In words, $W_{n,m;\textrm{far}}$ consists of states that have transverse position and transverse momentum bounded away from $0$ and $W_{n,m;\textrm{sur}}$ is its complement.
Here and elsewhere, the dimension of $B_n$ is understood from context.\par
Let $\calN_H^{m}: \calH\rightarrow \bbR^+$ denote the family of continuous seminorms
\begin{align*}
    \calN_H^{m}(\psi)=\limsup\limits_{\delta \rightarrow 0 }\limsup\limits_{n\rightarrow\infty} \sup\limits_{t\geq 0 }\|P_\delta(W_{n,m;\textrm{far}})e^{-itH} \psi\|
\end{align*}
This allows us to define the set of surface states as
\begin{align*}
    \tilde{\calH}_{\textrm{sur}}(H)=\bigcap_{m>0}\{\psi\in\calH\mid
     \calN_H^{m}(\psi) =0\}
\end{align*}
which is manifestly a closed subspace.
The expression $\tilde{\calH}_{\textrm{sur}}$ without an operator will be used throughout to denote $\tilde{\calH}_{\textrm{sur}}(H)$.

\section{Existence of the Wave Operators}\label{Existence}
To begin, we use the following direct application of the Corollary to Theorem XI.14 from \cite{RSVol3}:
\begin{lemma}\label{simpleNonstationary}
	Let $u$ be a Schwartz function such that $ \hat{u} $ has compact support. Let $ \mathcal{G}$ be an open set containing the compact set $\{2\xi \mid \xi\in\supp \hat{u}\}$. Then for any $\ell\in \bbN $, there is a constant $ C>0$ depending on $ \ell,u,$ and $ \mathcal{G} $ so that
	\begin{align*}
		|e^{-itH_0}u(x)|\leq C(1+\|x\|+|t|)^{-\ell}
	\end{align*}
	for all pairs $ (x,t) $ such that $ \frac{x}{t}\not\in \mathcal{G} $.
\end{lemma}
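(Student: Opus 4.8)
The plan is to realize $e^{-itH_0}u$ as an oscillatory integral and run a standard non-stationary phase argument; this is exactly the content of the corollary to Theorem XI.14 in \cite{RSVol3} cited above, but it is short enough to sketch directly. With $H_0=-\Delta$ and the Fourier conventions fixed above, for $t\neq0$ write
\[
  e^{-itH_0}u(x)=(2\pi)^{-d/2}\int_{\bbR^d}\hat u(\xi)\,e^{i\phi_{x,t}(\xi)}\,d\xi,\qquad \phi_{x,t}(\xi)=x\cdot\xi-t|\xi|^2,
\]
where the integrand is supported on $\supp\hat u$, which is moreover smooth (the support of a Schwartz function), so $\hat u\in C^\infty_c$. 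The only critical point of $\phi_{x,t}$ is $\xi=x/(2t)$, i.e.\ the point where $2\xi=x/t$, so the hypothesis $x/t\notin\mathcal G$ together with $\{2\xi\mid\xi\in\supp\hat u\}\subset\mathcal G$ says precisely that $\phi_{x,t}$ has no stationary point on $\supp\hat u$. The case $t=0$ is trivial since $e^{-itH_0}u=u$ is Schwartz; if $\mathcal G=\bbR^d$ there are no admissible pairs with $t\neq0$ and the statement is vacuous, and if $u=0$ it is trivial, so we may assume otherwise.

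The first key step is a quantitative lower bound on $|\nabla_\xi\phi_{x,t}(\xi)|=|x-2t\xi|$ for $\xi\in\supp\hat u$. Set $K=\{2\xi\mid\xi\in\supp\hat u\}$, let $c_0=d(K,\mathcal G^c)>0$ (positive since $K$ is compact and $\mathcal G^c$ is a disjoint closed set), and let $R=\sup\{\|\xi\|\mid\xi\in\supp\hat u\}$, which is positive because $\hat u$ is a nonzero smooth function. For $\xi\in\supp\hat u$, $t\neq0$, and $x/t\notin\mathcal G$ one has $|x-2t\xi|=|t|\,|x/t-2\xi|\geq c_0|t|$ (as $2\xi\in K$, $x/t\in\mathcal G^c$) and also $|x-2t\xi|\geq\|x\|-2R|t|$. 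Splitting into the cases $\|x\|\leq 4R|t|$ and $\|x\|>4R|t|$ and combining these inequalities produces a constant $c=c(c_0,R)>0$ with
\[
  |x-2t\xi|\geq c\,(\|x\|+|t|)\qquad\text{for all }\xi\in\supp\hat u,\ x/t\notin\mathcal G.
\]

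The second step is iterated integration by parts. Put $a(\xi)=-i(x-2t\xi)/|x-2t\xi|^2$ and $L=a(\xi)\cdot\nabla_\xi$, so that $Le^{i\phi_{x,t}}=e^{i\phi_{x,t}}$; applying $L$ a total of $\ell$ times and transposing, the integrand is replaced by a finite sum (of $\lesssim_\ell1$ terms) each a product of $\ell$ factors $\partial_\xi^{\gamma}a$ times one derivative $\partial_\xi^{\beta}\hat u$ with $|\beta|\leq\ell$. Writing $y=x-2t\xi$ and noting $\partial_{\xi_j}=-2t\,\partial_{y_j}$, each $\partial_\xi^{\gamma}a$ equals $(-2t)^{|\gamma|}$ times a function homogeneous of degree $-1-|\gamma|$ in $y$, so on $\supp\hat u$,
\[
  |\partial_\xi^{\gamma}a(\xi)|\leq C_\gamma\,|t|^{|\gamma|}\,|x-2t\xi|^{-1-|\gamma|}\leq C_\gamma'\,(\|x\|+|t|)^{-1},
\]
using the lower bound above and the crude inequality $|t|\leq\|x\|+|t|$. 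Since $\hat u\in C^\infty_c$, all the $\partial^{\beta}\hat u$ are bounded, so the iterated integrand is $\lesssim_\ell(\|x\|+|t|)^{-\ell}$ on $\supp\hat u$ and zero elsewhere; integrating over $\supp\hat u$ (a set of finite measure) yields $|e^{-itH_0}u(x)|\leq C_\ell(\|x\|+|t|)^{-\ell}$ whenever $x/t\notin\mathcal G$. Finally, to upgrade $(\|x\|+|t|)^{-\ell}$ to $(1+\|x\|+|t|)^{-\ell}$ near the origin, combine this with the trivial uniform bound $|e^{-itH_0}u(x)|\leq(2\pi)^{-d/2}\|\hat u\|_{L^1}$, which handles the region $\|x\|+|t|\leq1$ (and also the case $t=0$).

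I expect the only genuine work to be the bookkeeping in the iterated integration by parts, and in particular checking that the powers of $t$ generated by differentiating the quadratic phase do not destroy the decay: the point is that each extra $\xi$-derivative of $a$ costs one power of $|t|$ in the numerator but gains one extra power of $|x-2t\xi|$ in the denominator, and the net effect is neutralized by $|t|\leq\|x\|+|t|$ together with the lower bound $|x-2t\xi|\gtrsim\|x\|+|t|$. Everything else — the lower bound's two-case argument and the passage to $(1+\|x\|+|t|)^{-\ell}$ — is routine.
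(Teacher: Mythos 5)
Your argument is correct. Note, however, that the paper does not prove this lemma at all: it simply cites it as a ``direct application of the Corollary to Theorem XI.14'' in Reed--Simon, Vol.~3, and that corollary is itself proved by exactly the non-stationary phase argument you sketch. So you have unpacked the black box rather than taken a different route. Your three steps are all sound: the oscillatory-integral representation matches the paper's Fourier conventions (so $\phi_{x,t}(\xi)=x\cdot\xi - t|\xi|^2$ and the stationary point is $2\xi = x/t$); the two-case argument establishing $|x-2t\xi|\geq c(\|x\|+|t|)$ on $\supp\hat u$ for $x/t\notin\mathcal G$ is the crux, and your use of $c_0 = d(K,\mathcal G^c)$ together with $R=\sup\{\|\xi\|:\xi\in\supp\hat u\}$ is exactly what makes it uniform; and the integration-by-parts bookkeeping is right, since each application of the transpose $L^T = -\nabla_\xi\cdot(a\,\cdot\,)$ introduces one factor of $a$ and one total derivative, so after $\ell$ steps every term is a product of $\ell$ factors $\partial^{\gamma_i}a_{j_i}$ and one $\partial^\beta\hat u$, each $a$-factor contributing $(\|x\|+|t|)^{-1}$ via $|t|^{|\gamma|}|x-2t\xi|^{-1-|\gamma|}\leq c^{-1-|\gamma|}(\|x\|+|t|)^{-1}$. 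The passage from $(\|x\|+|t|)^{-\ell}$ to $(1+\|x\|+|t|)^{-\ell}$ via the trivial $L^1$-bound, and the treatment of $t=0$ through the Schwartz decay of $u$ itself, complete the proof cleanly.
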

This is already enough to prove the existence of the wave operators:
\begin{proof}[Proof of part \ref{existenceTheorem} of Theorem \ref{thms}]
 By Cook's method (see \cite{RSVol3} 
 Theorem XI.4), it suffices to show that for $ \mathcal{D} $ a dense set in $\calH $
\begin{align*}
	\forall \psi \in \mathcal{D}, \int\limits_0^\infty \|Ve^{-itH_0}\psi\|\,dt<\infty
\end{align*}
To this end, for $\alpha>0$ define
\begin{align*}
    \calD_{\alpha}=\textrm{Span}(\{\psi_i^\parallel\otimes \psi_i^\perp \mid \psi^\parallel_i\in L^2(\bbR^k), \psi ^\perp_i \in \calS(\bbR^{d-k}), \supp \widehat{\psi^\perp_i} \Subset B_\alpha^c\})
\end{align*}
Here, $\textrm{Span}$ means \emph{finite} linear combinations so that $\bigcup_{\alpha>0}\calD_\alpha$ is dense in $L^2(\bbR^d)$.\par
By linearity, it suffices to show the existence of $\Omega^\pm$ for simple tensors in $\calD_\alpha$:
\begin{align*}
    \psi=\psi^\parallel\otimes \psi^\perp
\end{align*}
 By factoring $\chi_{S_{r_0}}=\Id\otimes B_{r_0}$ and $e^{-itH_0}=e^{-itH_0^\parallel}\otimes e^{-itH_0^\perp}$, we may write
\begin{align}\label{ext:PotDecay}
\begin{split}
    \|Ve^{-itH_0}\psi\|&=\|V\chi_{S_{r_0}}e^{-itH_0}\psi\|\leq M\|\chi_{S_{r_0}}e^{-itH_0}\psi\|=M\|e^{-itH_0^\parallel}\psi^\parallel\|\|\chi_{B_{r_0}}e^{-itH_0^\perp}\psi^\perp\|\\
    &=M\|\psi^\parallel\|\|\chi_{B_{r_0}}e^{-itH_0^\perp}\psi^\perp\|
\end{split}
\end{align}
We now estimate this last expression via Lemma \ref{simpleNonstationary}. For this, note that we have
\begin{align*}
	\{2\xi\mid \xi\in\supp\hat{\psi}_i^\perp\} \Subset  B_{2\alpha}^c
\end{align*}
Thus, if $t>\frac{r_0}{2\alpha}$ and $x\in B_{r_0}$ we have that
\begin{align*}
    \|\frac{x}{t}\|<\frac{r_0}{t}<2\alpha
\end{align*}
 Therefore, we may apply Lemma \ref{simpleNonstationary}, to see that for any $\ell>0$
\begin{align*}
	|e^{-itH_0^\perp}\psi_i^\perp(x)|\leq C(1+\|x\|+|t|)^{-\ell}
\end{align*}
for all $x\in B_{r_0}$ and $t>\frac{r_0}{2\alpha}$ where $C$ is independent of $x$ and $t$. 
Choosing $\ell$ large enough, we get that for all $t>\frac{r_0}{2\alpha}$
\begin{align} \label{ExistEq}
\begin{split}
	\|Ve^{-itH_0}\psi\|^2&\leq C  \int\limits_{B_{r_0}}(1+\|x\|+t)^{-\ell}\,dx\leq C (1+t)^{-\ell+d}
	\end{split} 
\end{align}
where $C$ denotes a constant which may change from line to line but is always independent of $x$ and $t$.
It follows immediately that
\begin{align*}
    \int\limits_0^\infty \|Ve^{-itH_0}\psi\|\,dt<\infty
\end{align*}
so that by Cook's method $\Omega^-\psi$ exists. Since $\bigcup\limits_{\alpha>0}\calD_{\alpha}$ is dense in $\calH$, we conclude that $\Omega^-\psi$ exists for all $\psi \in \calH$ and the claim for $\Omega^+$ follows from a similar argument.\par
The inclusion $\sigma(H_0)\subset \sigma_{\textrm{ac}}(H)$ is a result of the intertwining property of $\Omega^{\pm}$: 
$ \Omega^\pm $ defines a unitary equivalence between $H_0$ and $H\vert_{\Omega^\pm(\calH)}$ and $\sigma(H_0)$ is purely absolutely continuous.
\end{proof}
\section{Enss Decomposition}\label{StateDecom}
We fix $m>0$  in order to prove the following decomposition lemma. Since $m$ is  fixed in this lemma and its proof, we will often suppress it in the notation. However, it should be noted that the decomposition does depend on $m$.
\begin{theorem}\label{thm:decomposition}
Let $\{\varphi\}_{n=0}^\infty\subset \calH$ be a sequence of unit vectors. Then for any $m>0$, there exists some $\delta_0=\delta_0(m)$, so that for all $\delta\in (0,\delta_0)$ we may write
\begin{align*}
    \varphi_n=\varphi_{n;\mathrm{out}}+\varphi_{n;\mathrm{in}}+\varphi_{n;\mathrm{sur}}
\end{align*}
where these summands satisfy
\begin{align}
    &\lim\limits_{n\rightarrow \infty}\|(\Omega^--\id)\varphi_{n;\mathrm{out}}\|=\lim\limits_{n\rightarrow \infty}\|(\Omega^+-\id)\varphi_{n;\mathrm{in}}\|=0\tag{a}\label{eq:a}\\
    &\varphi_{n;\mathrm{sur}}=P_{\delta }(W_{ n,m;\mathrm{sur}})\varphi_n, P_{\delta }(W_{ n,m;\mathrm{far}})\varphi_n= \varphi_{n;\mathrm{out}}+\varphi_{n;\mathrm{in}}\tag{b}\label{eq:b}
\end{align}
If additionally $\varphi_n=e^{-it_nH}\varphi$ for some sequence of positive times $\{t_n\}_{n=0}^\infty$ then
\begin{align*}
     &\lim\limits_{n\rightarrow\infty}\|\varphi_{n;\mathrm{in}}\|=0\tag{c}\label{eq:c}\\
\end{align*}
\end{theorem}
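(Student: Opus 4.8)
The plan is to construct the three pieces by cutting phase space along the sets $W_{n,m;\mathrm{sur}}$, $W_{n,m;\mathrm{far}}$, and then splitting the ``far'' part into incoming and outgoing portions according to the sign of $\langle x^\perp, p^\perp\rangle$. Concretely, fix a smooth partition of unity on $\bbR^{d-k}\times\bbR^{d-k}\setminus\{0\}$ subordinate to the two half-cones $\{x^\perp\cdot p^\perp>0\}$ and $\{x^\perp\cdot p^\perp<0\}$ (say functions $F_{\mathrm{out}}, F_{\mathrm{in}}$, homogeneous of degree $0$ off a neighborhood of the origin), set $\varphi_{n;\mathrm{sur}}:=P_\delta(W_{n,m;\mathrm{sur}})\varphi_n$, and then write $P_\delta(W_{n,m;\mathrm{far}})\varphi_n = \varphi_{n;\mathrm{out}}+\varphi_{n;\mathrm{in}}$ using $F_{\mathrm{out}}, F_{\mathrm{in}}$ composed with the Fourier--Bargmann transform $\mathscr F_{\eta_\delta}$ restricted to the far region. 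This immediately gives (b), since the sur and far POVM pieces partition $\calH$. Statement (a) then follows, as in Enss's original argument, from Cook's method: one shows $\int_0^\infty\|V e^{-itH_0}\varphi_{n;\mathrm{out}}\|\,dt$ is small uniformly in $n$ for $\delta$ small, because $\varphi_{n;\mathrm{out}}$ is microlocalized with $\|x^\perp\|\ge n$, $\|p^\perp\|\ge m$, and outgoing, so under $e^{-itH_0}$ the transverse position only increases and non-stationary phase (Lemma~\ref{simpleNonstationary}, applied in the transverse variables as in \eqref{ext:PotDecay}--\eqref{ExistEq}) forces $\|\chi_{S_{r_0}}e^{-itH_0}\varphi_{n;\mathrm{out}}\|$ to decay integrably; the incoming piece is handled symmetrically for $\Omega^+$ by running time backwards. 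The constants here are where the choice $\delta_0=\delta_0(m)$ enters: one needs $\delta$ small enough relative to $m$ that the momentum support $B_\delta$-fattening of the far region stays inside $\{\|p^\perp\|\ge m/2\}$, so that the coherent states making up $\varphi_{n;\mathrm{out}}$ genuinely have transverse momentum bounded below.

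The real content is (c), the claim that when $\varphi_n=e^{-it_nH}\varphi$ along \emph{positive} times $t_n\to\infty$, the incoming part vanishes in the limit. The idea is a time-direction/orthogonality argument rather than a decay estimate: since $\varphi_{n;\mathrm{in}}$ is microlocalized in the far, incoming region, it is (up to $o(1)$, by the already-proved part~(a)) a free incoming wave, so $e^{-isH_0}\varphi_{n;\mathrm{in}}$ concentrates \emph{closer} to the surface as $s$ increases from $0$ and spreads out for $s<0$; equivalently, running the $H$-flow slightly backward in time from $\varphi_n$ should recover something that looked, at the earlier time, like a genuinely incoming state, which is incompatible with $\varphi_n$ itself being $e^{-it_nH}\varphi$ for the \emph{same fixed} $\varphi$ as $t_n\to\infty$. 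The cleanest way to formalize this is to test against $\Omega^-$: write $\langle \varphi_{n;\mathrm{in}}, \varphi_{n;\mathrm{in}}\rangle$ and use that $\|\varphi_{n;\mathrm{in}}\|^2 = \langle\varphi_{n;\mathrm{in}}, e^{it_nH}\varphi\rangle$; insert $\Omega^-$ where possible via part~(a) — $\varphi_{n;\mathrm{in}} = \Omega^+\varphi_{n;\mathrm{in}} + o(1)$, but more usefully one shows $(\Omega^-)^* e^{-it_nH_0}$-type quantities against the incoming piece are small. I expect the actual mechanism to be: for the \emph{incoming} piece, $\|e^{-it_nH}\varphi - e^{-it_nH_0}(\text{something})\|$ cannot be made small because an incoming free wave at a large positive time $t_n$ corresponds to a free wave that was spread far from the surface at time $0$, yet $\varphi$ is a single fixed state; quantitatively, $\langle \varphi, e^{it_nH}\varphi_{n;\mathrm{in}}\rangle$ — using part~(a) to trade $e^{it_nH}$ for $e^{it_nH_0}(\Omega^+)^*$ up to $o(1)$ — becomes $\langle\varphi, e^{it_nH_0}\chi_n\rangle + o(1)$ for a state $\chi_n$ supported far out and incoming, and $e^{it_nH_0}\chi_n$ (propagating a far-out incoming state \emph{backward}, i.e. toward the surface then past it) has vanishing overlap with the fixed $\varphi$ as $n\to\infty$ by non-stationary phase / a RAGE-free dominated-convergence argument in the Fourier--Bargmann picture.

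The key steps, in order, are: (1) fix the cone partition of unity $F_{\mathrm{in}}, F_{\mathrm{out}}$ in the transverse phase variables and define the three summands via $P_\delta$ and $\mathscr F_{\eta_\delta}$; verify (b) is automatic. (2) Choose $\delta_0(m)$ and prove the Cook-method bound for $\varphi_{n;\mathrm{out}}$ and $\varphi_{n;\mathrm{in}}$, using the tensor factorization $P_\delta = P_\delta^\parallel\otimes P_\delta^\perp$ (Proposition~\ref{TensorClaim}) to reduce to the transverse factor and then Lemma~\ref{simpleNonstationary}, obtaining (a); here the monotonicity of $\|x^\perp\|$ under the free transverse flow on the outgoing region is what makes the estimate uniform in $n$. (3) Prove (c) by the orthogonality/time-reversal argument above: use (a) to replace $e^{-it_nH}$-propagated incoming pieces by free incoming waves, push these back under $e^{-it_nH_0}$ and show the overlap with the fixed $\varphi$ tends to $0$ by non-stationary phase in the transverse variables. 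I expect step~(3) to be the main obstacle, because one must avoid any appeal to the RAGE theorem (the whole point of the paper's method) and instead extract the vanishing of $\varphi_{n;\mathrm{in}}$ purely from the hypothesis that $\varphi_n$ comes from a \emph{fixed} state under the \emph{actual} dynamics along positive times — the subtlety being to make the ``an incoming wave at time $t_n$ was far from the surface at time $0$'' heuristic into a clean estimate that survives the $o(1)$ errors from part~(a) and the $\delta\to 0$, $n\to\infty$ order of limits.
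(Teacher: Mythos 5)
Your construction of the three pieces and your proof of (a) track the paper: same sharp/smooth cut of $W_{n,m;\mathrm{far}}$ by the sign of $\langle x^\perp,p^\perp\rangle$, same reduction to the transverse factor via $P_\delta=P_\delta^\parallel\otimes P_\delta^\perp$, same Cook-plus-non-stationary-phase estimate. One small correction: you cite Lemma \ref{simpleNonstationary}, but the function being propagated here, $\eta^\perp_{x,p;\delta}$, varies over an unbounded family, so the paper must use Lemma \ref{NonStationaryRS} (with its weighted norm factor) together with the geometric Lemma \ref{geoclaim}; that is also where the requirements $n\geq 8r_0$ and $\delta<\delta_0(m)$ come from.

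For (c), your physical intuition is exactly right (an incoming far wave at time $t_n>0$, run backward freely, was far from the surface at time $0$ and so cannot overlap a fixed $\varphi$), but the formalization as written does not parse: you invoke $\Omega^-$ where $\Omega^+$ is the relevant wave operator for the incoming piece; the identity $\|\varphi_{n;\mathrm{in}}\|^2=\langle\varphi_{n;\mathrm{in}},e^{it_nH}\varphi\rangle$ is false (the correct statement is the inequality $\|\varphi_{n;\mathrm{in}}\|^2\leq\langle\varphi_{n;\mathrm{in}},e^{-it_nH}\varphi\rangle$, which uses $P_\delta^2(E)\leq P_\delta(E)$, i.e. Proposition \ref{BoundedByIndicator}); and ``trade $e^{it_nH}$ for $e^{it_nH_0}(\Omega^+)^*$'' is not what the intertwining property gives — what it gives, together with part (a), is $e^{it_nH}\varphi_{n;\mathrm{in}}=\Omega^+e^{it_nH_0}\varphi_{n;\mathrm{in}}+o(1)$.

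The paper's mechanism for (c) is simpler and uses no wave operators at all. It is the triangle inequality
\begin{align*}
\|\varphi_{n;\mathrm{in}}\|\leq\|P_\delta(W_{n;\mathrm{in}})(e^{-it_nH}-e^{-it_nH_0})\varphi\|+\|P_\delta(W_{n;\mathrm{in}})e^{-it_nH_0}\varphi\|,
\end{align*}
where the first term is an operator-norm Cook bound $\int_0^{t_n}\|Ve^{i\tau H_0}P_\delta(W_{n;\mathrm{in}})\|_{\mathrm{op}}\,d\tau$ — here $t_n\geq0$ is used because $e^{i\tau H_0}$ with $\tau\geq0$ on the incoming set plays the role of $e^{-i\tau H_0}$ on the outgoing set under $(x,p)\mapsto(x,-p)$, so (\ref{MainCookInEta}) applies verbatim — and the second is shown to vanish on the dense set of compactly supported $\varphi$ via $\|\chi_{S_R}e^{it_nH_0}P_\delta(W_{n;\mathrm{in}})\|_{\mathrm{op}}\to0$. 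Your route, once corrected, would reduce the problem via part (a) and intertwining to $\langle\varphi_n,P_\delta(W_{n;\mathrm{in}})e^{-it_nH_0}(\Omega^+)^*\varphi\rangle$, which is controlled by exactly that second estimate applied to the fixed state $(\Omega^+)^*\varphi$; this trades the operator-norm Cook bound for the intertwining property but rests on the same non-stationary-phase input. The paper's version is more self-contained, since it does not re-enter the already-proved part (a).
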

\begin{proof}[Proof of Theorem \ref{thm:decomposition}]
We now define subsets of $\bbR^{2d}$ that decompose $W_{n,m;\textrm{far}}$ into subsets of phase space with momenta pointing towards and away from $\supp V$. For a point $(x,p)$ in phase space, this means that its transverse position and transverse momenta are either aligned or unaligned respectively:
\begin{align*}
    &W_{n,m;\textrm{out}}=\{ (x^\parallel,p^\parallel,x^\perp,p^\perp)\in W_{n,m;\textrm{far}} \mid \braket{x^\perp,p^\perp}\geq 0\}\\
    &W_{n,m;\textrm{in}}=\{(x^\parallel,p^\parallel,x^\perp,p^\perp)\in W_{n,m;\textrm{far}}\mid \braket{x^\perp,p^\perp}< 0\}
\end{align*}
so that naturally
\begin{align*}
    W_{n,m;\mathrm{far}}=W_{n,m;\mathrm{out}}\sqcup W_{n,m;\mathrm{in}}
\end{align*}
and let
\begin{align*}
    &\varphi_{n;\textrm{out}}=P_{\delta }(W_{n,m;\textrm{out}})\varphi_n& &\varphi_{n;\textrm{in}}=P_{\delta }(W_{n,m;\textrm{in}})\varphi_n\\
    &\varphi_{n;\textrm{sur}}=P_{\delta }(W_{n,m;\textrm{sur}})\varphi_n
\end{align*}
so that (\ref{eq:b}) holds.
\par
It will be convenient to label the  projections of $W_{n,m;\textrm{in/out}}$ to the transverse coordinates as $W_{n;\textrm{in/out}}^\perp\subset \bbR^{2(d-k)}$ so that $W_{n,m;\textrm{out/in}}=\bbR^{2k}\times W_{n,m;\textrm{out/in}}^\perp$.
  \begin{lemma}\label{omegalem}
 \begin{align}
 &\|(\Omega^--\id)\varphi_{n;\mathrm{out}}\|\xrightarrow{n\rightarrow \infty }0\label{Omega-}\\
 &\|(\Omega^+-\id)\varphi_{n;\mathrm{in}}\|\xrightarrow{n\rightarrow \infty }0\label{Omega+}
 \end{align}
 \end{lemma}
\begin{proof}
We may write
\begin{align}
\begin{split}\label{eq:1}
    \|(\Omega^--\id)\varphi_{n;\textrm{out}}\|&\leq \|\int\limits_0^\infty e^{itH}(H-H_0)e^{-itH_0}\,dt\,\varphi_{n;\textrm{out}}\|\\
    &\leq \int\limits_0^\infty \|Ve^{-itH_0} \varphi_{n;\textrm{out}}\|\,dt\leq M\int\limits_0^\infty \|\chi_{S_{r_0}}e^{-itH_0} \varphi_{n;\textrm{out}}\|\,dt\\
    &\leq  M\int\limits_0^\infty \|\chi_{S_{r_0}}e^{-itH_0} P_{\delta }(W_{n;\textrm{out}})\|_{\textrm{op}}\|\varphi_{n}\|\, dt
\end{split}
\end{align}
since $V$ is supported on $S_{r_0}$ and $M=\|V\|_{\textrm{op}}$. Since we have that $W_{n;\textrm{out}}=\bbR^{2k}\times W_{n;\textrm{out}}^\perp $, we may write $P_\delta(W_{n;\textrm{out}})=\Id\otimes P^\perp_\delta(W_{n;\textrm{out}}^\perp)$. By factoring $\chi_{S_{r_0}}=\Id\otimes B_{r_0}$ and $e^{-itH_0}=e^{-itH_0^\parallel}\otimes e^{-itH_0^\perp}$, we may write
\begin{align*}
    \|\chi_{S_{r_0}}e^{-itH_0} P_{\delta }(W_{n;\textrm{out}})\|_{\textrm{op}}=\|e^{-itH_0^\parallel}\otimes (\chi_{B_{r_0}}e^{-itH_0^\perp}P^\perp_\delta(W_{n;\textrm{out}}^\perp))\|_\textrm{op}=\|\chi_{B_{r_0}}e^{-itH_0^\perp}P^\perp_\delta(W_{n;\textrm{out}}^\perp)\|_\textrm{op}
\end{align*}
because $\|A\otimes B\|_\textrm{op}=\|A\|_\textrm{op}\|B\|_\textrm{op}$ (see \cite{RSVol1}, page 299) and $\|e^{-itH_0^\parallel}\|_\textrm{op}=1 $.
\par
Thus, to proceed we want to show that
\begin{align}\label{MainCookEst}
    \lim_{n\rightarrow\infty}\int\limits_0^\infty\|\chi_{B_{r_0}}e^{-itH_0^\perp} P_{\delta }^\perp(W_{n;\textrm{out}}^\perp)\|_{\textrm{op}}\,dt =0
\end{align}
 from which (\ref{Omega-}) follows in light of (\ref{eq:1}). In what follows, the symbol $C$ refers to such a constant, the exact value of which may change from line to line.\par
From Proposition \ref{PNormBound} we have that
\begin{align*}
    \|\chi_{B_{r_0}}e^{-itH_0^\perp} P_{\delta }^\perp(W_{n;\textrm{out}}^\perp)\|_{\textrm{op}}^2&\leq (2\pi)^{-d} \iint\limits_{W_{n;\textrm{out}}^\perp}\|\chi_{B_{r_0}}e^{-itH_0^\perp}\eta^\perp_{x,p;\delta }\|^2\,dx\,dp
\end{align*}
which we will estimate via the following lemma:
\begin{lemma}[Lemma 2 of Theorem XI.112 in \cite{RSVol3}]\label{NonStationaryRS}
Let $K$ be a compact subset of $\bbR^{\nu}$ and let $\calO$ be an open neighborhood of $K$. Let $\calC(x_0,t)=\{x_0+vt\mid v\in \calO\}$ be the \emph{classically allowed region} for particles starting at $x_0$ with velocities in $\calO$. Then, for any $\ell$ there is a number $\mu$ and a constant $D=D(K,\calO,\ell,d)$ so that:
\begin{align*}
    |e^{-itH_0}u(x)|\leq D(1+d(x,\calC(x_0,t)))^{-\ell}\|(1+|\cdot-x_0|^\mu)u\|
\end{align*}
for all $u$ with $\supp \hat{u}\subset K$ and all $x\in\bbR^\nu$. 
\end{lemma}
In order to apply Lemma \ref{NonStationaryRS} we need the following geometric claims:
\begin{lemma}\label{geoclaim}
For some absolute constant $C$, if $n\geq 8r_0$, and $\delta <\frac{1}{10}m$
\begin{align*}
\|x+t\xi-y\|\geq C(\|x\|+n+t\|p\|)
\end{align*}
for all $(x,p)\in W_{n;\mathrm{out}}^\perp$, $t\geq 0$, $y\in B_{r_0}$, and $\xi\in\calO:=\supp \widehat{\eta^\perp_{x,p;\delta }}+B_{\delta}$.
\end{lemma}
\begin{proof}[proof of claim]
Since $(x,p)\in W_{n;\mathrm{out}}^\perp$  we have that
\begin{align*}
    & \|x\|>n,\,\|p\|>m,\,\textrm{and } \braket{x,p}\geq0
\end{align*}
and we may write $\xi=p+p'$ where $p' \in B_{2\delta}$. It follows that
\begin{align*}
    \frac{\braket{x,\xi}}{\|x\|\|\xi\|}\geq \frac{\braket{x,p'}}{\|x\|\|\xi\|}\geq -\frac{2\delta}{m-2\delta}\geq -\frac{\frac{m}{5}}{m-\frac{m}{5}}=-\frac{1}{4}
\end{align*}
Therefore
\begin{align*}
    \|x+t\xi\|^2&=\|x\|^2+t^2\|\xi\|^2+2t\braket{x,\xi}\geq \|x\|^2+t^2\|\xi\|^2-\frac{t}{2}\|x\|\|\xi\|\\
    &=\frac{3}{8}(\|x\|+t\|\xi\|)^2+\frac{5 }{8}(\|x\|-t\|\xi\|)^2\\ 
    &\geq\frac{3}{8}(\|x\|+t(\|p\|-2\delta))^2\geq \frac{1}{5}(\|x\|+t\|p\|)^2
\end{align*}
Furthermore, since $\|x\|>n$, we may write
\begin{align*}
     \|x+t\xi\|\geq  \frac{1}{5}(\|x\|+n+t\|p\|)
\end{align*} 
Finally, because $\|y\|\leq r_0\leq \frac{1}{8}n$,
\begin{align*}
    \|x+t\xi-y\|\geq \frac{1}{5}(\|x\|+n+\|p\|t)-\|y\|\geq \frac{1}{16}(\|x\|+n+\|p\|t)
\end{align*}
By letting $C=\frac{1}{16}$, we obtain the desired inequality for all $\xi\in \calO$. 
\end{proof}
Let $\calC(x,t)$ be the classically allowed region (see Lemma \ref{NonStationaryRS}) corresponding to $\calO$. For $y$, $\xi$, and $(x,p)$ as above, we have that $y\not\in \calC(x,t)$ so we may apply Lemma \ref{NonStationaryRS} to see that for any $\ell>0$ there is some $\mu>0$ such that
\begin{align*}
    |(e^{-itH_0^\perp}\eta^\perp_{x,p;\delta})(y)|\leq D \frac{\|(1+|\cdot-x|^\mu)\eta^\perp_{x,p;\delta}(\cdot)\|}{d(y,\calC(x,t))^{\ell}}
\end{align*}
uniformly in $(x,p)\in W_{n;\textrm{out}}^\perp$ and $y\in B_{r_0}$. We note that 
\begin{align*}
    \|(1+|\cdot-x|^\mu)\eta^\perp_{x,p;\delta}(\cdot)\|\leq \|\eta_{x,p;\delta}^\perp\|+\left(\,\int\limits_{\bbR^d}\|y-x\|^{2\mu}|\eta^\perp_\delta(y-x)|^2\,dy\right)^\frac{1}{2}
\end{align*}
where the latter expression is independent of $x$ and $p$ (but depends on $\delta$) and is finite since
${\eta^\perp_\delta \in \calS(\bbR^{d-k})}$.
Therefore, for $(x,p) \in W_{n;\textrm{out}}^\perp$
\begin{align*}
    &\|\chi_{B_{r_0}}e^{-itH_0}\eta^\perp_{x,p;\delta }\|^2\leq C\int\limits_{B_{r_0}}(\|x\|+n+t\|p\|)^{-\ell}\,dy\leq C(\|x\|+n+t\|p\|)^{-\ell}
\end{align*}
Using the above, for any $\ell$ large enough relative to $d-k$, we may write
\begin{align*}
    &\iint\limits_{W_{n;\textrm{out}}^\perp}\|\chi_{B_{r_0}}e^{-itH_0^\perp}\eta^\perp_{x,p;\delta }\|^2\,dx\,dp\leq C\iint\limits_{W_{n;\textrm{out}}^\perp}(\|x\|+n+t\|p\|)^{-\ell}\,dx\,dp\\
    &\leq C\int\limits_{B_{m}^c}\int\limits_{B_n^c} (\|x\|+n+t\|p\|)^{-\ell}\,dx\,dp\\
    &\leq C\int\limits_{B_m^c}\int\limits_{n}^\infty  (r+n+t\|p\|)^{-\ell}r^{d-k-1}\,dr\,dp\\
    &\leq C\int\limits_m^\infty(n+t\rho)^{-\ell+d-k}\rho^{d-k-1}\,d\rho\leq Ct^{-1}(n+tm)^{-\ell+2(d-k)}
\end{align*}
Thus, we may conclude that
\begin{align}\label{MainCookInEta}
     \|\chi_{B_{r_0}}e^{-itH_0^\perp} P_{\delta }^\perp(W_{n;\textrm{out}}^\perp)\|_{\textrm{op}}^2&\leq Ct^{-1}(n+tm)^{-\ell+2(d-k)}
\end{align}
\par
To see (\ref{MainCookEst}), we first note that for all $t$ and $n$
\begin{align*}
    \|\chi_{B_{r_0}}e^{-itH_0^\perp}P_\delta^\perp(W^\perp_{n;\textrm{out}})\|_\textrm{op}\leq 1
\end{align*}
so that by combining the two bounds and choosing $\ell$ sufficiently large we may write
\begin{align*}
    &\int\limits_0^\infty\|\chi_{B_{r_0}}e^{-itH_0^\perp}P_\delta^\perp(W^\perp_{n;\textrm{out}})\|_\textrm{op}\,dt\leq\int\limits_0^\frac{1}{n}1 \,dt+C\int\limits_{\frac{1}{n}}^\infty t^{-\frac{1}{2}}(n+tm)^{-\ell}\, dt \\
    &\leq \frac{1}{n}+C\sqrt{n}\int\limits_{\frac{1}{n}}^\infty (n+tm)^{-\ell} \,dt=\frac{1}{n}+C\frac{\sqrt{n}}{m} (n+\frac{m}{n})^{-\ell+1}
\end{align*}
which proves (\ref{Omega-}). The limit (\ref{Omega+}) may be deduced from exactly the same argument by first writing
\begin{align*}
    &\|(\Omega^+-\id)\varphi_{n;\textrm{in}}\|\leq M\int\limits_{-\infty}^0 \|\chi_{S_{r_0}}e^{-iH_0t} \varphi_{n;\textrm{in}}\|\,dt
\end{align*}
and noting that for $t\leq 0$, $e^{-itH_0}\varphi_{n;\textrm{in}}$ behaves like $e^{-itH_0}\varphi_{n;\textrm{out}}$ for $t\geq 0$ because $ W_{n;\textrm{out}}^\perp$ and $W_{n;\textrm{in}}^\perp$ are related by $(x,p)\mapsto (x,-p)$.
\end{proof} 
\begin{lemma}\label{Inlemma}
If we assume that $\varphi_n=e^{-it_nH}\varphi$ for some sequence of positive times $\{t_n\}_{n=0}^\infty $, then
\begin{align*}
   \|\varphi_{n;\mathrm{in}}\|\xrightarrow{n\rightarrow \infty }0 
\end{align*}
\end{lemma}
\begin{proof}
This proof is based on an argument of Enss recorded in \cite{SimonEnns}.
We can write
\begin{align*}
    &\|\varphi_{n;\textrm{in}}\|=\|P_{\delta }(W_{n;\textrm{in}})e^{-it_nH}\varphi\|\\
    &\leq\|P_{\delta }(W_{n;\textrm{in}})(e^{-it_nH}-e^{-it_nH_0})\varphi\|+\|P_{\delta }(W_{n;\textrm{in}})e^{-it_nH_0}\varphi\| 
\end{align*}
so it suffices to prove that
\begin{align}
    \|P_{\delta }(W_{n;\textrm{in}})(e^{-it_nH}-e^{-it_nH_0})\|_{\textrm{op}}\xrightarrow{n\rightarrow\infty}0\label{eq:2}
\end{align}
and
\begin{align}
   \slim\limits_{n\rightarrow\infty} P_{\delta }(W_{n;\textrm{in}})e^{-it_nH_0}=0 \label{eq:3}
\end{align}
To prove (\ref{eq:2}), we write
\begin{align*}
    &\|P_{\delta }(W_{n;\textrm{in}})(e^{-it_nH}-e^{-it_nH_0})\|_\textrm{op}= \|(e^{it_nH}-e^{it_nH_0})P_{\delta }(W_{n;\textrm{in}})\|_\textrm{op}\\
    &=\|(\id-e^{-it_nH}e^{it_nH_0})P_{\delta }(W_{n;\textrm{in}})\|_\textrm{op}\leq \int\limits_0^{t_n}\|Ve^{i\tau H_0}P_{\delta }(W_{n;\textrm{in}})\|_\textrm{op}\,d\tau\\
    &\leq M\int\limits_0^{t_n}\|\chi_{S_{r_0}}e^{i\tau H_0}P_{\delta }(W_{n;\textrm{in}})\|_\textrm{op}\,d\tau\leq M\int\limits_{0}^{\infty}\|\chi_{S_{r_0}}e^{i\tau H_0}P_{\delta }(W_{n;\textrm{in}})\|_\textrm{op}\,d\tau
\end{align*}
By using (\ref{MainCookInEta}) and the symmetry between $ W_{n;\textrm{out}}^\perp$ and $W_{n;\textrm{in}}^\perp$ when mapping  $(x,p)\mapsto (x,-p)$ we see that for any $\ell>0$
\begin{align*}
    \|\chi_{S_{r_0}}e^{i\tau H_0}P_{\delta }(W_{n;\textrm{in}})\|_\textrm{op}\leq C\tau^\frac{1}{2} (n+m\tau)^{-\ell}
\end{align*}
as long as $\tau>0$, so we conclude, similarly to the above, that
\begin{align*}
    \int\limits_0^{\infty}\|\chi_{S_{r_0}}e^{i\tau H_0}P_{\delta }(W_{n;\textrm{in}})\|_\textrm{op}\,d\tau\xrightarrow{n\rightarrow \infty}0
\end{align*}
thus establishing (\ref{eq:2}).\par
For (\ref{eq:3}), we fix $\psi \in \calH$ compactly supported and choose $R$ so that $\supp \psi \subset S_R$. Then
\begin{align*}
    &\|P_{\delta }(W_{n;\textrm{in}})e^{-iH_0t_n}\psi \|=\|P_{\delta }(W_{n;\textrm{in}})e^{-iH_0t_n}\chi_{S_R}\psi \|\\
    &\leq \|\chi_{S_R} e^{iH_0t_n}P_{\delta }(W_{n;\textrm{in}})\|_\textrm{op} \|\psi\|\xrightarrow{n\rightarrow \infty }0
\end{align*}
because the computation of the above operator norm applies just as well to $S_R$ for $R>0$ arbitrary instead of $S_{r_0}$.\par
Density establishes (\ref{eq:3}), which concludes the proof of the lemma.
\end{proof}
These lemmas establish Theorem \ref{thm:decomposition} in full. 
\end{proof}
\section{Proof of part \ref{Complete} of Theorem \ref{thms}: Asymptotic completeness}\label{SumProof}
Recall that 
\begin{align*}
    \tilde{\calH}_{\textrm{sur}}(H)=\bigcap_{m>0}\{\psi\in\calH\mid
     \calN_H^{m}(\psi) =0\}
\end{align*}
where
\begin{align*}
    \calN_H^{m}(\psi)=\limsup\limits_{\delta \rightarrow 0 }\limsup\limits_{n\rightarrow\infty} \sup\limits_{t\geq 0 }\|P_\delta(W_{n,m;\textrm{far}})e^{-itH} \psi\|
\end{align*}
The proof is accomplished in three steps: the first is to prove that $\Ran\Omega^-$ and $\tilde{\calH}_\textrm{sur}$ span all of $\calH$, the second is to show that their intersection is $0$, and the third is to prove that $\tilde{\calH}_\textrm{sur}=\calH_\textrm{sur}$. 
\subsection{Step 1: The Span of $\Ran\Omega^-$ and $\tilde{\calH}_\mathrm{sur}$}\label{Spans}
The above decomposition theorem (Theorem \ref{thm:decomposition}), establishes the first step towards the proof of part \ref{Complete} of Theorem \ref{thms}:
\begin{lemma}\label{sum}
\begin{align*}
\calH=\tilde{\calH}_\mathrm{sur}+\Ran(\Omega^-)
\end{align*}
\end{lemma}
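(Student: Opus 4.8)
The plan is to fix an arbitrary $\psi\in\calH$ orthogonal to $\Ran(\Omega^-)$ and show that $\psi\in\tilde{\calH}_\mathrm{sur}$, which suffices since $\tilde{\calH}_\mathrm{sur}$ is a closed subspace and $\Ran(\Omega^-)$ is closed (the wave operator is an isometry). Unwinding the definition of $\tilde{\calH}_\mathrm{sur}$, we must show that for every $m>0$ we have $\calN_H^m(\psi)=0$, i.e.
\begin{align*}
    \limsup_{\delta\to 0}\limsup_{n\to\infty}\sup_{t\geq 0}\|P_\delta(W_{n,m;\mathrm{far}})e^{-itH}\psi\|=0.
\end{align*}
So fix $m>0$, let $\delta_0=\delta_0(m)$ be as in Theorem \ref{thm:decomposition}, and fix $\delta\in(0,\delta_0)$. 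The core of the argument is a contradiction/extraction step: if the relevant supremum does not go to $0$, then for each $n$ we can pick a time $t_n\geq 0$ with $\|P_\delta(W_{n,m;\mathrm{far}})e^{-it_nH}\psi\|$ bounded below, and then apply the decomposition theorem to the sequence $\varphi_n:=e^{-it_nH}\psi$ (after normalizing; we may assume $\|\psi\|=1$).

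Here is how I would organize it. First, suppose for contradiction that $\limsup_{n\to\infty}\sup_{t\geq 0}\|P_\delta(W_{n,m;\mathrm{far}})e^{-itH}\psi\|=:2c>0$. Passing to a subsequence in $n$ and choosing $t_n\geq 0$ appropriately, we get $\|P_\delta(W_{n,m;\mathrm{far}})\varphi_n\|\geq c$ for all $n$, where $\varphi_n=e^{-it_nH}\psi$. Apply Theorem \ref{thm:decomposition} to this sequence: by \eqref{eq:b}, $P_\delta(W_{n,m;\mathrm{far}})\varphi_n=\varphi_{n;\mathrm{out}}+\varphi_{n;\mathrm{in}}$, so $\|\varphi_{n;\mathrm{out}}+\varphi_{n;\mathrm{in}}\|\geq c$. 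By \eqref{eq:c} (applicable since $\varphi_n=e^{-it_nH}\psi$ with $t_n\geq 0$), $\|\varphi_{n;\mathrm{in}}\|\to 0$, hence $\liminf_{n\to\infty}\|\varphi_{n;\mathrm{out}}\|\geq c>0$. On the other hand, by \eqref{eq:a}, $\|(\Omega^--\id)\varphi_{n;\mathrm{out}}\|\to 0$, which means $\varphi_{n;\mathrm{out}}$ is asymptotically in $\Ran(\Omega^-)$: indeed $\Omega^-\varphi_{n;\mathrm{out}}\in\Ran(\Omega^-)$ and $\|\varphi_{n;\mathrm{out}}-\Omega^-\varphi_{n;\mathrm{out}}\|\to 0$. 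Since $\psi\perp\Ran(\Omega^-)$ and each $\varphi_n=e^{-it_nH}\psi$ — wait, this is the delicate point.

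The point requiring care, and the one I expect to be the main obstacle, is transferring the orthogonality $\psi\perp\Ran(\Omega^-)$ to the statement that $\varphi_{n;\mathrm{out}}$ is asymptotically orthogonal to $\varphi_n$, forcing $\|\varphi_{n;\mathrm{out}}\|\to 0$. The key is the intertwining relation $e^{-itH}\Omega^-=\Omega^- e^{-itH_0}$, which implies $\Ran(\Omega^-)$ is invariant under $e^{-itH}$; hence $e^{-itH}$ maps $\Ran(\Omega^-)^\perp$ to itself, so $\varphi_n=e^{-it_nH}\psi\in\Ran(\Omega^-)^\perp$ as well. Therefore $\langle \Omega^-\varphi_{n;\mathrm{out}},\varphi_n\rangle=0$ for every $n$, and consequently
\begin{align*}
    \|\varphi_{n;\mathrm{out}}\|^2=\langle\varphi_{n;\mathrm{out}},\varphi_n\rangle - \langle\varphi_{n;\mathrm{in}},\varphi_{n;\mathrm{out}}\rangle - \langle\varphi_{n;\mathrm{sur}},\varphi_{n;\mathrm{out}}\rangle.
\end{align*}
Using $\langle\varphi_{n;\mathrm{out}},\varphi_n\rangle = \langle\varphi_{n;\mathrm{out}},\varphi_n-\Omega^-\varphi_{n;\mathrm{out}}\rangle$ together with \eqref{eq:a} gives $|\langle\varphi_{n;\mathrm{out}},\varphi_n\rangle|\leq\|(\Omega^--\id)\varphi_{n;\mathrm{out}}\|\to 0$; the term with $\varphi_{n;\mathrm{in}}$ vanishes by \eqref{eq:c}; and for the $\varphi_{n;\mathrm{sur}}$ term one uses that $P_\delta(W_{n,m;\mathrm{out}})$ and $P_\delta(W_{n,m;\mathrm{sur}})$ are POVM pieces associated to disjoint Borel sets, so (via the $\mathscr{F}_{\eta_\delta}$ representation in Section \ref{HsurTildeDef}) $\langle\varphi_{n;\mathrm{sur}},\varphi_{n;\mathrm{out}}\rangle = \langle\mathscr{F}_{\eta_\delta}\varphi_n, \chi_{W_{n,m;\mathrm{sur}}}\chi_{W_{n,m;\mathrm{out}}}\mathscr{F}_{\eta_\delta}\varphi_n\rangle = 0$ since the indicator functions have disjoint support. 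Thus $\|\varphi_{n;\mathrm{out}}\|^2\to 0$, contradicting $\liminf\|\varphi_{n;\mathrm{out}}\|\geq c>0$.

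This contradiction shows $\limsup_{n\to\infty}\sup_{t\geq 0}\|P_\delta(W_{n,m;\mathrm{far}})e^{-itH}\psi\|=0$ for every $\delta\in(0,\delta_0)$, hence $\calN_H^m(\psi)=0$ after taking $\limsup_{\delta\to 0}$; since $m>0$ was arbitrary, $\psi\in\tilde{\calH}_\mathrm{sur}(H)$. Therefore $\Ran(\Omega^-)^\perp\subset\tilde{\calH}_\mathrm{sur}$, which is equivalent to $\calH=\tilde{\calH}_\mathrm{sur}+\Ran(\Omega^-)$ (as a sum of closed subspaces whose span is dense, hence all of $\calH$ — or more directly, $\calH=\Ran(\Omega^-)\oplus\Ran(\Omega^-)^\perp\subset\Ran(\Omega^-)+\tilde{\calH}_\mathrm{sur}$). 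One cosmetic point to handle at the start: the decomposition theorem is stated for a single $\delta<\delta_0(m)$ producing summands depending on $\delta$; since we take $\limsup_{\delta\to 0}$ at the end, it is harmless to run the whole argument for each fixed $\delta\in(0,\delta_0)$ separately, as done above.
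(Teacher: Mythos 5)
Your overall strategy mirrors the paper's: fix $\psi\perp\Ran(\Omega^-)$, pick times $t_n$ nearly achieving $\sup_{t\geq 0}\|P_\delta(W_{n,m;\mathrm{far}})e^{-itH}\psi\|$, invoke Theorem \ref{thm:decomposition}, kill the ``in'' piece via \eqref{eq:c}, and use the intertwining property to transfer $\psi\perp\Ran(\Omega^-)$ to $\varphi_n\perp\Ran(\Omega^-)$. That transfer step is handled correctly (modulo a cosmetic misplacement in $\langle\varphi_{n;\mathrm{out}},\varphi_n\rangle=\langle\varphi_{n;\mathrm{out}}-\Omega^-\varphi_{n;\mathrm{out}},\varphi_n\rangle$, where the $\Omega^-$ correction should sit in the left slot; the desired estimate still follows).

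However, there is a genuine gap in the final step. You claim $\langle\varphi_{n;\mathrm{sur}},\varphi_{n;\mathrm{out}}\rangle=0$ because $W_{n,m;\mathrm{sur}}$ and $W_{n,m;\mathrm{out}}$ are disjoint, computing $\langle\mathscr{F}_{\eta_\delta}\varphi_n,\chi_{W_{n,m;\mathrm{sur}}}\chi_{W_{n,m;\mathrm{out}}}\mathscr{F}_{\eta_\delta}\varphi_n\rangle=0$. But
\begin{align*}
\langle P_\delta(W_\mathrm{sur})\varphi_n, P_\delta(W_\mathrm{out})\varphi_n\rangle
 = \langle \chi_{W_\mathrm{sur}}\mathscr{F}_{\eta_\delta}\varphi_n,\; \mathscr{F}_{\eta_\delta}\mathscr{F}_{\eta_\delta}^{*}\chi_{W_\mathrm{out}}\mathscr{F}_{\eta_\delta}\varphi_n\rangle,
\end{align*}
and $\mathscr{F}_{\eta_\delta}\mathscr{F}_{\eta_\delta}^{*}$ is not the identity; it is the orthogonal projection onto $\Ran\mathscr{F}_{\eta_\delta}\subsetneq L^2(\bbR^{2d})$. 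The family $P_\delta$ is a POVM, not a PVM, precisely because of this: $P_\delta(E)P_\delta(F)\neq 0$ for disjoint $E,F$ in general, and there is genuine overlap between the ``surface'' and ``outgoing'' pieces coming from the nonlocality of the coherent-state decomposition. So $\|\varphi_{n;\mathrm{out}}\|^2\to 0$ does not follow from your identity; the cross term $\langle\varphi_{n;\mathrm{out}},\varphi_{n;\mathrm{sur}}\rangle$ is uncontrolled.

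The paper avoids any orthogonality between POVM pieces. Instead of expanding $\|\varphi_{n;\mathrm{out}}\|^2$, it pairs $P_\delta(W_{n,m;\mathrm{far}})\varphi_n$ against $\varphi_n$ itself: by \eqref{eq:b}, $\langle P_\delta(W_{n,m;\mathrm{far}})\varphi_n,\varphi_n\rangle = \langle\varphi_{n;\mathrm{out}},\varphi_n\rangle + \langle\varphi_{n;\mathrm{in}},\varphi_n\rangle\to 0$, and then uses the operator inequality $P_\delta(E)^2\leq P_\delta(E)$ (valid since $0\leq P_\delta(E)\leq\id$, cf.\ Proposition \ref{BoundedByIndicator}) to conclude
\begin{align*}
\|P_\delta(W_{n,m;\mathrm{far}})\varphi_n\|^2 = \langle P_\delta(W_{n,m;\mathrm{far}})^2\varphi_n,\varphi_n\rangle \leq \langle P_\delta(W_{n,m;\mathrm{far}})\varphi_n,\varphi_n\rangle\to 0.
\end{align*}
This is exactly the device that replaces the false disjoint-support orthogonality; if you substitute it for your last step, your argument becomes correct (and the contradiction framing is then unnecessary, though harmless).
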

\begin{proof}
 Let $\varphi\in (\Ran (\Omega^-))^\perp$. 
Fix $m>0$ and for each $n$ choose $t_n\geq 0$ such that
\begin{align*}
    \|P_{\delta }(W_{n,m;\textrm{far}})e^{-it_nH}\varphi\|>\frac{1}{2}\sup_{t\geq 0} \|P_{\delta }(W_{n,m;\textrm{far}})e^{-itH}\varphi\|
\end{align*}
Let $\varphi_n=e^{-it_nH}\varphi$. By Theorem \ref{thm:decomposition}, there is a $\delta_0(m)$ such that for all $\delta \in (0,\delta_0)$, there is a decomposition depending on $m$ and $\delta$
\begin{align*}
     \varphi_n=\varphi_{n;\textrm{out}}+\varphi_{n;\textrm{in}}+\varphi_{n;\textrm{sur}}
\end{align*}
obeying the properties in the theorem.\par
Now, since $\varphi\perp \textrm{Ran}(\Omega^-)$, from property (\ref{eq:a}) in Theorem \ref{thm:decomposition} we get that
\begin{align*}
    \lim\limits_{n\rightarrow \infty }|\braket{\varphi_n, \varphi_{n;\textrm{out}}}|=\lim\limits_{n\rightarrow \infty }|\braket{\varphi_n, \Omega^-\varphi_{n;\textrm{out}}}|=0
\end{align*}
where we have also used that the propagator leaves $\textrm{Ran}(\Omega^-)$ invariant. Furthermore, ${\|\varphi_{n;\textrm{in}}\|\xrightarrow{n\rightarrow\infty} 0}$ so that from property (\ref{eq:b}) in Theorem \ref{thm:decomposition} we get that
\begin{align*}
    \braket{P_{\delta } (W_{n;\textrm{far}})\varphi_n,\varphi_n}=\braket{\varphi_{n;\textrm{in}}+\varphi_{n;\textrm{out}},\varphi_n}\xrightarrow{n\rightarrow\infty}0
\end{align*}
But this implies that
\begin{align*}
\|P_{\delta }(W_{n;\textrm{far}})\varphi_n\|\xrightarrow{n\rightarrow\infty}0
\end{align*}
as
\begin{align*}
    \|P_{\delta }(W_{n;\textrm{far}})\varphi_n\|^2=\braket{P_{\delta }^2(W_{n;\textrm{far}})\varphi_n,\varphi_n}\leq \braket{P_{\delta }(W_{n;\textrm{far}})\varphi_n,\varphi_n}
\end{align*}
since $P_\delta^2(E)\leq P_\delta(E)$ as in the proof of Proposition \ref{BoundedByIndicator}.
So we get that 
\begin{align*}
    \frac{1}{2}\sup_{t\geq 0} \|P_{\delta }(W_{n,m;\textrm{far}})e^{-itH}\varphi\|\leq \|P_{\delta }(W_{n,m;\textrm{far}}\varphi_n\|\xrightarrow{n\rightarrow\infty }0 
\end{align*}
which implies that
\begin{align*}
    \lim\limits_{n\rightarrow \infty} \sup\limits_{t\geq 0}\|P_{\delta }(W_{n;\textrm{far}})e^{-itH}\varphi\|=0
\end{align*}
for all $\delta<\delta_0(m)$. In other words, since the choice of $m$ was arbitrary, we have shown that $\varphi \in \tilde{\calH}_\textrm{sur}$. \par
\end{proof}
\subsection{Step 2: The Intersection of $\Ran\Omega^-$ and $\tilde{\calH}_\mathrm{sur}$}\label{Intersection}
\begin{lemma}\label{SemiComplete}
\begin{align*}
    \calH=\mathrm{Ran}(\Omega^-)\oplus \tilde{\calH}_{\mathrm{sur}}
\end{align*}
\end{lemma}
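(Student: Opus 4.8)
The plan is to deduce Lemma~\ref{SemiComplete} from Lemma~\ref{sum} by upgrading the sum to an \emph{orthogonal} one. Inspecting the proof of Lemma~\ref{sum}, we have in fact shown the inclusion $(\Ran\Omega^-)^\perp\subseteq\tilde{\calH}_{\textrm{sur}}$. Granting, for the moment, that $\Ran(\Omega^-)\cap\tilde{\calH}_{\textrm{sur}}=\{0\}$, the lemma follows by a soft argument: if $\psi\in\tilde{\calH}_{\textrm{sur}}$ and $P$ denotes the orthogonal projection onto $\Ran(\Omega^-)$, then $(\id-P)\psi\in(\Ran\Omega^-)^\perp\subseteq\tilde{\calH}_{\textrm{sur}}$, so $P\psi=\psi-(\id-P)\psi$ lies both in the subspace $\tilde{\calH}_{\textrm{sur}}$ and in $\Ran(\Omega^-)$, whence $P\psi=0$ and $\psi\in(\Ran\Omega^-)^\perp$; thus $\tilde{\calH}_{\textrm{sur}}=(\Ran\Omega^-)^\perp$, which is the assertion. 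So everything reduces to showing $\Ran(\Omega^-)\cap\tilde{\calH}_{\textrm{sur}}=\{0\}$.

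I would prove this by showing that the unitary $\Omega^-\colon\calH\to\Ran(\Omega^-)$ carries $\tilde{\calH}_{\textrm{sur}}(H_0)$ \emph{onto} $\Ran(\Omega^-)\cap\tilde{\calH}_{\textrm{sur}}(H)$, so that triviality of the intersection reduces to the assertion $\tilde{\calH}_{\textrm{sur}}(H_0)=\{0\}$. The key identity is
\begin{align*}
\calN_H^m(\Omega^-\phi)=\calN_{H_0}^m(\phi)\qquad\text{for every }\phi\in\calH,\ m>0,
\end{align*}
which, since $\tilde{\calH}_{\textrm{sur}}(A)=\bigcap_{m>0}\{\psi:\calN^m_A(\psi)=0\}$, gives at once $\Omega^-\phi\in\tilde{\calH}_{\textrm{sur}}(H)\iff\phi\in\tilde{\calH}_{\textrm{sur}}(H_0)$. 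To prove the identity, use the intertwining relation $e^{-itH}\Omega^-=\Omega^-e^{-itH_0}$ together with Cook's formula: for $\phi\in\bigcup_{\alpha>0}\calD_\alpha$,
\begin{align*}
e^{-itH}\Omega^-\phi-e^{-itH_0}\phi=(\Omega^--\id)e^{-itH_0}\phi=i\int_t^\infty e^{i(s-t)H}Ve^{-isH_0}\phi\,ds,
\end{align*}
so that $\|e^{-itH}\Omega^-\phi-e^{-itH_0}\phi\|\leq\epsilon(T):=\int_T^\infty\|Ve^{-isH_0}\phi\|\,ds$ for all $t\geq T$, and $\epsilon(T)\to0$ as $T\to\infty$ by the bound~(\ref{ExistEq}) of Section~\ref{Existence}. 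Since $\|P_\delta(W_{n,m;\textrm{far}})\|_{\textrm{op}}\leq1$ and $P_\delta(W_{n,m;\textrm{far}})\to0$ strongly as $n\to\infty$ (the sets $W_{n,m;\textrm{far}}$ decrease to $\emptyset$, and the convergence is then uniform on compact sets of vectors), I would split $\sup_{t\geq0}$ into the range $0\leq t\leq T$ — where $n\to\infty$ kills the term uniformly over the compact orbit segment $\{e^{-itH}\Omega^-\phi:0\leq t\leq T\}$ — and $t>T$ — where the two evolutions agree up to $\epsilon(T)$. Letting $n\to\infty$, then $\delta\to0$, then $T\to\infty$, this gives $\calN_H^m(\Omega^-\phi)\leq\calN_{H_0}^m(\phi)$, and exchanging the roles of $e^{-itH}\Omega^-\phi$ and $e^{-itH_0}\phi$ gives the reverse inequality. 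Each $\calN^m_A$ is $1$-Lipschitz and $\bigcup_{\alpha>0}\calD_\alpha$ is dense, so the identity passes to all $\phi\in\calH$.

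It remains to check, by direct computation, that $\tilde{\calH}_{\textrm{sur}}(H_0)=\{0\}$. Using $P_\delta(W_{n,m;\textrm{far}})=\id\otimes P^\perp_\delta(B_n^c\times B_m^c)$ (by Proposition~\ref{TensorClaim} and the resolution of the identity $P^\parallel_\delta(\bbR^{2k})=\id$), together with $e^{-itH_0}=e^{-itH_0^\parallel}\otimes e^{-itH_0^\perp}$ and a Schmidt decomposition $\psi=\sum_j c_j\,\psi_j^\parallel\otimes\psi_j^\perp$ with $\{\psi_j^\parallel\}$ orthonormal, one gets $\calN_{H_0}^m(\psi)\geq|c_j|\,\calN^m_{H_0^\perp}(\psi_j^\perp)$ for each $j$; hence it suffices to prove the transverse statement $\tilde{\calH}_{\textrm{sur}}(H_0^\perp)=\{0\}$ in $L^2(\bbR^{d-k})$. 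Given $0\neq u$, write $u=u_>+u_\leq$ according to whether the transverse momentum exceeds $2m$. A direct computation (in the spirit of Appendix~\ref{DaviesProperties}) shows $P^\perp_\delta(\bbR^{d-k}\times A)$ is the Fourier multiplier by $|\widehat{\eta^\perp_\delta}|^2*\chi_A$, so for $\delta<m$ the momentum cutoff implicit in $P^\perp_\delta(B_n^c\times B_m^c)$ acts as the identity on $u_>$ and annihilates $u_\leq$ outside the thin annulus $\{m-\delta\leq\|\xi^\perp\|\leq2m\}$; meanwhile $\chi_{B_{2n}}e^{-itH_0^\perp}u_>\to0$ as $t\to\infty$ by free dispersion (cf.\ Lemma~\ref{simpleNonstationary}), while the position cutoff contributes at most $\int_{\|z\|>n}|\eta^\perp_\delta(z)|^2\,dz\to0$. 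Choosing the time appropriately for each $n$ and $\delta$ and carrying out the bookkeeping, one arrives, after $\limsup_n$ then $\limsup_\delta$, at
\begin{align*}
\calN^m_{H_0^\perp}(u)\;\geq\;\big\|\chi_{B_{2m}^c}\widehat{u}\big\|_{L^2}-\big\|\chi_{\{m\leq\|\xi^\perp\|\leq2m\}}\widehat{u}\big\|_{L^2},
\end{align*}
whose right side is strictly positive for $m$ small (the first term tends to $\|u\|>0$ and the second to $0$ as $m\to0$). Hence $u\notin\tilde{\calH}_{\textrm{sur}}(H_0^\perp)$, completing the argument.

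I expect the main obstacle to be the last step — the direct computation — where one must keep the three nested limits ($\sup_t$, then $n\to\infty$, then $\delta\to0$) under control against the position/momentum spreading built into the coherent states $\eta_\delta$, and in particular must choose the ``escape time'' $t=t(n,\delta)$ so that it does not interfere with the ensuing $\limsup_n$ and $\limsup_\delta$. The intertwining step is the conceptual heart of the matter but is technically routine given the Cook-type tail estimate from Section~\ref{Existence}.
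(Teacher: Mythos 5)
Your proposal is correct and arrives at the same theorem, but it takes a route that differs from the paper's in two substantive ways. First, where the paper introduces the auxiliary seminorms $\tilde{\calN}_H^m$ (with $\limsup_{t\to\infty}$ in place of $\sup_{t\geq 0}$) and the corresponding space $\calH_{\textrm{sur}}^{\limsup}$ precisely so that the Cook-type bound $\|e^{-itH_0}\varphi-e^{-itH}\Omega^-\varphi\|\to 0$ gives the transfer inequality $\tilde{\calN}^m_{H_0}(\varphi)\leq\tilde{\calN}^m_H(\Omega^-\varphi)$ with no further work, you keep the original $\sup_{t\geq 0}$ and instead split $[0,T]\sqcup(T,\infty)$, handling the compact piece via strong convergence $P_\delta(W_{n,m;\textrm{far}})\to 0$ (uniform on the compact orbit segment) and the tail via the Cook estimate. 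This is a genuinely different mechanism; it is more work, but it directly yields the equality $\calN_H^m\circ\Omega^- = \calN_{H_0}^m$, which is slightly stronger than what the paper proves in Step 2 (the paper only needs, and only proves, one inequality, later deducing $\calH_{\textrm{sur}}^{\limsup}=\tilde{\calH}_{\textrm{sur}}$ as a corollary). Second, for the free triviality statement, the paper works on the dense sets $\calD_\alpha$ — whose compact transverse Fourier support away from the origin makes the momentum cutoff $B_m$ vanish outright (Proposition \ref{psupport}) and feeds into a clean non-stationary phase estimate — and what really closes the density argument is the implicit lower bound $\tilde{\calN}_{H_0}^m(\varphi)\geq\|\varphi\|$ for $\varphi\in\calD_\alpha$, $m<\alpha$, which, with $1$-Lipschitzness, rules out any nonzero limit point. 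You instead Schmidt-decompose, reduce to the transverse operator $H_0^\perp$, split by momentum $u=u_>+u_\leq$, and use the Fourier-multiplier structure of $P_\delta^\perp(\bbR^{d-k}\times A)$ together with dispersion of $\chi_{B_{2n}}e^{-itH_0^\perp}$; your final lower bound $\calN^m_{H_0^\perp}(u)\geq\|\chi_{B_{2m}^c}\hat{u}\|-\|\chi_{\{m\leq\|\xi\|\leq 2m\}}\hat{u}\|$ is weaker (it depends on $m$ and requires taking $m$ small) but suffices. One small caution: you invoke ``free dispersion (cf.\ Lemma~\ref{simpleNonstationary})'' to kill $\chi_{B_{2n}}e^{-itH_0^\perp}u_>$ for general $u_>\in L^2$, but that lemma applies to Schwartz data with compact Fourier support; for generic $u_>$ you should instead argue as you likely intend via compactness of $\chi_{B_{2n}}(H_0^\perp+1)^{-1}$ together with weak decay of $e^{-itH_0^\perp}$, or approximate. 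With that patch, the argument goes through, and you correctly flag the delicacy of choosing $t=t(n,\delta)$ before the nested limsups.
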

\begin{proof}
By Lemma \ref{sum}, it suffices to show that $\textrm{Ran}(\Omega^-)\cap \tilde{\calH}_{\textrm{sur}}=\{0\}$. For this, we will define the following auxiliary family of seminorms
\begin{align*}
    \tilde{\calN}_H^{m}(\psi)=\limsup\limits_{\delta \rightarrow 0 }\limsup\limits_{n\rightarrow\infty} \limsup\limits_{t\rightarrow \infty }\|P_\delta(W_{n,m;\textrm{far}})e^{-itH} \psi\|
\end{align*}
by replacing $\sup\limits_{t\geq0}$ in the definition of $\calN_H^{m}$ with $\limsup\limits_{t\rightarrow\infty}$ so that we may define
\begin{align*}
    \calH_{\mathrm{sur}}^{\limsup}(H):=\bigcap_{m>0}\{\psi\in\calH\mid
     \tilde{\calN}_H^{m}(\psi) =0\}
\end{align*} accordingly. Clearly, $\tilde{\calH}_\textrm{sur}\subset \calH_{\textrm{sur}}^{\limsup}(H)$ and will prove the stronger claim that
\begin{align*}
    \Ran(\Omega^-)\cap \calH_{\textrm{sur}}^{\limsup}(H)=\{0\}
\end{align*}
To this end, we first prove:
\begin{claim}\label{IsoIntersection}
If $\psi\in \Ran(\Omega^-)\cap \calH_{\mathrm{sur}}^{\limsup}(H)$ then there exists $\varphi\in\calH_{\mathrm{sur}}^{\limsup}(H_0)$ such that $\Omega^-(\varphi)=\psi$.
\end{claim}
\begin{proof}
Let $\psi \in \textrm{Ran}(\Omega^-)\cap \calH_{\textrm{sur}}^{\limsup}(H)$. Since $\psi \in \textrm{Ran}(\Omega^-)$ there is some $\varphi\in \calH $ such that:
\begin{align*}
    \psi=\Omega^-\varphi
\end{align*}
or equivalently
\begin{align*}
    \lim\limits_{t\rightarrow\infty}\|e^{-itH_0}\varphi-e^{-itH}\psi\|=0
\end{align*}
For any fixed $m>0, \delta>0,n>0$, this implies that
\begin{align*}
    &\limsup\limits_{t\rightarrow\infty}\|P_\delta(W_{n,m;\textrm{far}}) e^{-itH}\psi - P_\delta(W_{n,m;\textrm{far}})e^{-itH_0}\varphi\|\\
    &\leq \limsup\limits_{t\rightarrow\infty}\|P_\delta(W_{n,m;\textrm{far}}) \|_{\textrm{op}}\|e^{-itH}\psi-e^{-itH_0}\varphi\|=0
\end{align*}
so that 
\begin{align*}
     &\limsup\limits_{t\rightarrow \infty } \|P_\delta(W_{n,m;\textrm{far}})e^{-itH_0} \varphi\| \\
     &\leq\limsup\limits_{t\rightarrow \infty } \left[\|P_\delta(W_{n,m;\textrm{far}})(e^{-itH_0} \varphi-e^{-itH} \psi)\|+\|P_\delta(W_{n,m;\textrm{far}})e^{-itH} \psi\| \right]\\
     &\leq \limsup\limits_{t\rightarrow \infty } \|P_\delta(W_{n,m;\textrm{far}})e^{-itH} \psi\|
\end{align*}
Therefore
\begin{align*}
    &\limsup\limits_{\delta\rightarrow0}\limsup_{n\rightarrow\infty}\limsup_{t\rightarrow\infty}\|P_\delta(W_{n,m;\textrm{far}})e^{-itH_0} \varphi\|\\
    &\leq \limsup\limits_{\delta\rightarrow0}\limsup_{n\rightarrow\infty} \limsup_{t\rightarrow \infty}\|P_\delta(W_{n,m;\textrm{far}})e^{-itH} \psi\|=0
\end{align*}
since $\psi \in \calH_{\textrm{sur}}^{\limsup}(H)$ so that because $m$ was arbitrary, we see that $\varphi \in \calH_{\textrm{sur}}^{\limsup}(H_0)$.
\end{proof}
Thus, it suffices to show that $\calH_{\textrm{sur}}^{\limsup}(H_0)=\{0\}$:
\begin{claim}\label{FreeInter}
\begin{align*}
    \calH_{\mathrm{sur}}^{\limsup}(H_0)=\{0\}
\end{align*}
\end{claim}

\begin{proof}
Recall the following definition:
\begin{align*}
    \calD_{\alpha}=\textrm{Span}(\{\psi_i^\parallel\otimes \psi_i^\perp \mid \psi^\parallel_i\in L^2(\bbR^k), \psi ^\perp_i \in \calS(\bbR^{d-k}), \supp \widehat{\psi^\perp_i} \Subset B_\alpha^c\}
\end{align*}
for some $\alpha>0$. \\
 We will show that $\calD_{\alpha}\cap \calH_{\textrm{sur}}^{\limsup}(H_0)=\{0\}$ from which the claim follows by the density of $\bigcup\limits_{\alpha>0} \calD_{\alpha}$ in $\calH$.\par
For this, fix $\varphi\in\calD_{\alpha}\cap \calH_{\textrm{sur}}^{\limsup}(H_0)$ and
choose $m<\alpha$. For all $\delta<\alpha-m $ sufficiently small
\begin{align*}
    P_\delta(W_{n,m;\textrm{sur}})\varphi=P_\delta(\bbR^{2k}\times B_n\times \bbR^{d-k})\varphi
\end{align*}
 as $ P_\delta(W_{n,m;\textrm{sur}})=P_\delta(\bbR^{2k}\times B_n\times \bbR^{d-k})+P_\delta(\bbR^{2k}\times B_n^c\times B_m)$ and $P_\delta(\bbR^{2k}\times B_n^c\times B_m)\varphi=0$ by Proposition \ref{psupport}. Furthermore, this equality holds for $e^{-itH_0}\varphi$ for all $t$ since the free propagator does not change a function's Fourier support.\par
 By using Proposition \ref{SpaceLoc} we see that, (since $S_n=\bbR^k \times B_n$)
\begin{align*}
    \|P_\delta(W_{n,m;\textrm{sur}})e^{-itH_0}\varphi\|&=\|P_\delta(\bbR^{2k}\times B_n\times \bbR^{d-k})e^{-itH_0}\varphi\|= \|(|\eta_\delta|^2*\chi_{S_{n}})e^{-itH_0}\varphi\|\\
    &\leq\|(|\eta_\delta|^2*\chi_{S_{n}})\chi_{S_{2n}}e^{-itH_0}\varphi\|+\|(|\eta_\delta|^2*\chi_{S_{n}})\chi_{S_{2n}^c}e^{-itH_0}\varphi\|\\
    &\leq \|\chi_{S_{2n}}e^{-itH_0}\varphi\|+\|\varphi\|\|(|\eta_\delta|^2*\chi_{S_{n}})\chi_{S_{2n}^c}\|_\infty\\
    &= \|\chi_{S_{2n}}e^{-itH_0}\varphi\|+\|\varphi\|\|(|\eta_\delta^\perp|^2*\chi_{B_{n}})\chi_{B_{2n}^c}\|_\infty
\end{align*}
Note that for $x\in B_{2n}^c$ and $\ell>0$ large enough
\begin{align}\label{SmoothIndToInd}
\begin{split}
     &(|\eta_\delta^\perp|^2*\chi_{B_{n}})(x)=\int\limits_{B_{n}}|\eta_\delta^\perp(y-x)|^2\,dy\leq C\int\limits_{B_{n}}\|x-y\|^{-\ell}\,dy\leq C\int\limits_{B_{n}}(2n-\|y\|)^{-\ell}\,dy\\
   &\leq C\int\limits_{0}^n(2n-r)^{-\ell}r^{d-k-1}\,dr\leq Cn^{-\ell+d-k}
\end{split}
\end{align}
so we can write, for any $\ell>0$
\begin{align*}
    \|P_\delta(W_{n,m;\textrm{sur}})e^{-itH_0}\varphi\|\leq \|\chi_{S_{n}}e^{-itH_0}\varphi\|+C\|\varphi\|n^{-\ell }
\end{align*}
Thus, we conclude that
 \begin{align*}
     \lim\limits_{n\rightarrow \infty} \lim\limits_{t\rightarrow\infty} \|P_\delta(W_{n,m;\textrm{sur}})e^{-itH_0}\varphi\|\leq \lim\limits_{n\rightarrow \infty} \lim\limits_{t\rightarrow\infty} \|\chi_{S_{2n}}e^{-itH_0}\varphi\|
 \end{align*}
 To estimate the right hand side, we note that $x\in S_n$ implies that for $t>\frac{n}{2\alpha}$ we have
 \begin{align*}
     \|\frac{x^\perp}{t}\|<\frac{n}{t}<2\alpha
 \end{align*}
  Therefore, we may proceed as in the proof of (\ref{ExistEq}) in the proof of part \ref{existenceTheorem} of Theorem \ref{thms} to see that for all $\ell>0$ we have
\begin{align*}
    \|\chi_{S_n}e^{-itH_0}\varphi\|^2\leq C \int\limits_{B_n}(1+\|x^\perp\|+t)^{-\ell}\,dx^\perp
\end{align*}
and therefore
\begin{align*}
    \lim\limits_{t\rightarrow\infty}\|\chi_{S_n}e^{-itH_0}\varphi\|=0
\end{align*}
In summary, we have shown that we may find $m_0(\alpha)$  such that for some $\delta_0$ if $\delta<\delta_0$ then
\begin{align*}
    \lim\limits_{n\rightarrow\infty}\lim\limits_{t\rightarrow\infty} \|P_\delta(W_{n,m_0;\textrm{sur}})e^{-itH_0}\varphi\|=0
\end{align*}
Now, because $\varphi\in \calH_{\textrm{sur}}^{\limsup}(H_0)$, we may find $\delta<\delta_0$ so that
\begin{align*}
    \lim\limits_{n\rightarrow\infty}\limsup\limits_{t\rightarrow\infty}\|P_\delta(W_{n,m_0;\textrm{far}})e^{-itH_0}\varphi\|<\varepsilon
\end{align*}
for any $\varepsilon>0$. It follows that
\begin{align*}
    \|\varphi\|=\lim\limits_{n\rightarrow \infty} \limsup\limits_{t\rightarrow \infty }\|P_{\delta}(W_{n,m_0;\textrm{sur}})e^{-itH_0}\varphi+P_{ \delta}(W_{n,m_0;\textrm{far}})e^{-itH_0}\varphi\|< \varepsilon
\end{align*}
and since $\varepsilon$ was arbitrary, we see that $\varphi=0$.
\end{proof}
These claims complete the proof of Lemma \ref{SemiComplete}.
\end{proof}
It may be of interest to note that we have in fact proven that it is equivalent to define $\tilde{\calH}_\textrm{sur}$ with a $\limsup$ in time instead of a $\sup$. In other words:
\begin{corollary}
We have that
\begin{align*}
    \calH_{\mathrm{sur}}^{\limsup}(H)=\tilde{\calH}_{\mathrm{sur}}
\end{align*}
\end{corollary}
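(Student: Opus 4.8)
The plan is to prove the two inclusions separately; one is immediate from the definitions, and the other follows by squeezing $\calH_{\mathrm{sur}}^{\limsup}(H)$ between facts already established in the proofs of Lemmas \ref{sum} and \ref{SemiComplete}, with no new analysis required.

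For the inclusion $\tilde{\calH}_{\mathrm{sur}}\subset\calH_{\mathrm{sur}}^{\limsup}(H)$: since $\limsup_{t\to\infty}\|P_\delta(W_{n,m;\mathrm{far}})e^{-itH}\psi\|\leq\sup_{t\geq 0}\|P_\delta(W_{n,m;\mathrm{far}})e^{-itH}\psi\|$ for all $\delta,n,m$, we have $\tilde{\calN}_H^m(\psi)\leq\calN_H^m(\psi)$ pointwise, so $\{\psi\mid\calN_H^m(\psi)=0\}\subset\{\psi\mid\tilde{\calN}_H^m(\psi)=0\}$ for each $m$, and intersecting over $m$ gives the claim.

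For the reverse inclusion I would assemble the following. The proof of Lemma \ref{sum} in fact establishes $(\Ran\Omega^-)^\perp\subset\tilde{\calH}_{\mathrm{sur}}$, since there an arbitrary $\varphi\perp\Ran\Omega^-$ is shown to lie in $\tilde{\calH}_{\mathrm{sur}}$; and the proof of Lemma \ref{SemiComplete} establishes the stronger statement $\Ran(\Omega^-)\cap\calH_{\mathrm{sur}}^{\limsup}(H)=\{0\}$. Combining these with the inclusion of the previous paragraph yields the chain of linear subspaces
\begin{align*}
(\Ran\Omega^-)^\perp\subset\tilde{\calH}_{\mathrm{sur}}\subset\calH_{\mathrm{sur}}^{\limsup}(H),
\end{align*}
together with the transversality $\calH_{\mathrm{sur}}^{\limsup}(H)\cap\Ran\Omega^-=\{0\}$. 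Because $\Omega^-$ is an isometry on all of $\calH$ by part \ref{existenceTheorem} of Theorem \ref{thms}, $\Ran\Omega^-$ is closed and $\calH=(\Ran\Omega^-)^\perp\oplus\Ran\Omega^-$ orthogonally. Given $\psi\in\calH_{\mathrm{sur}}^{\limsup}(H)$, write $\psi=a+b$ with $a\in(\Ran\Omega^-)^\perp$ and $b\in\Ran\Omega^-$; then $b=\psi-a$ lies in $\calH_{\mathrm{sur}}^{\limsup}(H)$ (a linear subspace containing both $\psi$ and $a$), hence $b\in\calH_{\mathrm{sur}}^{\limsup}(H)\cap\Ran\Omega^-=\{0\}$ and $\psi=a\in(\Ran\Omega^-)^\perp$. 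Thus $\calH_{\mathrm{sur}}^{\limsup}(H)\subset(\Ran\Omega^-)^\perp$, the chain collapses to equalities, and in particular $\calH_{\mathrm{sur}}^{\limsup}(H)=\tilde{\calH}_{\mathrm{sur}}$.

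There is essentially no obstacle here: the analytic content lives entirely in Lemmas \ref{sum} and \ref{SemiComplete}, and the assembly is a purely linear-algebraic squeeze (abstractly: if $V_1\subset V_2\subset V_3$ are subspaces with $V_1=W^\perp$ and $V_3\cap W=\{0\}$, then $V_1=V_2=V_3$). The one point worth checking explicitly is that $\calH_{\mathrm{sur}}^{\limsup}(H)$ is indeed a linear subspace — which it is, being the intersection over $m$ of the kernels of the seminorms $\tilde{\calN}_H^m$ — and note that closedness of $\calH_{\mathrm{sur}}^{\limsup}(H)$ is never used.
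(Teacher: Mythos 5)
Your proposal is correct and follows essentially the same route as the paper: both reduce to the already-established facts that $\tilde{\calH}_{\mathrm{sur}}\subset\calH_{\mathrm{sur}}^{\limsup}(H)$, that $\Ran(\Omega^-)\cap\calH_{\mathrm{sur}}^{\limsup}(H)=\{0\}$ (from the proof of Lemma \ref{SemiComplete}), and that $\calH$ splits as $\Ran(\Omega^-)$ plus $\tilde{\calH}_{\mathrm{sur}}$, then close with the linear-algebraic squeeze. The only cosmetic difference is that you anchor the chain at $(\Ran\Omega^-)^\perp$ by reading it off from the proof of Lemma \ref{sum}, whereas the paper invokes the direct-sum decomposition $\calH=\Ran(\Omega^-)\oplus\tilde{\calH}_{\mathrm{sur}}$ from Lemma \ref{SemiComplete} directly and leaves the final squeeze as "follows immediately"; you have simply made that last step explicit.
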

\begin{proof}
We have shown that
\begin{align*}
 \textrm{Ran}(\Omega^-)\cap \calH_{\textrm{sur}}^{\limsup}(H)=\{0\}
\end{align*}
and since $\calH=\Ran(\Omega^-)\oplus \tilde{\calH}_\textrm{sur}$ and $\tilde{\calH}_{\mathrm{sur}}\subset \calH_{\mathrm{sur}}^{\limsup}(H)$, we have in addition that:
\begin{align*}
    \calH=\Ran(\Omega^-)+\calH_{\textrm{sur}}^{\limsup}(H)
\end{align*}
And the desired equality follows immediately.
\end{proof}
\subsection{Step 3: $\tilde{\calH}_{\mathrm{sur}}=\calH_{\mathrm{sur}}$}\label{SurSpaceChar}
\begin{proof}[proof of part \ref{Complete} of Theorem \ref{thms}]
Recall the definition of the surface subspace:
\begin{align*}
    \calH_{\textrm{sur}}=\{\psi \in \calH \mid \forall v>0, \lim\limits_{t\rightarrow \infty}\|\chi_{S_{vt}}e^{-itH} \psi\|= \|\psi\|\}
\end{align*}
The proof of the desired equality will lean on a non-stationary phase argument:
\begin{lemma}\label{VertClaim}
Fix $v>0$. For any $m<\frac{v}{16}$ and $\delta<\frac{m}{2}$ and for any $\psi \in \calH$ we have
\begin{align*}
   \lim\limits_{t\rightarrow\infty}\|P_\delta(\bbR^{2k}\times B_{vt}^c\times B_m)e^{-itH} \psi \|=0 
\end{align*}
\end{lemma}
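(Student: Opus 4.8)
The plan is to reduce to a non-stationary phase estimate on the transverse factor alone, exactly in the spirit of the computations already used in Lemma \ref{omegalem} and Claim \ref{FreeInter}. First, using Proposition \ref{PNormBound} (the bound of an operator norm by an integral of evolved coherent states), I would write
\begin{align*}
    \|P_\delta(\bbR^{2k}\times B_{vt}^c\times B_m)e^{-itH}\psi\| \leq \|P_\delta(\bbR^{2k}\times B_{vt}^c\times B_m)e^{-itH}\|_{\mathrm{op}}\|\psi\|,
\end{align*}
and then, since $P_\delta(\bbR^{2k}\times E^\perp)=\Id\otimes P_\delta^\perp(E^\perp)$ while $e^{-itH}$ does \emph{not} factor (because of $V$), I would instead compare $e^{-itH}\psi$ with $e^{-itH_0}\psi$ using the existence of $\Omega^-$ — no, more carefully: the cleanest route is to bound $\|P_\delta(\bbR^{2k}\times B_{vt}^c\times B_m)e^{-itH}\psi\|$ by first inserting $\chi_{S_{r_0}}$ is not available here since $e^{-itH}\psi$ need not be localized. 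So the correct approach is the one used for (\ref{MainCookEst}): expand $P_\delta(\bbR^{2k}\times B_{vt}^c\times B_m)$ via its POVM definition, so that the square of the norm is controlled by
\begin{align*}
    (2\pi)^{-d}\iint\limits_{\bbR^{2k}\times B_{vt}^c\times B_m}|\langle \eta_{x,p;\delta}, e^{-itH}\psi\rangle|^2\,dx\,dp,
\end{align*}
and then move $e^{-itH}$ onto the coherent state: $\langle \eta_{x,p;\delta}, e^{-itH}\psi\rangle = \langle e^{itH}\eta_{x,p;\delta},\psi\rangle$, so it suffices to show $\|e^{itH}\eta_{x,p;\delta}\|$ small — which is false, it's $1$.

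Let me reconsider: the key geometric point is that the \emph{region of integration} forces $\|x^\perp\|>vt$ while $\|p^\perp\|<m$, with $v$ much larger than $m$. Thus a coherent state sitting at transverse position $x^\perp$ with $\|x^\perp\|>vt$ and slow transverse momentum $\|p^\perp\|<m$ cannot, under the free transverse evolution, have reached the support region $S_{r_0}$ in time $\leq t$ — and more to the point, under the \emph{backward} evolution it stays far from $S_{r_0}$ where $V$ lives, so Cook's argument (as in Lemma \ref{omegalem}) shows $e^{-itH}\eta_{x,p;\delta}\approx e^{-itH_0}\eta_{x,p;\delta}$ up to an error that is integrable and uniformly small over the region. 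Concretely, I would estimate $\|(e^{itH}-e^{itH_0})\eta_{x,p;\delta}\| \leq \int_0^t \|V e^{i\tau H_0}\eta_{x,p;\delta}\|\,d\tau \leq M\int_0^\infty\|\chi_{S_{r_0}}e^{i\tau H_0}\eta_{x,p;\delta}\|\,d\tau$, and the transverse non-stationary phase estimate (Lemma \ref{NonStationaryRS}) together with a geometric claim analogous to Lemma \ref{geoclaim} — using $\|x^\perp\|>vt$, $\|p^\perp\|<m<v/16$, $\delta<m/2$, so that the backward trajectory $x^\perp-\tau\xi^\perp$ (with $\xi^\perp$ near $p^\perp$, hence $\|\xi^\perp\|<2m$) stays at distance $\gtrsim vt - 2m\tau \gtrsim vt$ from $B_{r_0}$ for $\tau\in[0,t]$ — gives $\|\chi_{S_{r_0}}e^{i\tau H_0}\eta_{x,p;\delta}\| \leq C(vt+\|x^\perp\|)^{-\ell}$, whose $\tau$-integral is $\leq C t^{-\ell+1}$, uniformly over the region. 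This handles the difference term after integrating over $(x^\parallel,p^\parallel,x^\perp,p^\perp)$ in the (finite-measure-in-$p$, polynomially-controlled-in-$x$) region; one splits off $\|x^\parallel\|,\|p^\parallel\|$ large trivially since $\psi\in L^2$, exactly as the $\bbR^{2k}$ factor was handled before.

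It then remains to bound the free term $\|P_\delta(\bbR^{2k}\times B_{vt}^c\times B_m)e^{-itH_0}\psi\|$. Here I reduce to $\psi\in\calD_\alpha$ by density (the operator norm is $\leq 1$, so an $\varepsilon/2$ approximation suffices), and for a simple tensor $\psi^\parallel\otimes\psi^\perp$ with $\supp\widehat{\psi^\perp}\Subset B_\alpha^c$ the operator factors as $\Id\otimes P_\delta^\perp(B_{vt}^c\times B_m)$ acting on $e^{-itH_0^\perp}\psi^\perp$. Choosing $m<\alpha$ and $\delta<\alpha-m$ small, Proposition \ref{psupport} forces $P_\delta^\perp(B_{vt}^c\times B_m)e^{-itH_0^\perp}\psi^\perp=0$ outright (the transverse Fourier support misses $B_m$), so the free term vanishes identically for large enough $\alpha$-dependent parameters; for the general $m<v/16$ one absorbs the restriction by noting that $m$ can be taken as small as one likes in the statement after the $\varepsilon/2$ reduction, or alternatively one runs the same non-stationary-phase bound as in (\ref{SmoothIndToInd})–(\ref{ExistEq}) directly. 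The main obstacle is the first part: setting up the geometric claim carefully enough that the constants $v/16$ and $m/2$ in the hypothesis are exactly what is needed to keep the backward-evolved coherent state away from $S_{r_0}$ for the entire interval $[0,t]$, so that Cook's method gives a bound decaying faster than any polynomial in $t$ — uniformly over the (unbounded in $x^\perp$) integration region. Everything else is a repetition of estimates already carried out in Sections \ref{StateDecom} and \ref{Intersection}.
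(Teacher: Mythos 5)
Your high-level strategy is the same as the paper's: split $e^{-itH}$ into the difference $e^{-itH}-e^{-itH_0}$ (handled by Cook's method) plus the free evolution $e^{-itH_0}$ (handled by density), with the hypotheses $m<v/16$ and $\delta<m/2$ used precisely to keep the backward transverse trajectory of the relevant coherent states away from $S_{r_0}$. The geometric computation you sketch (bounding $\|\xi^\perp\|<2m$ and concluding $\|x^\perp-\tau\xi^\perp-y\|\gtrsim \|x^\perp\|+vt$ for $\tau\in[0,t]$) is also correct and is essentially what the paper carries out. However, there are two genuine gaps in your execution.

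First, your treatment of the difference term does not close. You propose to estimate
$\iint_{A_{vt,m}}|\langle e^{itH}\eta_{x,p;\delta},\psi\rangle-\langle e^{itH_0}\eta_{x,p;\delta},\psi\rangle|^2\,dx\,dp$
by bounding $\|(e^{itH}-e^{itH_0})\eta_{x,p;\delta}\|$ pointwise and then integrating. But $A_{vt,m}=\bbR^{2k}\times B_{vt}^c\times B_m$ is unbounded in the longitudinal variables $(x^\parallel,p^\parallel)$, and the pointwise bound on $\|(e^{itH}-e^{itH_0})\eta_{x,p;\delta}\|$ is independent of those variables, so the integral over $\bbR^{2k}$ diverges. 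Your remark that ``one splits off $\|x^\parallel\|,\|p^\parallel\|$ large trivially since $\psi\in L^2$'' does not address this, because after Cauchy--Schwarz the dependence on $\psi$ factors out. The paper avoids this entirely by proving the operator-norm statement $\|P_\delta(A_{vt,m})(e^{-itH}-e^{-itH_0})\|_{\mathrm{op}}\to 0$ via Cook, and by using the tensor factorization $P_\delta(\bbR^{2k}\times E^\perp)=\id\otimes P_\delta^\perp(E^\perp)$ together with Proposition~\ref{PNormBound} so that the only integration is over the bounded transverse region $B_{vt}^c\times B_m$ (where the $(\|x^\perp\|+vt)^{-\ell}$ decay makes it finite). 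The path you originally dismissed — staying with operator norms — is the one that works.

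Second, your density reduction for the free term is invalid. You reduce to $\varphi\in\calD_\alpha$ and then require $m<\alpha$ (and $\delta<\alpha-m$) so that Proposition~\ref{psupport} forces $P_\delta^\perp(B_{vt}^c\times B_m)e^{-itH_0^\perp}\varphi^\perp=0$. But the Lemma fixes $m$, and $\bigcup_{\alpha>m}\calD_\alpha$ is \emph{not} dense in $\calH$: no $\psi$ whose transverse Fourier transform does not vanish on $B_m$ can be approximated in $L^2$ by functions whose transverse Fourier support misses $B_m$. Your suggestion to ``take $m$ as small as one likes'' does not help, since the sets $B_{vt}^c\times B_m$ are increasing in $m$, so proving the Lemma for small $m$ does not imply it for the given $m$. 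The paper instead reduces by density to $\psi$ with \emph{compact spatial support} $\supp\psi\subset S_R$ (which is genuinely dense), and then uses the same operator-norm estimate $\|\chi_{S_R}e^{itH_0}P_\delta(A_{vt,m})\|_{\mathrm{op}}\to 0$ — no condition relating $m$ to the density parameter is needed. Your fallback suggestion of running a direct non-stationary-phase bound could in principle be made to work, but it would have to be formulated without the $m<\alpha$ restriction; as written it does not deliver the Lemma for arbitrary $m<v/16$.
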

\begin{proof}
Denote $A_{vt,m}=\bbR^{2k}\times B_{vt}^c\times B_m$,  then we can write
\begin{align*}
    &\|P_\delta(A_{vt,m})e^{-itH} \psi\|\leq\|P_\delta(A_{vt,m})(e^{-itH}-e^{-itH_0})\psi\|+\|P_\delta(A_{vt,m})e^{-itH_0}\psi\|
\end{align*}
so it suffices to prove that
\begin{align}
    \|P_\delta(A_{vt,m})(e^{-itH}-e^{-itH_0})\|_{\textrm{op}}\xrightarrow{t\rightarrow\infty}0\label{eqvert:2}
\end{align}
and
\begin{align}
   \slim\limits_{t\rightarrow\infty} P_\delta(A_{vt,m})e^{-itH_0}=0 \label{eqvert:3}
\end{align}
As before, both claims will follow from an estimate on the free propagation $e^{-itH_0}$.
\begin{claim}
With all parameters as above, for any $R>0$ and $\ell>0$, there exists $C>0$, independent of $t,m$ and $v$ such that
\begin{align}\label{mainestvert}
    \|\chi_{S_R} e^{-iH_0t}P_\delta(\bbR^{2k}\times B_{v|t|}^c\times B_m)\|_\textrm{op}\leq C(v|t|)^{-\ell}
\end{align}
for all $|t|>\frac{8R}{v}$.
\end{claim}
\begin{proof}
We will first prove the claim for $t>\frac{8R}{v}$. Note that
\begin{align*}
    \chi_{S_R}e^{-itH_0}P_\delta(\bbR^{2k}\times B_{vt}^c\times B_m)=e^{-iH_0^\parallel t}\otimes (\chi_{B_R}e^{-iH_0^\perp t}P_{\delta}(B_{vt}^c\times B_m))
\end{align*}
so that
\begin{align*}
    \|\chi_{S_R} e^{-iH_0t}P_\delta(\bbR^{2k}\times B_{vt}^c\times B_m)\|_\textrm{op}=\|\chi_{B_R}e^{-iH_0^\perp t}P_{\delta}(B_{vt}^c\times B_m)\|_\textrm{op}
\end{align*}
From Proposition \ref{PNormBound} we have that
\begin{align}\label{HSurPNormBound}
    \|\chi_{B_{R}}e^{-itH_0^\perp}P_\delta(B_{vt}^c\times B_m)\|_{\textrm{op}}^2&\leq (2\pi)^{-d} \iint\limits _{B_{vt}^c\times B_m}\|\chi_{B_{R}}e^{-itH_0^\perp}\eta^\perp_{x,p;\delta }\|^2\,dx\,dp
\end{align}
Let $ x\in B_{vt}^c,p\in B_m,y \in B_R$, and $\xi\in \calO:=\supp \hat{\eta}_{x,p;\delta }+B_{\delta}$. For $\xi\in \calO$, we may write $\xi =p+p'$, where $p'\in B_{2\delta}$, which implies
\begin{align*}
    \|\xi\|&\leq m+2\delta \leq\frac{v}{8}
\end{align*}
It follows that
\begin{align*}
    \|x+\xi t-y\|&\geq
    \|x\|-\|y\|-t\|\xi\|\geq \|x\|-R-\frac{v}{8}t\geq \frac{1}{16}(\|x\|+vt)
\end{align*}
where we have used that $\|x\|>vt>8R>\|y\|$. \par 
As in the proof of Lemma \ref{omegalem}, we may apply Lemma \ref{NonStationaryRS} to see that
for any $\ell>0$ there is some $C$ such that
\begin{align*}
    |e^{-itH_0^\perp}\eta^\perp_{x,p;\delta}(y)|\leq C(\|x\|+vt)^{-\ell}
\end{align*}
uniformly in $x,p$ and $y$ as above and $t\geq0$. Since $R$ is fixed
\begin{align*}
\|\chi_{B_{R}}e^{-itH_0^\perp}\eta^\perp_{x,p;\delta }\|^2\leq C(\|x\|+vt)^{-\ell} 
\end{align*}
uniformly in $x$ and $p$, therefore we may integrate (\ref{HSurPNormBound})
 to find that
 \begin{align*}
    \|\chi_{B_{R}}e^{-itH_0^\perp }P_\delta(B_{vt}^c\times B_m)\|_{\textrm{op}}^2&\leq C(vt)^{-\ell+d-k}
\end{align*}
Furthermore, since $B^c_{vt}\times B_m$ is invariant under $(x,p)\mapsto (x,-p)$, the claim holds for $t<-\frac{8R}{v}$ as well.
\end{proof}
 The limit (\ref{eqvert:2}), as in Lemma \ref{Inlemma}, follows from the bound
\begin{align*}
    \|P_\delta(A_{vt,m})(e^{-itH}-e^{-itH_0})\|\leq M\int\limits_0^t \|\chi_{ S_{r_0}}e^{i\tau H_0}P_\delta(A_{vt,m})\,d\tau\|_\textrm{op}
\end{align*}
and the above claim. The limit (\ref{eqvert:3}) may be established by noting that for $\psi$ such that $\supp \psi\subset S_R$ we can write
\begin{align*}
    &\|P_{\delta }(A_{vt,m})e^{-iH_0t}\psi \|=\|P_{\delta }(A_{vt,m})e^{-iH_0t_n}\chi_{S_R}\psi \|\leq \|\chi_{S_R} e^{iH_0t_n}P_{\delta }(A_{vt,m})\|_\textrm{op} \|\psi\|\xrightarrow{t\rightarrow \infty }0
\end{align*}
by the above. Since such $\psi$ are dense, the lemma is proven.
\end{proof} 
\begin{proposition}
 \begin{align*}
 \calH_{\mathrm{sur}}=\tilde{\calH}_{\mathrm{sur}}
 \end{align*}
\end{proposition}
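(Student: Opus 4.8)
The plan is to prove the two inclusions $\tilde{\calH}_{\mathrm{sur}}\subset\calH_{\mathrm{sur}}$ and $\calH_{\mathrm{sur}}\subset\tilde{\calH}_{\mathrm{sur}}$ separately, using Lemma \ref{VertClaim} as the bridge between the phase-space description (controlling $P_\delta(W_{n,m;\mathrm{far}})$) and the spatial description (controlling $\chi_{S_{vt}}$). The key geometric observation is that the ``surface'' region $W_{n,m;\mathrm{sur}}$ decomposes as a piece with $x^\perp\in B_n$ (spatially near the surface) plus a piece with small transverse momentum $p^\perp\in B_m$, and it is exactly the latter that is handled by Lemma \ref{VertClaim}, while the former is handled by the convolution estimates (via Proposition \ref{SpaceLoc}) of the type already used in the proof of Claim \ref{FreeInter}.

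First I would show $\tilde{\calH}_{\mathrm{sur}}\subset\calH_{\mathrm{sur}}$. Fix $\psi\in\tilde{\calH}_{\mathrm{sur}}$ and $v>0$; I want $\|\chi_{S_{vt}^c}e^{-itH}\psi\|\to 0$. Write $\chi_{S_{vt}^c}e^{-itH}\psi$ using a partition of unity in phase space: bound $\|\chi_{S_{vt}^c}P_\delta(E)\|_{\mathrm{op}}$ for the various pieces $E$ of $\bbR^{2d}$. On $W_{n,m;\mathrm{far}}$ one uses that $\psi\in\tilde{\calH}_{\mathrm{sur}}$ (after sending $n\to\infty$, $\delta\to0$ and choosing $m$ appropriately) to get a contribution below any $\varepsilon$. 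On the piece $\bbR^{2k}\times B_n\times\bbR^{d-k}$ (transverse position near the origin at the coherent-state scale) the operator $\chi_{S_{vt}^c}P_\delta(\bbR^{2k}\times B_n\times\bbR^{d-k})$ has small norm for $t$ large because a coherent state centered in $B_n$ has mass in $S_{vt}^c$ that is $O(n^{-\ell})$ uniformly once $vt\gg n$ — this is the same Schwartz-tail convolution bound as in \eqref{SmoothIndToInd}. The remaining piece is $\bbR^{2k}\times B_n^c\times B_m$, i.e.\ exactly the region $A_{vt',m}$-type set in Lemma \ref{VertClaim} — well, more precisely for $t$ large $B_n^c\supset B_{vt}^c$ is false, so here one instead directly invokes Lemma \ref{VertClaim} with its region $\bbR^{2k}\times B_{vt}^c\times B_m$; I'd need to further split $B_n^c=(B_n^c\cap B_{vt})\sqcup B_{vt}^c$, treating $B_n^c\cap B_{vt}$ again by the near-surface convolution bound and $B_{vt}^c\times B_m$ by Lemma \ref{VertClaim}. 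Summing the three contributions, taking $t\to\infty$ then $n\to\infty$ then $\varepsilon\to0$, gives $\psi\in\calH_{\mathrm{sur}}$.

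For the reverse inclusion $\calH_{\mathrm{sur}}\subset\tilde{\calH}_{\mathrm{sur}}$, fix $\psi\in\calH_{\mathrm{sur}}$ and $m>0$; I want $\calN_H^m(\psi)=0$, i.e.\ $\sup_{t\ge0}\|P_\delta(W_{n,m;\mathrm{far}})e^{-itH}\psi\|\to0$ after $n\to\infty$, $\delta\to0$. Here the definition of $\calH_{\mathrm{sur}}$ controls $\|\chi_{S_{vt}}e^{-itH}\psi\|\to\|\psi\|$ for \emph{every} $v$, so equivalently $\|\chi_{S_{vt}^c}e^{-itH}\psi\|\to0$. Choose $v$ small relative to $m$ (so Lemma \ref{VertClaim}'s hypothesis $m<v/16$ is... in the wrong direction — note rather one picks $v$ with $16m<v$, which is fine since $v$ is at our disposal). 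Split $W_{n,m;\mathrm{far}}=\bbR^{2k}\times B_n^c\times B_m^c$ by position: $x^\perp\in S_{vt}$ versus $x^\perp\in S_{vt}^c$. For the part in $S_{vt}^c$, use $\|P_\delta(\cdot)\chi_{S_{vt}^c}\|\le1$ (plus a convolution tail) together with $\|\chi_{S_{vt}^c}e^{-itH}\psi\|\to0$ from $\psi\in\calH_{\mathrm{sur}}$. For the part with $x^\perp\in S_{vt}$ but $p^\perp\in B_m^c$ (large transverse momentum yet still near the surface in position) — this is the genuinely dynamical piece: a state localized in $S_{vt}$ with transverse momentum $\gtrsim m$ must, after time $t$, have moved a transverse distance $\sim mt\gg vt$ (when $m\gg v$), hence escaped $S_{vt}$; so its overlap with $e^{-itH}\psi$ is controlled by feeding it back through the evolution and applying $\chi_{S_{vt}^c}$, i.e.\ one bounds $\|\chi_{S_{vt}}P_\delta(\bbR^{2k}\times S_{vt}\times B_m^c)\,\chi_{S_{vt}}\|$-type quantities or uses an Egorov/non-stationary phase estimate on $e^{-itH_0}$ of the same flavor as Lemma \ref{VertClaim} but with momentum bounded below. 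In fact this is precisely the mechanism already packaged in Lemma \ref{VertClaim} after swapping the roles — one estimates $\|\chi_{S_{vt}}e^{-itH}P_\delta(\bbR^{2k}\times \bbR^{d-k}\times B_m^c)\|$ would be more natural but that isn't a statement we have, so I expect to instead run the Cook-plus-non-stationary-phase argument of Lemma \ref{VertClaim} a second time with the set $B_{vt}\times B_m^c$.

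\textbf{The main obstacle} I anticipate is the $x^\perp\in S_{vt}$, $p^\perp\in B_m^c$ piece in the reverse inclusion: Lemma \ref{VertClaim} as stated covers small momentum ($B_m$) and arbitrary-to-large position ($B_{vt}^c$), which is the ``escaping slowly'' regime, whereas the reverse inclusion needs the complementary ``fast transverse momentum forces escape'' estimate. The cleanest fix is to observe that $W_{n,m;\mathrm{far}}\subset \bbR^{2k}\times \bbR^{d-k}\times B_m^c$ and prove a companion to Lemma \ref{VertClaim}: for $v>16(m+2\delta)$, $\lim_{t\to\infty}\|\chi_{S_{vt}}e^{-itH}P_\delta(\bbR^{2k}\times B_{vt}^c{}^{\,c}\times B_m^c)\| $ — no: rather, since a coherent state with $p^\perp\in B_m^c$ centered at $x^\perp$ with $\|x^\perp\|\le vt$ has, under $e^{-itH_0}$, classically allowed region around $x^\perp+t\,p^\perp$ with $\|x^\perp+tp^\perp\|\ge tm - vt \gg vt$ when $m>2v$ (choosing $v<m/2$, compatibly with $v>16m$ — \emph{these two are incompatible}, so one cannot simply dictate both; instead note we only need \emph{some} $v$ making $\|\chi_{S_{vt}^c}e^{-itH}\psi\|$ small, then \emph{separately} for that same $v$ a possibly much larger momentum cutoff is forced, and one iterates). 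Resolving this bookkeeping — interleaving the limits in $v$, $m$, $n$, $\delta$, $t$ in the right order so that each of the three phase-space pieces is killed — is where the real care lies; everything else reduces to the Schwartz-tail and non-stationary-phase estimates already developed above.
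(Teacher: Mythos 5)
Your first inclusion, $\tilde{\calH}_{\mathrm{sur}}\subset\calH_{\mathrm{sur}}$, follows essentially the paper's route: split $\bbR^{2d}$ into $W_{n,m;\mathrm{far}}$, $\bbR^{2k}\times B_n\times\bbR^{d-k}$, and $\bbR^{2k}\times B_n^c\times B_m$, control the first by the $\tilde{\calH}_{\mathrm{sur}}$ hypothesis (with $m<v/16$), the second by the Schwartz-tail convolution bound as in \eqref{SmoothIndToInd}, and the third by Lemma \ref{VertClaim}. The only small mismatch is that the paper works with $P_\delta(\bbR^{2k}\times B_{vt}\times\bbR^{d-k})$ and $P_\delta(\bbR^{2k}\times B_{vt}^c\times B_m)$ directly (i.e.\ uses $n=vt$ throughout), which avoids the extra split $B_n^c=(B_n^c\cap B_{vt})\sqcup B_{vt}^c$ you introduce; this is a cosmetic difference.

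For the reverse inclusion $\calH_{\mathrm{sur}}\subset\tilde{\calH}_{\mathrm{sur}}$, your direct phase-space argument has a genuine gap, and you correctly identify it yourself: the region $x^\perp\in B_{vt}\cap B_n^c$, $p^\perp\in B_m^c$ cannot be handled with the tools you have. Your heuristic (``large transverse momentum forces escape from $S_{vt}$'') would require a propagation estimate for $e^{-itH}\psi$ that is not available: Lemma \ref{VertClaim} and the Cook argument control $e^{-itH_0}$ or $e^{-itH}-e^{-itH_0}$ applied to a nice \emph{coherent state}, not to the unknown state $e^{-itH}\psi$, and one cannot run the evolution backward through the unknown dynamics. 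You also note that the needed choices of $v$ relative to $m$ pull in opposite directions, and your proposed resolution (``iterate'' / ``interleave the limits'') is not a proof. The paper sidesteps all of this: it does \emph{not} prove this inclusion by a phase-space computation at all. Instead it shows directly that $\calH_{\mathrm{sur}}\perp\Omega^-(\calD_\alpha)$ — by picking $v<2\alpha$, using $\|\chi_{S_{vt}^c}e^{-itH}\psi\|\to 0$ from $\psi\in\calH_{\mathrm{sur}}$ together with the non-stationary-phase bound $\|\chi_{S_{vt}}e^{-itH_0}\tilde\varphi\|\to 0$ for $\tilde\varphi\in\calD_\alpha$ — and then invokes Lemma \ref{SemiComplete}, the already-proved direct sum $\calH=\Ran(\Omega^-)\oplus\tilde{\calH}_{\mathrm{sur}}$, to conclude $\calH_{\mathrm{sur}}\subset(\Ran\Omega^-)^\perp=\tilde{\calH}_{\mathrm{sur}}$. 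That orthogonality argument is the missing idea here; without it (or a genuinely new propagation estimate) the second inclusion does not close.
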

 
\begin{proof}
We will start by showing that $\tilde{\calH}_{\mathrm{sur}}\subset \calH_{\textrm{sur}}$.\par
For this, choose $\psi\in\tilde{H}_\mathrm{sur}$ and fix $v>0$ for which we must show
\begin{align*}
    \lim_{t\rightarrow\infty}\|\chi_{S_{vt}}e^{-itH}\psi\|=\|\psi\|
\end{align*}
To see this, fix $\varepsilon>0$ and choose $m<\frac{v}{16}$. Since $\psi \in \tilde{\calH}_{\mathrm{sur}}$, we know that for this $m$ there is some $\delta$ such that 
\begin{align*}
    \limsup\limits_{n\rightarrow\infty} \sup\limits_{t\geq 0 }\|P_\delta(W_{n,m;\textrm{far}})e^{-itH} \psi\|<\varepsilon
\end{align*}
and thus for $T_0$ large enough, if $t>T_0$ then
\begin{align*}
    \sup\limits_{\tau\geq 0 }\|P_\delta(W_{vt,m;\textrm{far}})e^{-i\tau H} \psi\|<\varepsilon
\end{align*}
Recalling that $W_{n,m;\textrm{sur}}=\bbR^{2k}\times B_{n}\times \bbR^{d-k} \cup \bbR^{2k}\times B_{n}^c\times B_m$, we now write
\begin{align*}
    &\|\psi\|= \|(P_\delta(W_{vt,m;\textrm{far}})+P_\delta(\bbR^{2k}\times B_{vt}\times \bbR^{d-k})+P_\delta(\bbR^{2k}\times B_{vt}^c\times B_m))e^{-itH} \psi\|\\
    &\leq \sup\limits_{\tau\geq 0}\|P_\delta(W_{vt,m;\textrm{far}})e^{-i\tau H} \psi\|+\|P_\delta(\bbR^{2k}\times B_{vt}\times \bbR^{d-k})e^{-itH} \psi\|+\|P_\delta(\bbR^{2k}\times B_{vt}^c\times B_m)e^{-itH} \psi\|\\
    &\leq\varepsilon+\|P_\delta(\bbR^{2k}\times B_{vt}\times \bbR^{d-k})e^{-itH} \psi\|+\|P_\delta(\bbR^{2k}\times B_{vt}^c\times B_m)e^{-itH}\psi\|
\end{align*}
for all $t>T_0$.
By Proposition \ref{SpaceLoc}, we may estimate the second term
\begin{align*}
    &\|P_\delta(\bbR^{2k}\times B_{vt}\times \bbR^{d-k})e^{-itH} \psi\|=\|(\eta_\delta* \chi_{S_{vt}})e^{-itH} \psi\|\\
    &\leq \|(\eta_\delta* \chi_{S_{vt}})\chi_{S_{2vt}}e^{-itH} \psi\|+\|(\eta_\delta* \chi_{S_{vt}})\chi_{S_{2vt,m}^c}e^{-itH} \psi\|\\
    &\leq \|\chi_{S_{2vt}}e^{-itH}\psi\|+\|(\eta_\delta* \chi_{S_{vt}})\chi_{S_{2vt}^c}\|_{\textrm{op}}\|\psi\|
\end{align*}
Using (\ref{SmoothIndToInd}), we see that for some $\ell>0$
\begin{align*}
    \|\psi\|\leq \varepsilon+\|\chi_{S_{2vt}}e^{-itH}\psi\|+C(vt)^{-\ell}+\|P_\delta(\bbR^{2k}\times B_{vt}^c\times B_m)e^{-itH}\psi\|
\end{align*}
By Lemma \ref{VertClaim}, taking the limit as $t\rightarrow\infty$ implies that
\begin{align*}
\|\psi\|&\leq\varepsilon+\lim_{t\rightarrow\infty}\|\chi_{S_{2vt}}e^{-itH}\psi\|
\end{align*}
Since $\varepsilon$  was arbitrary and $\|\chi_{S_{2vt}}e^{-itH}\psi\|\leq\|\psi\|$ we may conclude that
\begin{align*}
    \lim\limits_{t\rightarrow\infty } \|\chi_{S_{2vt}}e^{-itH}\psi\|=\|\psi\|
\end{align*}\par
To complete the proof, we will show that $ \calH_{\textrm{sur}}\perp \textrm{Ran}(\Omega^-)$, which implies that $\calH_\textrm{sur}\subset \tilde{\calH}_\textrm{sur}$. In fact, we will show that $\calH_{\textrm{sur}}\perp \Omega^-(\calD_{\alpha})$ for any $\alpha>0$ and conclude by density.\par
Let $\psi \in\calH_{\textrm{sur}}, \varphi\in \Omega^-(\calD_{\alpha})$. Note that the definition of $\calH_{\textrm{sur}}$ implies 
\begin{align}\label{lim:Hsur2}
   \lim\limits_{t\rightarrow\infty}\|\chi_{S_{vt}^c}e^{-itH}\psi\|=0
\end{align}
 For any $v>0$ and any $t>0$, by writing
\begin{align*}
    |\braket{\psi,\varphi}|&\leq |\braket{\chi_{S_{vt}^c}e^{-itH}\psi,e^{-itH}\varphi}|+|\braket{e^{-itH}\psi,\chi_{S_{vt}}e^{-itH}\varphi}|\\
    &\leq \|\chi_{S_{vt}^c}e^{-itH}\psi\|\|\varphi \|+\|\psi\|\|\chi_{S_{vt}}e^{-itH}\varphi\|
\end{align*}
and then taking a $\lim$ as $t\rightarrow\infty$ we see that
\begin{align}\label{ineq:Hsur2}
     |\braket{\psi,\varphi}|&\leq \lim\limits_{t\rightarrow\infty}\left[\|\chi_{S_{vt}^c}e^{-itH}\psi\|\|\varphi \|+\|\psi\|\|\chi_{S_{vt}}e^{-itH}\varphi\|\right]
\end{align}
Now, choose $v<2\alpha$, since $\varphi\in \Omega^-(\calD_{\alpha})$, there is some $\tilde{\varphi}\in \calD_{\alpha} $ such that
\begin{align*}
    \|e^{-itH}\varphi- e^{-itH_0}\tilde{\varphi}\|\xrightarrow{t\rightarrow \infty }0
\end{align*}
 Next, because
 \begin{align*}
    &\tilde{\varphi}=\sum_{i=1}^n \tilde{\varphi}^\parallel_i\otimes \tilde{\varphi}^\perp_i, \,\supp \widehat{\tilde{\varphi}^\perp_i}\Subset B_{\alpha}^c \\
     &x^\perp\in \chi_{B_{vt}} \implies \frac{\|x^\perp\|}{t}\leq v<2\alpha\implies
     \frac{x^\perp}{t}\not \in B_{2\alpha}^c=\{2\xi\mid \xi \in  \supp \widehat{\tilde{\varphi}^\perp_i}\}
 \end{align*}
 we may apply non-stationary phase as in the proof of Claim \ref{FreeInter} to get that for any $\ell>0$
 \begin{align*}
    \|\chi_{S_{vt}}e^{-itH_0}\tilde{\varphi}\|^2\leq C \int\limits_{B_{vt}}(1+\|x\|+t)^{-\ell}\,dx
\end{align*}
where $C$ does not depend $t$. In particular, for any $\ell$ large enough
\begin{align*}
    \|\chi_{S_{vt}}e^{-itH_0}\tilde{\varphi}\|^2\leq C(1+t)^{-\ell+d}\xrightarrow{t\rightarrow\infty }0
\end{align*}
So we can conclude that 
\begin{align*}
    \|\chi_{S_{vt}}e^{-itH}\varphi\|\xrightarrow{t\rightarrow\infty }0 
\end{align*}
Applying this to inequality (\ref{ineq:Hsur2}) combined with equation (\ref{lim:Hsur2}) we conclude that
\begin{align*}
    \braket{\psi,\varphi}=0
\end{align*}
which completes the proof.
\end{proof}
This proposition with Lemma \ref{SemiComplete} prove part \ref{Complete} of Theorem \ref{thms}, or in other words asymptotic completeness.
\end{proof}
\section{Examples}\label{ExamplesSection}
Having established our main theorem, we analyze a few special cases to see some of the variety of surface states that may occur. For this purpose, it will be convenient to work with the sufficient condition for being a surface state given in the following proposition.
\begin{proposition}\label{Hsur'}
In the notation of Section \ref{intro}
\begin{align*}
    \calH_\mathrm{sur}'\subset \calH_\mathrm{sur}
\end{align*}
\end{proposition}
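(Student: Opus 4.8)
\textbf{Proof proposal for Proposition \ref{Hsur'}.} The plan is a direct comparison of the two definitions, using only monotonicity of the sets $S_R$ and the Pythagorean identity $\|\chi_A\phi\|^2+\|\chi_{A^c}\phi\|^2=\|\phi\|^2$. Fix $\psi\in\calH_\mathrm{sur}'$ and, to show $\psi\in\calH_\mathrm{sur}$, fix $v>0$ and $\varepsilon>0$. First I would unpack the hypothesis: by definition of $\calH_\mathrm{sur}'$ there is an $R_0>0$ such that
\begin{align*}
    \sup_{t\geq 0}\|\chi_{S_{R_0}^c}e^{-itH}\psi\|<\varepsilon,
\end{align*}
and hence, since $e^{-itH}$ is unitary, $\|\chi_{S_{R_0}}e^{-itH}\psi\|^2=\|\psi\|^2-\|\chi_{S_{R_0}^c}e^{-itH}\psi\|^2\geq \|\psi\|^2-\varepsilon^2$ for every $t\geq 0$.

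Next I would use the obvious inclusion $S_{R_0}\subset S_{vt}$, valid whenever $vt\geq R_0$. For such $t$, $\chi_{S_{vt}}\geq \chi_{S_{R_0}}$ pointwise, so
\begin{align*}
    \sqrt{\|\psi\|^2-\varepsilon^2}\leq \|\chi_{S_{R_0}}e^{-itH}\psi\|\leq \|\chi_{S_{vt}}e^{-itH}\psi\|\leq \|e^{-itH}\psi\|=\|\psi\|.
\end{align*}
Letting $t\to\infty$ and then $\varepsilon\to0$ gives $\lim_{t\to\infty}\|\chi_{S_{vt}}e^{-itH}\psi\|=\|\psi\|$; since $v>0$ was arbitrary, $\psi\in\calH_\mathrm{sur}$.

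There is essentially no obstacle here: the content is entirely in the observation that the ``$\sup_t$'' built into the definition of $\calH_\mathrm{sur}'$ is strong enough to survive the replacement of the fixed radius $R$ by the growing radius $vt$, once $t$ is large relative to the $R$ furnished by $\varepsilon$. The only point worth stating carefully is the passage from control of $\chi_{S_R^c}$ to control of $\chi_{S_R}$ via unitarity and Pythagoras, and the monotone-inclusion step $S_{R_0}\subset S_{vt}$. No compactness, no non-stationary phase, and no properties of $V$ beyond self-adjointness of $H$ are needed.
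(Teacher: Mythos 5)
Your proposal is correct and follows essentially the same elementary route as the paper: the observation that the $\sup_{t\geq 0}$ in the definition of $\calH_\mathrm{sur}'$ is strong enough to control $\|\chi_{S_{vt}^c}e^{-itH}\psi\|$ for large $t$. The paper phrases this directly in terms of the complement $\chi_{S_R^c}$ (substituting $R=vt$ and using monotonicity of the $\sup$), whereas you pass to $\chi_{S_{R_0}}$ via the Pythagorean identity and an $\varepsilon$-$R_0$ unpacking; the two are the same observation, and the only point worth tightening in your write-up is that ``letting $t\to\infty$'' should formally be a $\liminf$ before you invoke the trivial upper bound $\|\psi\|$ to conclude the limit exists.
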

\begin{proof}
Recall the definition of $\calH'_{\textrm{sur}}$:
\begin{align*}
    \calH_\textrm{sur}'(H)=\{\psi\mid\lim_{R\rightarrow\infty} \sup_{t\geq 0} \|\chi_{S_R^c}e^{-itH}\psi\|=0\}
\end{align*}
We note that for any $v>0$, $\psi \in \calH$, and $t>0$ we have
\begin{align*}
    \|\chi_{S_{vt}^c}e^{-itH}\psi\|\leq \sup_{\tau\geq 0} \|\chi_{S_{vt}^c}e^{-i\tau H}\psi\|
\end{align*}
Since this is true for any $t>0$ we can take $\lim\limits_{t\rightarrow\infty}$ on both side to get
\begin{align*}
    \lim\limits_{t\rightarrow\infty}\|\chi_{S_{vt}^c}e^{-itH}\psi\|\leq \lim\limits_{t\rightarrow\infty} \sup_{\tau\geq 0} \|\chi_{S_{vt}^c}e^{-i\tau H}\psi\|=\lim\limits_{R\rightarrow\infty} \sup_{\tau\geq 0} \|\chi_{S_{R}^c}e^{-i\tau H}\psi\|
\end{align*}
So if $\psi \in \calH_\textrm{sur}'$, the last term is $0$, and therefore $\psi \in \calH_{\textrm{sur}}$, as needed.

\end{proof}
\subsection{Surface States in $\sigma_{\mathrm{c}}(H)$}
While it is clear that eigenfunctions of $H$ are in $\calH_{\textrm{sur}}'$, and so from the above proposition are surface states, it is natural to ask whether there may also be surface states in the continuous subspace. We answer this in the affirmative via a simple example.\par
Let $d=2$ and consider a potential which depends on the $x$ coordinate only:
\begin{align*}
    &V(x,y)=V_0(x)\\
    &\supp V_0 \subset \{|x|<1\}
\end{align*}
Then we may write
\begin{align*}
    H:=-\frac{\partial^2}{\partial x^2}-\frac{\partial^2}{\partial y^2} +V(x,y)=H_x\otimes\id+\id\otimes H_y
\end{align*}
where $H_x$ and $H_y$ are the one-dimensional operators
\begin{align*}
    &H_x=-\frac{d^2}{dx^2} +V_0(x)\\
    &H_y=-\frac{d^2}{dy^2}
\end{align*}
Assume that $H_x$ has an eigenvalue $E_0$ with corresponding eigenfunction $\psi_0$. For any $\psi_1(y)\in L^2(\bbR)$, we claim that
\begin{align}
    \varphi(x,y):=\psi_0(x)\psi_1(y)
\end{align}
is in $\calH_{\textrm{sur}}(H)$.\par
To see this, note that since $H=H_x\otimes\id+\id\otimes H_y$ we may write
\begin{align*}
    e^{-itH}\varphi=e^{-itH_x}\psi_0\otimes e^{-itH_y}\psi_1=e^{-itE_0}\psi_0\otimes e^{-itH_y}\psi_1
\end{align*}
so that for all $t$
\begin{align*}
    &\|\chi_{S_n^c} e^{-itH} \varphi\|^2=\int\limits_{\bbR} \int\limits_{|x|>n}|e^{-itE_0}\psi_0(x)e^{-itH_y}\psi_1(y)|^2\,dx\,dy\\
    &=\int\limits_{\bbR} |e^{-itH_y}\psi_1(y)|^2\,dy\int\limits_{|x|>n}|e^{-itE_0}\psi_0(x)|^2\,dx\\
    &=\|e^{-itH_y}\psi_1\|^2 \int\limits_{|x|>n}|\psi_0(x)|^2dx \xrightarrow{n\rightarrow \infty }0
\end{align*}
 Therefore, by Proposition \ref{Hsur'} we conclude that $\varphi\in \calH_{\textrm{sur}}$.\par
Furthermore, if $\psi_1 \in \calH_{\textrm{ac}}(H_y)$, as $H_y$ is purely ac, we can guarantee that $\varphi\in \calH_\textrm{ac}(H)$. This is because for self-adjoint operators of the form $D=A\otimes \id+\id\otimes B$, the spectral measure of $f(x)g(y)$ with respect to $D$ is given by the convolution of the spectral measure of $f$ with respect to $A$ with the spectral measure of $g$ with respect to $B$ (see \cite{fox1975spectral} for more details).\par
\begin{remark}
In \cite{Richard}, Richard generalized this example by introducing  a class of ``Cartesian potentials'' that, roughly speaking, attain different limits in different coordinate directions. For instance, we may consider potentials of the form $V(x,y)=V_0(x)V_1(y)$, where $V_0(x)$ is as above and $V_1(y)$ decays to a limit in a short-range way: there exists some $c\in\bbR$ such that 
\begin{align*}
    \|\chi(|y|>R)(V_1(y)-c)\|_\textrm{op}\in L^1(R)
\end{align*}
Writing \begin{align*}
    H_1&=-\Delta +cV_0(x)\\
    H_x&=-\frac{d^2}{dx^2}+cV_0(x)
\end{align*}
one may infer from Theorem 1.2 in \cite{Richard} that
\begin{align*}
    \calH=\textrm{Ran}(\Omega ^-)\oplus \calH_{\textrm{pp}}(H) \oplus \textrm{Ran}(\tilde{\Omega}^-)
\end{align*}
where
\begin{align*}
    \tilde{\Omega}^-=\slim\limits_{t\rightarrow\infty }e^{itH}e^{-itH_1}(\id \otimes P_{\calH_{\textrm{pp}}}(H_x))
\end{align*}
By an argument similar to the one given for the above example, it is easy to see that $\textrm{Ran}(\tilde{\Omega}^-)\subset \calH_{\textrm{sur}}'$ so that $\calH_{\textrm{sur}}'= \calH_{\textrm{sur}}$. 
\end{remark}

\subsection{Potentials Periodic in All But One Direction}
Now suppose that $k=d-1$ and that $V$ is \emph{periodic in all but one direction} in that there are linearly independent vectors $a_1,\ldots a_{d-1}\subset \bbR^{d}$ such that $V(x+a_i)=V(x)$ for all $i$ and $x\in\bbR^d$. The additional structure of such potentials allows us to give a simpler characterization of the surface states. The proof below can be gleaned from the analysis of such systems in \cite{DaviesSimon}, but we include a proof for the sake of completeness. A similar proof for a different system may be found in \cite{saenz1981quantum}.
\begin{theorem}\label{periodicThm}
Suppose that $V$ is periodic in all but one direction. Then 
\begin{align*}
    \calH=\Ran \Omega^-\oplus \calH_\mathrm{sur}'
\end{align*}
In particular, $\calH_\mathrm{sur}'=\calH_\mathrm{sur}$.
\end{theorem}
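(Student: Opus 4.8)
The plan is to diagonalize the system by the Bloch--Floquet transform in the periodic directions, reducing everything to a family of one-dimensional (multichannel) scattering problems in the transverse variable $x^\perp=x_d$. First observe that the periods must be ``horizontal'': if $V(x+a_i)=V(x)$ and $\supp V\subset S_{r_0}=\{\|x^\perp\|\le r_0\}$, then any point where $V\ne 0$ would stay in $S_{r_0}$ under all iterated translations by $a_i$, which forces $a_i^\perp=0$. Hence $a_1^\parallel,\dots,a_{d-1}^\parallel$ span a full-rank lattice $\Gamma\subset\bbR^{d-1}$ and $V$ is $\Gamma$-periodic in $x^\parallel$. Applying the Bloch--Floquet transform in $x^\parallel$ writes $\calH\cong\int^\oplus_{\mathbb{B}}\calH_\theta\,d\theta$ over the Brillouin zone $\mathbb{B}$, with $\calH_\theta=L^2(\Omega_\Gamma\times\bbR)$ for $\Omega_\Gamma$ a fundamental cell, in such a way that \emph{both} $H$ and $H_0=-\Delta$ are diagonalized: $H=\int^\oplus H(\theta)\,d\theta$, $H_0=\int^\oplus H_0(\theta)\,d\theta$, where $H_0(\theta)$ is $-\Delta$ with $\theta$-quasiperiodic boundary conditions and $H(\theta)=H_0(\theta)+V$. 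This uses crucially that $H_0$ is itself $\Gamma$-periodic.

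\textbf{The fibers.} Let $h^\parallel(\theta)$ be the longitudinal Laplacian on $L^2(\Omega_\Gamma)$ with $\theta$-quasiperiodic conditions; it has compact resolvent with eigenvalues $\mu_0(\theta)\le\mu_1(\theta)\le\cdots\to\infty$, and $H_0(\theta)=h^\parallel(\theta)\otimes\id+\id\otimes(-\partial_{x_d}^2)$ has purely absolutely continuous spectrum $[\mu_0(\theta),\infty)$. Since $V$ is bounded with compact support in $x_d$, it is a relatively compact, short-range (in $x_d$) perturbation of $H_0(\theta)$, so $H(\theta)$ is a standard multichannel one-dimensional scattering problem, and I would record the three facts needed: (i) the fiber wave operators $W^\pm(\theta)=\slim_{t\to\mp\infty}e^{itH(\theta)}e^{-itH_0(\theta)}$ exist for a.e.\ $\theta$ (Cook's method, verbatim as in Section \ref{Existence} but on the cylinder) and are asymptotically complete, i.e.\ $\Ran W^-(\theta)=\calH^{\mathrm c}(\theta)$, the continuous subspace of $H(\theta)$ --- this is classical and is also contained in the analysis of \cite{DaviesSimon}, or may be re-derived by the Enss-type argument of the main text adapted to the cylinder; (ii) every eigenfunction of $H(\theta)$ decays exponentially in $x_d$ --- below $\mu_0(\theta)$ by Combes--Thomas/Agmon estimates, and for eigenvalues \emph{embedded} in $[\mu_0(\theta),\infty)$ because, outside $\supp V$, projecting the eigenvalue equation onto the longitudinal modes decouples it, so the ``open'' and threshold channels of an $L^2$ solution must vanish identically while the ``closed'' channels decay exponentially; (iii) for each $\Lambda$ the number $N(\Lambda)$ of eigenvalues of $H(\theta)$ below $\Lambda$ is bounded uniformly in $\theta$ (they all lie in the fixed compact interval $[\inf_\theta\mu_0(\theta)-M,\Lambda]$, and a Birman--Schwinger/min-max count using $\mu_j(\theta)\to\infty$ uniformly in $\theta$ bounds their number), with a measurable choice of normalized eigenbasis.

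\textbf{Reassembly and conclusion.} Since $e^{itH}e^{-itH_0}=\int^\oplus e^{itH(\theta)}e^{-itH_0(\theta)}\,d\theta$ and the fiber integrands are isometries converging a.e.\ by (i), dominated convergence gives $\Omega^-=\int^\oplus W^-(\theta)\,d\theta$; with completeness of the fibers this yields $\Ran(\Omega^-)=\int^\oplus\calH^{\mathrm c}(\theta)\,d\theta$, hence by Theorem \ref{thms}, $\calH_{\mathrm{sur}}=(\Ran(\Omega^-))^\perp=\int^\oplus\calH^{\mathrm p}(\theta)\,d\theta$, the direct integral of the pure point subspaces of the fibers. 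It remains only to show $\int^\oplus\calH^{\mathrm p}(\theta)\,d\theta\subset\calH_\textrm{sur}'$, since Proposition \ref{Hsur'} then forces all three subspaces to coincide and $\calH=\Ran(\Omega^-)\oplus\calH_\textrm{sur}'$. For this, fix $f=\int^\oplus f_\theta$ with $f_\theta\in\calH^{\mathrm p}(\theta)$ and $\varepsilon>0$, and choose $\Lambda$ with $\|\chi_{(\Lambda,\infty)}(H)f\|<\varepsilon$. For each $\theta$, $\chi_{(-\infty,\Lambda]}(H(\theta))f_\theta$ is a combination of at most $N(\Lambda)$ normalized eigenfunctions, so $e^{-itH(\theta)}$ acts on it by phases and $\|\chi_{\{\|x_d\|>R\}}e^{-itH(\theta)}\chi_{(-\infty,\Lambda]}(H(\theta))f_\theta\|$ is bounded, uniformly in $t$, by a quantity $g_R(\theta)\le\sqrt{N(\Lambda)}\,\|f_\theta\|$ with $g_R(\theta)\to 0$ pointwise as $R\to\infty$; dominated convergence gives $\sup_t\|\chi_{S_R^c}e^{-itH}\chi_{(-\infty,\Lambda]}(H)f\|\to 0$, hence $\limsup_{R\to\infty}\sup_t\|\chi_{S_R^c}e^{-itH}f\|\le\varepsilon$, and so $f\in\calH_\textrm{sur}'$.

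\textbf{Main obstacle.} The substantive point is the fiber analysis of the second step: establishing asymptotic completeness for $H(\theta)$ (a multichannel one-dimensional scattering problem with a short-range, infinitely-many-channel potential) with enough uniformity in $\theta$ to reassemble $\Omega^-$, together with the fact that fiber eigenvalues can be embedded in the essential spectrum --- so ``surface states of a fiber'' must mean its \emph{entire} point spectrum --- and that eigenfunction decay rates degenerate at the band edges $\{\mu_j(\theta)\}$. The degeneration is harmless here precisely because the reassembly argument needs nothing more than pointwise-in-$\theta$ $L^2$-localization of the fiber bound states plus the uniform count (iii); all the absolute-continuity input is packaged into the classical fiberwise completeness statement (i), which one quotes from \cite{DaviesSimon} or proves by the techniques already developed in this paper.
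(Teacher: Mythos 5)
Your proof follows essentially the same route as the paper's: diagonalize by the Bloch--Floquet transform, invoke fiber completeness, reassemble the wave operator, show that the direct integral of fiber pure-point subspaces sits inside $\calH_\mathrm{sur}'$, and then close the argument by combining with Theorem~\ref{thms} and Proposition~\ref{Hsur'}. The one genuine difference is that where the paper cites Proposition~6.1 of \cite{DaviesSimon} for the inclusion $\calH_\mathrm{s}=\int^\oplus\calH^{\mathrm p}(\theta)\,d\theta\subset\calH_\mathrm{sur}'$ and Theorem~5.2 of \cite{SimonEnns} for fiber completeness, you prove the former directly. Your preliminary remark that the period vectors must be horizontal is a nice observation the paper does not bother to record.

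The direct proof of $\calH_\mathrm{s}\subset\calH_\mathrm{sur}'$ is, however, burdened with an unnecessary and hard-to-substantiate ingredient, namely claim~(iii) that the eigenvalue count $N(\Lambda,\theta)$ of $H(\theta)$ below a fixed $\Lambda$ is uniformly bounded in $\theta$. Below $\inf_\theta\mu_0(\theta)$ this can be extracted from min--max, but for eigenvalues \emph{embedded} in $[\mu_0(\theta),\infty)$ neither Birman--Schwinger nor min--max applies directly, and the uniformity is genuinely unclear in this generality. Fortunately this claim is never needed: since $\chi_{\{|x_d|>R\}}e^{-itH(\theta)}$ is a contraction, the quantity $g_R(\theta)=\sup_t\|\chi_{\{|x_d|>R\}}e^{-itH(\theta)}\chi_{(-\infty,\Lambda]}(H(\theta))f_\theta\|$ is trivially dominated by $\|f_\theta\|$, and the pointwise limit $g_R(\theta)\to 0$ as $R\to\infty$ already follows from $\chi_{(-\infty,\Lambda]}(H(\theta))f_\theta$ being a \emph{finite} (not uniformly finite) linear combination of $L^2$ eigenfunctions on which $e^{-itH(\theta)}$ acts by phases. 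Dominated convergence then closes the argument exactly as you intend; the Cauchy--Schwarz step producing the factor $\sqrt{N(\Lambda)}$, and hence the uniform bound it requires, should simply be deleted. In fact one can go one step further and bypass $\Lambda$ entirely: for $f_\theta$ in the pure-point subspace, $\sup_t\|\chi_{\{|x_d|>R\}}e^{-itH(\theta)}f_\theta\|\to 0$ as $R\to\infty$ by the discrete half of RAGE, and this is already dominated by $\|f_\theta\|$. With that simplification your argument is correct, self-contained, and a reasonable alternative to the paper's citation of \cite{DaviesSimon}.
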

\begin{proof}
Following \cite{DaviesSimon}, there exists
$U:\calH\rightarrow\int^\oplus_\bbT\calH(\theta)\,d\theta$ unitary that partially diagonalizes $H$. Here, $\bbT:=[0,2\pi)^{d-1}$ and $\calH(\theta)= L^2(D)$ for
\begin{align*}
    D=\{x\in\bbR^{d-1}\mid x=\sum_{i=1}^{d-1} a_iy_i\text{ for }y\in [0,1)^{d-1}\}\times \bbR
\end{align*} is the cylinder over the basic cell of the periods. For each $\theta$, we let $H_0(\theta)$ be $-\Delta$ on $\calH(\theta)$ with core given by $\psi\in L^2(D)$ with smooth extensions to $\bbR^d$ satisfying $\psi(x+a_j)=e^{i\theta_j}\psi(x)$ for all $j$ and $x\in \bbR^d$. Letting $H(\theta)=H_0(\theta)+V$, we have the unitary equivalence
\begin{align*}
    UHU^*=\int\limits_\bbT^\oplus H(\theta)\,d\theta
\end{align*}
These properties of the direct integral decomposition for periodic operators are enough to prove Theorem \ref{periodicThm}. We refer the interested reader to \cite{RSVol4} for more details about this decomposition.\par
From Theorem 5.2 of \cite{SimonEnns}, for all $\theta\in\bbT$ the wave operators 
\begin{align*}
    \Omega^{\pm}(\theta):=\slim_{t\rightarrow \mp\infty}e^{itH(\theta)}e^{-itH_0(\theta)}P_{\textrm{ac}}(H_0(\theta))
\end{align*}
exist and are complete in the sense that
\begin{align*}
    \Ran \Omega^+(\theta)=\Ran\Omega^-(\theta)=\calH_\textrm{ac}(H(\theta))
\end{align*}
and $H(\theta)$ has no singular continuous spectrum. Therefore, for each $\theta$, 
\begin{align*}
    \calH(\theta)=\Ran\Omega^-(\theta)\oplus \calH_{\textrm{pp}}(H(\theta))
\end{align*}
so that
\begin{align*}
    \calH=U^*\int\limits_\bbT^\oplus \Ran\Omega^-(\theta)\,d\theta \oplus \calH_\textrm{s}
\end{align*}
where $\calH_\textrm{s}:=\int\limits_\bbT^\oplus\calH_{\textrm{pp}}(H(\theta))\,d\theta$. These direct integrals are well-defined because $\theta\mapsto \Omega^-(\theta)$ and $\theta\mapsto P_{\textrm{pp}}(H(\theta))$ are measurable - see the Appendix to \cite{DaviesSimon}.\par
Following the proof of Theorem 1.8 in \cite{frank2003scattering}, Theorem XII.85 of \cite{RSVol4} implies that
\begin{align*}
    &Ue^{-itH}U^*=\int\limits_\bbT^\oplus e^{-itH(\theta)}\,d\theta &&
     Ue^{-itH_0}U^*=\int\limits_\bbT^\oplus e^{-itH_0(\theta)}\,d\theta
\end{align*}
Thus, for any $\psi\in L^2(\bbR^d)$
\begin{align*}
    &\|Ue^{itH}e^{-itH_0}\psi-\int\limits_\bbT^\oplus\Omega^\pm(\theta)U\psi\|^2=\\
    &\int\limits_\bbT\int\limits_D\|e^{itH(\theta)}e^{-itH_0(\theta)}(U\psi)(\theta,x)-\Omega^\pm(\theta)(U\psi)(\theta,x)\|^2\,dx\,d\theta
\end{align*}
The inner integral goes to $0$ as $t\rightarrow\pm\infty$ since $\Omega^\pm(\theta)$ exists so that by the dominated convergence theorem, we see that $\Omega^\pm=U\int\limits_\bbT^\oplus \Omega^\pm(\theta)\,d\theta U^*$. It follows that $\calH=\Ran\Omega^-\oplus \calH_\textrm{s}$. Furthermore, Proposition 6.1 of \cite{DaviesSimon} shows that $\calH_\textrm{s}\subset \calH_\textrm{sur}'$ and it is clear that $\Ran\Omega^-\subset (\calH'_\textrm{sur})^\perp$ because from part \ref{Complete} of Theorem \ref{thms} we have $\Ran \Omega^-=(\calH_\textrm{sur})^\perp$ and $\calH_{\textrm{sur}}'\subset \calH_{\textrm{sur}}$ from Proposition \ref{Hsur'}. Therefore,
\begin{align*}
    \calH=\Ran\Omega^-\oplus \calH_\textrm{s}\subset (\calH_\textrm{sur}')^\perp\oplus \calH_\textrm{sur}'
\end{align*}
which is only possible if in fact $\calH=\Ran\Omega^-\oplus \calH_\textrm{sur}' $. Since we have proven that in general $\calH_\textrm{sur}$ is the orthogonal complement of $\Ran\Omega^-$, we see that $\calH_\textrm{sur}'=\calH_\textrm{sur}$.
\end{proof}

\subsection{Transient surface states}
In this section, we exhibit a potential that induces states in $\calH_\textrm{sur}\setminus \calH_\textrm{sur}'$. Furthermore we show that one can build a potential with states that propagate in the transverse direction arbitrarily slowly in a sense specified below. Potentials of this class were originally considered by Yafaev \cite{yafaev1979break}.\par
For $d=2$ and $k=1$, let
\begin{align*}
    V(x,y)&=\Span{y}^{-2\alpha}V_0(\Span{y}^{-\alpha}x)\\
    V_0(x)&=-\chi_{[-1,1]}(x)
\end{align*}
for some $0<\alpha<\frac{1}{2}$. By writing 
\begin{align*}
    V(x,y)=-\Span{y}^{-2\alpha}\chi_{\{|x|<\Span{y}^\alpha\}}(x,y)
\end{align*}
it is clear that for any fixed $x$
\begin{align*}
    \sup\limits_{y}|V(x,y)|=\begin{cases}|x|^{-2}& |x|>1\\ 1&|x|<1 \end{cases}
\end{align*}
 Therefore,
 \begin{align*}
     \|\chi_{S_r^c}V\|=r^{-2}\in L^1(r)
 \end{align*}
i.e. the potential V satisfies (\ref{shortrangeCond}) and thus Theorem \ref{thms} applies.
\begin{remark}
One may also construct examples of potentials \emph{supported} inside a strip for which $\calH_\textrm{sur}\setminus \calH_\textrm{sur}'\neq \emptyset$. However, we consider the above example for the sake of computational simplicity.
\end{remark}
Let $h(y)$ be the operator on $L_x^2(\bbR)$ given by
\begin{align*}
    h(y)=-\frac{d^2}{dx^2}+V(x,y)
\end{align*}
Solving directly, we find that for some $E<0$, there is a normalized $\varphi_0(x)$ such that
\begin{align*}
    h(0)\varphi_0=E\varphi_0
\end{align*}
and
\begin{align*}
    \varphi_0(x)=Ce^{-c|x|}\text{ for }|x|\geq 1 
\end{align*}
By rescaling, we see that for all $y\in\bbR$
\begin{align*}
    h(y)\psi(x,y)=\Span{y}^{-2\alpha}E\psi(x,y)
\end{align*}
where
\begin{align*}
   \psi(x,y)=\Span{y}^{-\frac{\alpha}{2}}\varphi_0(\Span{y}^{-\alpha}x)
\end{align*}
Define
\begin{align*}
    J:L^2_y(\bbR)\rightarrow L^2(\bbR^2)\\
    Jf=\psi(x,y)f(y)
\end{align*}
By Theorem 15.1 in \cite{yafaev2007scattering}, since $\alpha<\frac{1}{2}$ and because $\varphi_0(x)$ clearly satisfies
\begin{align*}
    \int\limits_\bbR (1+|x|^4)|\frac{d^k}{dx^k}\varphi_0(x)|^2\,dx<\infty
\end{align*}
for all $k\leq 2$, there exists a phase function $\Xi(y,t):\bbR^2\rightarrow \bbR$ such that the modified wave operator
\begin{align*}
    \tilde{\Omega}=\lim\limits_{t\rightarrow\infty}e^{itH}JU_0(t)
\end{align*}
exists for all $f\in L^2_y(\bbR)$ where
\begin{align*}
    U_0(t)f=e^{i\Xi(y,t)}(2it)^{-\frac{1}{2}}\hat{f}(\frac{y}{2t})
\end{align*}
Moreover, $\Ran\tilde{\Omega}$ is orthogonal to $\Ran \Omega^-$ and therefore lies in $\calH_\textrm{sur}$.\par
To specify the the space distribution of states in $\Ran(\tilde{\Omega})$, for $\beta>0$ we let
\begin{align*}
    \calH_{\textrm{sur},\beta}=\{\phi\in\calH\mid \lim_{t\rightarrow\infty}\|\chi_{S_{t^\beta}^c}e^{-itH}\phi\|=0\}
\end{align*}
Intuitively, if $\phi\in \calH_{\textrm{sur},\beta}$ then at time $t$ it is localized within a strip of width $t^\beta$.
\begin{proposition}
Suppose that $\phi\in \Ran(\tilde{\Omega})$ for $\phi\neq 0$. Then $\phi\in \calH_{\mathrm{sur},\beta}$ for all $\beta>\alpha$ but not for $\beta\leq \alpha$. Moreover, $\phi\in \calH_{\mathrm{sur}}\setminus\calH_{\mathrm{sur}}'$.
\end{proposition}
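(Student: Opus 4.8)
The plan is to reduce every statement about the orbit $e^{-itH}\phi$ to the explicit asymptotic profile $JU_0(t)f$. Since $\phi\in\Ran(\tilde\Omega)$, fix $f\in L^2_y(\bbR)$ with $\tilde\Omega f=\phi$; because $J$, $U_0(t)$ and $e^{itH}$ are all isometries ($J$ since $\int_\bbR|\psi(x,y)|^2\,dx=1$ by the change of variables $u=\Span{y}^{-\alpha}x$, and $U_0(t)$ by Plancherel), $\tilde\Omega$ is a strong limit of isometries and hence itself an isometry, so $\|\phi\|=\|f\|>0$. The defining property of $\tilde\Omega$ gives $\|e^{-itH}\phi-JU_0(t)f\|\to0$ as $t\to\infty$, hence for every Borel set $A$,
\begin{align*}
\bigl|\,\|\chi_A e^{-itH}\phi\|-\|\chi_A JU_0(t)f\|\,\bigr|\le\|e^{-itH}\phi-JU_0(t)f\|\xrightarrow{t\to\infty}0,
\end{align*}
so it suffices to analyze $\|\chi_{S_\rho}JU_0(t)f\|$ for strips $S_\rho=\{|x|\le\rho\}$.

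First I would record the key identity. Set $h(r):=\int_{|u|<r}|\varphi_0(u)|^2\,du$, a nondecreasing function with $h(0)=0$, $h(+\infty)=\|\varphi_0\|^2=1$ and — the only place where the precise form of $\varphi_0$ matters — $h(r)<1$ for every finite $r$, since $\varphi_0(x)=Ce^{-c|x|}$ with $C\neq0$ for $|x|\ge1$ means $\varphi_0$ is not compactly supported. The change of variables $u=\Span{y}^{-\alpha}x$ gives $\int_{|x|<\rho}|\psi(x,y)|^2\,dx=h(\rho\Span{y}^{-\alpha})$, and since $|(U_0(t)f)(y)|^2=(2t)^{-1}|\hat f(y/2t)|^2$, the substitution $w=y/2t$ yields
\begin{align*}
\|\chi_{S_\rho}JU_0(t)f\|^2&=\int_\bbR|\hat f(w)|^2\,h\!\bigl(\rho\,\Span{2tw}^{-\alpha}\bigr)\,dw,\\
\|\chi_{S_\rho^c}JU_0(t)f\|^2&=\int_\bbR|\hat f(w)|^2\,\bigl(1-h(\rho\,\Span{2tw}^{-\alpha})\bigr)\,dw.
\end{align*}
Taking $\rho=t^\beta$, the argument of $h$ is $t^\beta\Span{2tw}^{-\alpha}$, which for each fixed $w\neq0$ is asymptotic to $(2|w|)^{-\alpha}t^{\beta-\alpha}$; thus the sign of $\beta-\alpha$ controls everything, and all limits below follow by dominated convergence with the integrable majorant $|\hat f(w)|^2$.

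If $\beta>\alpha$, then $t^\beta\Span{2tw}^{-\alpha}\to\infty$ for every fixed $w$, so $1-h(t^\beta\Span{2tw}^{-\alpha})\to0$ pointwise and $\|\chi_{S_{t^\beta}^c}JU_0(t)f\|\to0$, i.e.\ $\phi\in\calH_{\textrm{sur},\beta}$; taking in addition $\beta<1$ and using $t^\beta=o(vt)$ gives $\|\chi_{S_{vt}^c}e^{-itH}\phi\|\to0$ for all $v>0$, so $\phi\in\calH_{\textrm{sur}}$ (this also follows from $\Ran\tilde\Omega\subset\calH_{\textrm{sur}}$ noted above). If $\beta=\alpha$, then $t^\alpha\Span{2tw}^{-\alpha}\to(2|w|)^{-\alpha}$ for each fixed $w\neq0$, whence
\begin{align*}
\|\chi_{S_{t^\alpha}}JU_0(t)f\|^2\xrightarrow{t\to\infty}\int_\bbR|\hat f(w)|^2\,h\!\bigl((2|w|)^{-\alpha}\bigr)\,dw<\int_\bbR|\hat f(w)|^2\,dw=\|f\|^2=\|\phi\|^2,
\end{align*}
the inequality strict because $h((2|w|)^{-\alpha})<1$ for a.e.\ $w$ while $\hat f\not\equiv0$; hence $\|\chi_{S_{t^\alpha}}e^{-itH}\phi\|\not\to\|\phi\|$, i.e.\ $\phi\notin\calH_{\textrm{sur},\alpha}$. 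Since $\calH_{\textrm{sur},\beta}\subset\calH_{\textrm{sur},\alpha}$ for $\beta\le\alpha$ (a narrower strip is a stronger constraint), $\phi\notin\calH_{\textrm{sur},\beta}$ for all $\beta\le\alpha$. Finally, one checks $\calH_{\textrm{sur}}'\subset\calH_{\textrm{sur},\beta}$ for every $\beta>0$ (if $\sup_{t\ge0}\|\chi_{S_R^c}e^{-itH}\psi\|\to0$ as $R\to\infty$, then along $R=t^\beta$ the quantity $\|\chi_{S_{t^\beta}^c}e^{-itH}\psi\|$ tends to $0$ as $t\to\infty$), so $\phi\notin\calH_{\textrm{sur},\alpha}$ already gives $\phi\notin\calH_{\textrm{sur}}'$; together with $\phi\in\calH_{\textrm{sur}}$ this yields $\phi\in\calH_{\textrm{sur}}\setminus\calH_{\textrm{sur}}'$.

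I expect the only genuinely delicate point to be the boundary case $\beta=\alpha$: one must identify the pointwise limit $t^\alpha\Span{2tw}^{-\alpha}\to(2|w|)^{-\alpha}$ and, more importantly, use that $h(r)<1$ strictly for every finite $r$ — i.e.\ that the bound state $\varphi_0$ genuinely leaks to infinity — which is precisely why the explicit exponential tail of $\varphi_0$ was recorded. Everything else is a routine change of variables plus dominated convergence.
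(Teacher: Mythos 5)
Your proposal is correct and follows essentially the same route as the paper: reduce to the asymptotic profile $JU_0(t)f$, compute $\|\chi_{S_{t^\beta}}JU_0(t)f\|^2$ via the change of variables, and analyze the integrand pointwise with dominated convergence, with the exponential (hence non-compact) tail of $\varphi_0$ making the boundary case $\beta=\alpha$ strict. The only cosmetic differences are that you treat $\beta<\alpha$ by monotonicity of the spaces $\calH_{\mathrm{sur},\beta}$ rather than recomputing the limit, and you deduce $\phi\notin\calH_{\mathrm{sur}}'$ from the inclusion $\calH_{\mathrm{sur}}'\subset\calH_{\mathrm{sur},\beta}$ rather than evaluating along the time sequence $t=R^{1/\beta}$ directly.
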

\begin{remark}
The above proposition says that states in $\Ran(\tilde{\Omega})$ are localized at time $t$ in a strip of width $t^{\alpha+\epsilon}$ for any $\epsilon>0$, but not in a strip of width $t^{\alpha}$. In other words, such states propagate in the transverse direction at rate proportional to $t^\alpha$. Thus, by modulating the decay of $V$ in the longitudinal direction, choosing $\alpha$, one can create states that propagate in the transverse direction arbitrarily slowly.
\end{remark}
\begin{proof}
For $\phi\in \Ran(\tilde{\Omega})$, there exists some $f\in L^2_y(\bbR)$ such that
\begin{align*}
\lim_{t\rightarrow\infty}\|e^{-itH}\phi-JU_0(t)f\|=0
\end{align*}
so it suffices to show that
\begin{align*}
\lim_{t\rightarrow\infty}\|\chi_{S_{t^\beta}^c}JU_0(t)f\|=0
\end{align*}
for $\beta>\alpha$ and
\begin{align*}
    \lim_{t\rightarrow\infty}\|\chi_{S_{t^\beta}^c}e^{-itH}\phi\|\neq 0
\end{align*}
for $\beta\leq \alpha $. To see this, note that
\begin{align*}
    \|\chi_{S_r^c}JU_0f\|^2&=\int\limits_{|x|>r}\int\limits_\bbR |\psi(x,y)|^2(2t)^{-1}|\hat{f}(\frac{y}{2t})|^2\,dy\,dx=\int\limits_{|x|>r}\int\limits_\bbR |\psi(x,2ty)|^2|\hat{f}(y)|^2\,dy\,dx\\
    &=\int
\limits_{|x|>r}\int\limits_\bbR\Span{2ty}^{-\alpha}|\varphi_0(\Span{2ty}^{-\alpha}x)|^2|\hat{f}(y)|^2\,dy\,dx=\int\limits_\bbR\int\limits_{|x|>r\Span{2ty}^{-\alpha}}|\varphi_0(x)|^2|\hat{f}(y)|^2\,dx\,dy
\end{align*}
so we have shown that
\begin{align*}
    \|\chi_{S_r^c}JU_0f\|^2=\int\limits_\bbR g(r\Span{2ty}^{-\alpha})|\hat{f}(y)|^2\,dy
\end{align*}
 where
\begin{align*}
    g(y)=\int\limits_{|x|>|y|}|\varphi_0(x)|^2\,dx
\end{align*}
Clearly $g(0)=1$, $g(\infty)=0$, and $g(y)\geq 0$ for all $y$. By taking $r=t^\beta$ for some $\beta>0$, we see that
\begin{align*}
    \|\chi_{S^c_{t^\beta}}JU_0f\|^2=\int\limits_\bbR g(t^\beta\Span{2ty}^{-\alpha})|\hat{f}(y)|^2\,dy
\end{align*}
Given this identity, by the dominated convergence theorem we need only take the limit as $t\rightarrow\infty$ under the integral for different values of $\beta$.
For $\beta>\alpha$, this integrand goes to $0$ pointwise as $t\rightarrow\infty$ so we see that
\begin{align}\label{est:leak1}
    \lim_{t\rightarrow\infty}\|\chi_{S^c_{t^\beta}}JU_0f\|=0    
\end{align}
Conversely, for $\beta<\alpha$, the integrand goes pointwise to $g(0)|\hat{f}(y)|^2$ and for $\alpha=\beta$ to $g(\abs{2y}^{-\alpha})|\hat{f}(y)|^2$, both of which integrate to a positive quantity i.e.
\begin{align*}
    \lim_{t\rightarrow\infty}\|\chi_{S^c_{t^\beta}}JU_0f\|>0    
\end{align*}
Finally, by choosing $0<\beta<\alpha$, we see that
\begin{align*}
    \lim_{R\rightarrow\infty}\sup_{t\geq0}\|\chi_{S_{R}^c}e^{-itH}\phi\|\geq \lim_{R\rightarrow\infty}\|\chi_{S_{R}^c}e^{-iR^\frac{1}{\beta}H}\phi\|=\|\hat{f}\|
\end{align*}
by the above computation. Thus, if $\phi\neq 0$, it is not contained in $\calH_\textrm{sur}'$.
\end{proof}
\subsection{Small surface perturbations}
For a potential that is small enough in the appropriate sense, one would expect that there should be no non-trivial surface states, as is the case for $H_0$. Indeed, this holds for $k\geq 3$  from a result in \cite{boutet1996some}:
\begin{theorem} [Cor. 2.1 from \cite{boutet1996some}]
 For $V$ $\Delta$-bounded with relative bound less than one, assume that there exists some constants $C\leq \frac{(k-2)^2}{2}$ and $C'>0$ such that
\begin{enumerate}
    \item $|D^\perp V (x)|\leq \frac{C}{\|x^{\perp}\|^2}$.\label{Condition}
    \item $|D^\perp D^{\perp} V (x)|\leq \frac{C}{\|x^{\perp}\|^2}$
    \item $|V(x)|\leq \frac{C'}{|x^{\perp}|^2}$
     \item $\|D^{\perp} V \|_{\calH^2\rightarrow \calH}<\infty$
\end{enumerate}
where $D^\perp =\sum_{j=1}^k x_j\frac{\partial}{\partial x_j}$ and $\calH^2$ is the Sobolev space of order two. Then the wave operators $\Omega^\pm$ exist and define a unitary equivalence between $H$ and $H_0$.
\end{theorem}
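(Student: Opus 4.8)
The plan is to establish the theorem by Mourre's commutator method, taking as conjugate operator the generator of dilations in the variables $(x_1,\dots,x_k)$. Let $A$ be the self-adjoint operator generating the unitary group $(e^{isA}\psi)(x)=e^{ks/2}\psi(e^sx_1,\dots,e^sx_k,x_{k+1},\dots,x_d)$, so that on $\calS$ one has $iA=D^\perp+\tfrac{k}{2}$. Write $H_0=L\otimes\id+\id\otimes T$ under the identification $\calH=L^2(\bbR^k)\otimes L^2(\bbR^{d-k})$, where $L=-\sum_{j=1}^k\partial_{x_j}^2$ and $T=-\sum_{j=k+1}^d\partial_{x_j}^2$. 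Since $k\geq3$, Hardy's inequality on $\bbR^k$ gives $L\geq\tfrac{(k-2)^2}{4}\|x^\perp\|^{-2}$ as quadratic forms, with $\|x^\perp\|^2:=x_1^2+\dots+x_k^2$; in particular $L\otimes\id\geq\tfrac{(k-2)^2}{4}\|x^\perp\|^{-2}$ on $\calH$. The standing relative $\Delta$-boundedness hypothesis makes $H$ self-adjoint on $D(H_0)$, and the bounds on $D^\perp V$ and $D^\perp D^\perp V$ together with $\|D^\perp V\|_{\calH^2\to\calH}<\infty$ ensure that $H\in C^{1,1}(A)$, indeed $C^2(A)$, so that Mourre theory applies.

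The next step is to compute the commutator. Dilation scales $L$ and fixes $T$, so $i[H_0,A]=2\,L\otimes\id$, while $i[V,A]=-D^\perp V$, a bounded operator by hypothesis; hence
\begin{align*}
    i[H,A]=2\,L\otimes\id-D^\perp V.
\end{align*}
By hypothesis (\ref{Condition}), $D^\perp V\leq|D^\perp V|\leq C\|x^\perp\|^{-2}$ pointwise, so by Hardy $D^\perp V\leq\tfrac{4C}{(k-2)^2}\,L\otimes\id$ as forms, and the smallness condition $C\leq\tfrac{(k-2)^2}{2}$ yields
\begin{align*}
    i[H,A]\geq 2\,L\otimes\id-\tfrac{4C}{(k-2)^2}\,L\otimes\id\geq 0,
\end{align*}
with strict positivity $i[H,A]\geq c_0\,L\otimes\id$ for some $c_0>0$ once the inequality for $C$ is strict (the borderline case being recovered by approximating $V$). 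The role of the constant $(k-2)^2/2$ is exactly that it is twice the Hardy constant, so that the positive term $2L\otimes\id$ coming from the free commutator dominates the worst possible negative contribution of $D^\perp V$.

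From here one derives the conclusions, with one twist due to the degeneracy of this positivity. The virial theorem (applicable since $H\in C^1(A)$) rules out eigenvalues: if $H\psi=E\psi$ then $0=\braket{\psi,i[H,A]\psi}\geq c_0\braket{\psi,(L\otimes\id)\psi}\geq0$, forcing $(L\otimes\id)^{1/2}\psi=0$, i.e.\ $\hat\psi$ supported on the null set $\{\xi_1=\dots=\xi_k=0\}$, hence $\psi=0$. The bound $i[H,A]\gtrsim L\otimes\id$ is \emph{not} a Mourre estimate in the usual sense, since $L$ has $0$ in its spectrum with no gap; however, localizing in the first tensor factor by $\chi_{[\lambda,\infty)}(L)\otimes\id$, on which $L\otimes\id\geq\lambda$, and running the standard Mourre argument there, one obtains a limiting absorption principle
\begin{align*}
    \sup_{\operatorname{Im}z\neq0,\ z\in K}\bigl\|\langle A\rangle^{-s}(H-z)^{-1}\langle A\rangle^{-s}\bigr\|<\infty
\end{align*}
for every compact $K\subset\bbR$ and $s>\tfrac12$, and in particular $H$ has purely absolutely continuous spectrum.

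Finally, existence and completeness of $\Omega^\pm$. Existence follows from Cook's method exactly as in Section \ref{Existence}: for $\psi$ whose Fourier transform avoids $\{\xi_1=\dots=\xi_k=0\}$, $e^{-itH_0}\psi$ disperses in the $(x_1,\dots,x_k)$ directions at a positive rate, so the $\|x^\perp\|^{-2}$ decay of $V$ gives $\|Ve^{-itH_0}\psi\|\in L^1_t$ on this dense set. For completeness one runs the asymptotic-completeness argument of the Mourre method: the limiting absorption principle above, the minimal and maximal velocity estimates furnished by the positive commutator, and the short-range behavior of $V$ in the dilated directions (encoded in the hypotheses on $V$ and $D^\perp V$) together give $\Ran\Omega^\pm=\calH_{\textrm{ac}}(H)=\calH$, so $\Omega^\pm$ are unitary and intertwine $H_0$ and $H$. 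The main obstacle throughout is this last point combined with the degeneracy of the positive commutator: because positivity holds only relative to $L\otimes\id$ rather than relative to $\id$, both the limiting absorption principle and the propagation estimates driving completeness must be extracted through the tensor structure, which is precisely where the higher-order commutator hypothesis on $D^\perp D^\perp V$ is needed.
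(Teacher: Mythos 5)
The paper does not actually prove this theorem; it imports it as Corollary 2.1 of \cite{boutet1996some}, so the only basis for comparison is the cited reference. Your reconstruction captures the correct ingredients, which do appear there: the generator $A$ of dilations in the $k$ coordinates on which $D^\perp$ acts, the computation $i[H_0,A]=2L\otimes\id$, the conversion of the pointwise bound on $D^\perp V$ into an operator inequality via Hardy's inequality on $\bbR^k$, and the interpretation of $(k-2)^2/2$ as twice the Hardy constant. This is indeed the shape of the argument in \cite{boutet1996some}, which is built on the so-called weakly conjugate operator method.

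However, the central step — extracting a limiting absorption principle from the degenerate estimate $i[H,A]\gtrsim L\otimes\id$ — has a genuine gap. Your proposal to restore strict positivity by inserting $\chi_{[\lambda,\infty)}(L)\otimes\id$ and then ``running the standard Mourre argument there'' does not work: that projection does not commute with $H$ (since $V$ does not commute with $L$), so it produces no $H$-invariant subspace, and Mourre theory requires energy localization in the spectrum of $H$ itself, not of an auxiliary operator in one tensor factor. The actual resolution in \cite{boutet1996some} avoids energy localization altogether: the LAP is proved in spaces weighted by $S^{1/2}$, where $S=i[H,A]$, and the differential inequality for the regularized resolvent is closed using the non-negativity and injectivity of $S$ together with the second-commutator bound — which is precisely where the hypothesis on $D^\perp D^\perp V$ is consumed. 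Your sketch names all the right players (positive commutator, Hardy, the double-commutator hypothesis), but the analytic mechanism that makes the degenerate case work is left unproved, and the specific fix you offer would fail as written. The virial argument ruling out eigenvalues is fine as far as it goes, but by itself it does not yield absolute continuity or completeness.
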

The condition (\ref{Condition}) implies that outside of a compact neighborhood of the origin, $V(x)$ must be bounded by some dimensional constant. Therefore, the above conditions may be regarded as imposing some sort of smallness on $V$.

\subsection{Random surface potentials}
In this section, we summarize some results from \cite{de2003dynamical} which show that almost surely $\calH_\textrm{sur}$ is infinite dimensional for certain classes of random surface potentials. To this end, let
\begin{align*}
    H(\omega)=H_0+V_\omega
\end{align*}
be the random operator on $\bbR^d$ given by the potential
\begin{align*} 
    V_\omega=\sum_{k\in\bbZ^\nu}q_k(\omega)f(x-(k,0))
\end{align*}
where $f$, the single site potential satisfies
\begin{enumerate}
    \item $f\geq0$ and $f>\sigma>0$ on some non-empty open set.
    \item $f\in L^p(\bbR^d)$ for $p\geq 2$ if $d\leq 3$ and $p>\frac{d}{2}$ if $d>3$.
\end{enumerate}
and the random coefficients $q_k$ satisfy
\begin{enumerate}
    \item The $q_k(\omega)$ are i.i.d. random variables with distribution given by a measure $\mu$ such that $\supp\mu=[q_\textrm{min},0]$ for some $q_\textrm{min}<0$.
    \item $\mu$ is H\"{o}lder continuous.
    \item There exist $C,\tau>0$ such that for all $\varepsilon>0$
    \begin{align*}
        \mu([q_\textrm{min},q_\textrm{min}+\varepsilon])\leq C\varepsilon^\tau
    \end{align*}
\end{enumerate}
One can show that almost surely $\sigma(H(\omega))=[E_0,\infty)$ where
\begin{align*}
    E_0=\inf \sigma(H_0+q_\textrm{min}\sum_{k\in\bbZ^\nu}f(x-(k,0)))
\end{align*}
which is negative. Under these assumptions we have that
\begin{theorem}[Theorem 1.2 in \cite{de2003dynamical}]
For $H(\omega)$ as above, there exists $\varepsilon>0$ such that the spectrum of $H(\omega)$ is almost surely pure point in the interval $[E_0,E_0+\varepsilon]$.
\end{theorem}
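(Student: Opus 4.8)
The statement is a band-edge Anderson localization result, so the plan is to run a multiscale analysis (MSA) in the spirit of Fr\"{o}hlich--Spencer and von Dreifus--Klein, adapted to the surface geometry in which the disorder lives on the $\nu$-dimensional sublattice $\bbZ^\nu\times\{0\}\subset\bbR^d$. The MSA machinery reduces everything to two probabilistic inputs on finite boxes $\Lambda_L=(-L,L)^d$: a Wegner estimate and an initial length scale (ILS) estimate. Most of the work lies in establishing these two estimates for the surface operator; once they hold, the conclusion — pure point spectrum with exponentially decaying eigenfunctions in $[E_0,E_0+\varepsilon]$ — follows from the standard induction on scales together with the now-classical fact that exponential decay of finite-volume Green's functions (via the generalized eigenfunction expansion) implies localization almost surely.

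First I would set up the finite-volume operators $H_{\Lambda_L}(\omega)$, the restriction of $H(\omega)$ to $\Lambda_L$ with Dirichlet boundary conditions. A crucial structural point is that $H_{\Lambda_L}(\omega)$ depends only on the $O(L^\nu)$ coupling constants $q_k$ with $(k,0)\in\Lambda_L$, not on $O(L^d)$ of them. For the Wegner estimate one must therefore bound $\mathbb{P}\big(\operatorname{dist}(\sigma(H_{\Lambda_L}(\omega)),E)<\varepsilon\big)$ using only this lower-dimensional family. The inputs are that $\partial_{q_k}H_{\Lambda_L}(\omega)=f(\cdot-(k,0))\ge 0$ with $\sum_k f(\cdot-(k,0))$ bounded below on a neighborhood of the hyperplane; combining a spectral averaging argument with the H\"{o}lder continuity of $\mu$ and a Combes--Thomas estimate (to handle the part of $\Lambda_L$ far from the surface, where the sum of single-site potentials may be small) yields a bound of the form $\mathbb{P}\big(\operatorname{dist}(\sigma(H_{\Lambda_L}(\omega)),E)<\varepsilon\big)\le C\varepsilon^{\alpha}L^{d}$ for $E$ in a fixed neighborhood of $E_0$. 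The power of $L$ is worse than in the bulk case, but this polynomial loss is harmless for MSA.

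Second comes the ILS estimate: for $\varepsilon$ small and $E\in[E_0,E_0+\varepsilon]$ one needs $\mathbb{P}\big(\operatorname{dist}(\sigma(H_{\Lambda_L}(\omega)),E_0)<\varepsilon_0(L)\big)$ to be superpolynomially small at a suitable length $L$, which amounts to Lifshitz-tail behaviour for the surface model. The deterministic comparison operator here is not $H_0$ but the periodic operator $H_0+q_{\min}\sum_k f(\cdot-(k,0))$, whose spectrum starts exactly at $E_0$. Forcing $H_{\Lambda_L}(\omega)$ to have spectrum within $\varepsilon_0$ of $E_0$ requires almost all $q_k$ in the slice of $\Lambda_L$ to sit within $O(\varepsilon_0)$ of $q_{\min}$ — made quantitative through a Temple-type variational lower bound for $\inf\sigma(H_{\Lambda_L}(\omega))$ in terms of an empirical average of the $q_k$ — and the tail hypothesis $\mu([q_{\min},q_{\min}+\varepsilon])\le C\varepsilon^{\tau}$ then bounds this event by $(C\varepsilon_0^{\tau})^{cL^{\nu}}$, which is precisely the ILS input.

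With Wegner and ILS in hand, I would run the MSA induction on scales $L_{j+1}\sim L_j^{\gamma}$ to obtain exponential decay of $\|\chi_x(H_{\Lambda_{L_j}}(\omega)-E)^{-1}\chi_y\|$ with probability $\ge 1-L_j^{-p}$, uniformly for $E\in[E_0,E_0+\varepsilon]$ after shrinking $\varepsilon$ if necessary, and conclude that $H(\omega)$ is almost surely pure point on $[E_0,E_0+\varepsilon]$ with exponentially localized eigenfunctions. The main obstacle is the ILS / Lifshitz-tails step: since $E_0$ is already the edge of the (absolutely continuous) spectrum of the periodic background operator, one must genuinely exploit the fluctuations of a $\nu$-dimensional random field to show that a low-lying eigenvalue of $H_{\Lambda_L}(\omega)$ is a large-deviation event, and the geometry of a $d$-dimensional Laplacian coupled to a $\nu$-dimensional disorder has to be handled with care in both the Temple inequality and the Combes--Thomas bound underlying the Wegner estimate.
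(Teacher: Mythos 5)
The paper does not prove this theorem; it is quoted verbatim as Theorem~1.2 of \cite{de2003dynamical} and used only as a black box to exhibit a class of random surface Hamiltonians with infinitely many surface states (via the pure point spectrum near the band edge and Proposition~\ref{Hsur'}). There is therefore no ``paper's own proof'' to compare against, and a full proof is well outside the scope of the paper under review.

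That said, your sketch is a sensible reconstruction of the kind of argument one expects in the cited work: multiscale analysis driven by a Wegner estimate whose sensitivity is concentrated on the $O(L^\nu)$ surface couplings, plus an initial length scale input from Lifshitz-tail asymptotics at the almost-sure spectral bottom $E_0$, with the periodic comparison operator $H_0+q_{\min}\sum_k f(\cdot-(k,0))$ playing the role of the unperturbed reference. You correctly flag the two genuine difficulties of the surface geometry: the Wegner bound must control the full $d$-dimensional box using only lower-dimensional randomness (hence the need for a Combes--Thomas estimate away from the surface), and the ILS/Lifshitz-tails step must exploit large deviations of a $\nu$-dimensional field against a $d$-dimensional kinetic term. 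If you intend to actually carry this out, two points deserve more care than the sketch gives them. First, in the continuum MSA one typically works with the \emph{bootstrap} multiscale analysis of Germinet--Klein rather than the original von Dreifus--Klein scheme, since that framework is what delivers dynamical localization (the title of \cite{de2003dynamical} suggests that is the goal there, and pure point spectrum is then a corollary). Second, the Temple-type lower bound you invoke for $\inf\sigma(H_{\Lambda_L}(\omega))$ is delicate precisely because the operator is $d$-dimensional while the constraint is $\nu$-dimensional: a trial function can try to ``escape'' into the bulk transverse directions where the potential vanishes, so the variational step must quantify how much energy that escape costs relative to $E_0<0$. This is where the structure of the periodic ground state in the transverse variable enters and is the genuinely surface-specific estimate; in the sketch it is named but not resolved. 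None of this is a gap relative to the present paper, which simply cites the result, but it is where a self-contained proof would have to do real work.
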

Because eigenfunctions are clearly surface states, for instance by Proposition \ref{Hsur'}, this demonstrates that random models can induce an infinite dimensional space of surface states.

\appendix
\section{Properties of Phase Space Observables}\label{DaviesProperties}
In this appendix we prove several properties of the phase space observables $P_\delta(E)$ that we use above. We recall that we choose $\eta\in\calS(\bbR^d)$, such that $\|\eta\|=1$ and $\supp \hat{\eta}\subset B_1$, and $\eta=\eta^\parallel\otimes \eta^\perp$. Let $\eta_\delta$ be such that $\hat{\eta}_\delta(p)=\delta^{-\frac{d}{2}}\hat{\eta}(\frac{p}{\delta})$, a rescaling of $\eta$, so that $\supp \hat{\eta}^d_\delta\subset B_\delta$ and $\|\eta_{\delta}\|=1$.\par
Now define the following family of coherent states by translating $\eta_\delta$ in phase space:
\begin{align*}
    &\hat{\eta}_{x,p;\delta}(\xi)=e^{-ix\xi}\hat{\eta}_\delta(\xi-p)
\end{align*}
or equivalently
\begin{align*}
    &\eta_{x,p;\delta}(y)=e^{ip(y-x)}\eta_\delta(y-x)
\end{align*}
We use this to define a family, depending on $\delta>0$, of positive-operator-valued measures as in \cite{davies1980enss}, which serve as phase space observables. For any $E\subset \bbR^{2d}$ Borel and $\psi \in \calH$ let
\begin{align*}
    P_\delta(E)\psi=(2\pi)^{-d}\iint\limits_E \braket{\eta_{x,p;\delta},\psi} \eta_{x,p;\delta} \,dx\,dp
\end{align*}
\begin{proposition}\label{MomentumInt}
We have the following equality:
\begin{align*}
    (2\pi)^{-d}\int\limits_{\bbR^d}|\braket{\eta_{x,p;\delta},\psi}|^2 \,dp=\int\limits_{\bbR^d} |\overline{\eta}_\delta(y-x)\psi(y)|^2\,dy
\end{align*}
\end{proposition}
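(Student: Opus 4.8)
The plan is to recognize the $p$-integral on the left as a Plancherel integral in disguise. Fix $x\in\bbR^d$ and $\delta>0$ and set $g_x(y):=\overline{\eta_\delta(y-x)}\,\psi(y)$. First I would unwind the definition of the coherent state: since $\eta_{x,p;\delta}(y)=e^{ip(y-x)}\eta_\delta(y-x)$ and $\braket{\cdot,\cdot}$ is antilinear in its first argument,
\begin{align*}
    \braket{\eta_{x,p;\delta},\psi}=\int_{\bbR^d}e^{-ip(y-x)}\overline{\eta_\delta(y-x)}\,\psi(y)\,dy=e^{ipx}\int_{\bbR^d}e^{-ipy}g_x(y)\,dy=(2\pi)^{\frac d2}e^{ipx}\,\widehat{g_x}(p),
\end{align*}
where the last equality uses the normalization of $\calF$ fixed in Section \ref{Def}. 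To make the middle step legitimate I would note that $g_x\in L^1(\bbR^d)\cap L^2(\bbR^d)$: since $\eta_\delta\in\calS(\bbR^d)$ it is bounded, so $g_x$ is a bounded multiple of $\psi\in\calH$ and hence lies in $L^2$, and it lies in $L^1$ by Cauchy--Schwarz because $\eta_\delta$ and $\psi$ are both square-integrable. Thus $\widehat{g_x}$ is genuinely given by the displayed integral.

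Next, taking absolute values removes the unimodular factor $e^{ipx}$, giving $|\braket{\eta_{x,p;\delta},\psi}|^2=(2\pi)^d|\widehat{g_x}(p)|^2$, so that
\begin{align*}
    (2\pi)^{-d}\int_{\bbR^d}|\braket{\eta_{x,p;\delta},\psi}|^2\,dp=\int_{\bbR^d}|\widehat{g_x}(p)|^2\,dp=\|g_x\|^2=\int_{\bbR^d}\bigl|\overline{\eta_\delta(y-x)}\,\psi(y)\bigr|^2\,dy
\end{align*}
by the Plancherel theorem (an isometry for this symmetric normalization of $\calF$). Since $|\overline{\eta_\delta(y-x)}|=|\eta_\delta(y-x)|=|\overline{\eta}_\delta(y-x)|$, the right-hand side equals $\int_{\bbR^d}|\overline{\eta}_\delta(y-x)\psi(y)|^2\,dy$, which is exactly the asserted identity.

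There is no real obstacle here; the only point requiring any care is the $L^1\cap L^2$ membership of $g_x$, used to identify $\braket{\eta_{x,p;\delta},\psi}$ with a pointwise-defined Fourier transform before invoking Plancherel, and this is immediate from $\eta_\delta\in\calS$. Should one wish to suppress even this remark, the identity for general $\psi\in\calH$ follows from the case $\psi\in\calS$ by density, since for fixed $x$ and $\delta$ both sides are continuous functions of $\psi$ in the $\calH$-norm.
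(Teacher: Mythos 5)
Your proof is correct and uses the same approach as the paper: rewrite $\braket{\eta_{x,p;\delta},\psi}$ as a Fourier transform of $\bar\eta_\delta(\cdot-x)\psi(\cdot)$ and apply Plancherel. The paper's version omits the $L^1\cap L^2$ justification you supply, which is a harmless and sensible extra remark.
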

\begin{proof}
If we denote by $\calF(\cdot)$ the Fourier transform then
\begin{align*}
    (2\pi)^{-\frac{d}{2}}\braket{\eta_{x,p;\delta},\psi}&=(2\pi)^{-\frac{d}{2}}\int\limits_{\bbR^d} e^{-ip(y- x)} \bar{\eta}_\delta(y-x)\psi(y)\,dy\\
    &=e^{ipx}\calF(\bar{\eta}_\delta(\cdot-x)\psi(\cdot))(p)
\end{align*}
So, the proposition follows directly from Plancherel:
\begin{align*}
    (2\pi)^{-d}\int\limits_{\bbR^d}|\braket{\eta_{x,p;\delta},\psi}|^2 \,dp&=\int\limits_{\bbR^d}|e^{ipx}\calF(\bar{\eta}_\delta(\cdot-x)\psi(\cdot))(p)|^2 \,dp\\
    &=\int\limits_{\bbR^d}|\bar{\eta}_\delta(y-x)\psi(y)|^2\,dy
\end{align*}
as needed.
\end{proof}
\begin{corollary}
For any $\delta>0$
\begin{align*}
    P_\delta(\bbR^{2d})=\id
\end{align*}
\end{corollary}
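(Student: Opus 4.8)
The plan is to verify the identity at the level of quadratic forms and then invoke self-adjointness. Since $P_\delta(\bbR^{2d})$ is self-adjoint and non-negative by construction, it suffices to show that $\braket{\psi,P_\delta(\bbR^{2d})\psi}=\|\psi\|^2$ for every $\psi\in\calH$; indeed, this forces $P_\delta(\bbR^{2d})=\id$ by polarization (or because a bounded self-adjoint operator is determined by its quadratic form). So the first step is to write
\begin{align*}
    \braket{\psi,P_\delta(\bbR^{2d})\psi}=(2\pi)^{-d}\iint\limits_{\bbR^{2d}}|\braket{\eta_{x,p;\delta},\psi}|^2\,dx\,dp,
\end{align*}
which follows directly from the definition of $P_\delta$ together with the (weak) convergence of the defining integral.

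Next I would carry out the $p$-integration using Proposition \ref{MomentumInt}. Because the integrand $|\braket{\eta_{x,p;\delta},\psi}|^2$ is non-negative, Tonelli's theorem lets me integrate in $p$ first and then in $x$ without any integrability hypothesis beyond $\psi\in\calH$. Applying Proposition \ref{MomentumInt} to the inner integral gives
\begin{align*}
    \braket{\psi,P_\delta(\bbR^{2d})\psi}=\int\limits_{\bbR^d}\int\limits_{\bbR^d}|\overline{\eta}_\delta(y-x)|^2|\psi(y)|^2\,dy\,dx.
\end{align*}
Then a second application of Tonelli (again everything is non-negative) allows me to swap the order of integration, integrate in $x$ using the translation-invariance of Lebesgue measure, and use $\|\eta_\delta\|=1$ to obtain
\begin{align*}
    \braket{\psi,P_\delta(\bbR^{2d})\psi}=\int\limits_{\bbR^d}|\psi(y)|^2\left(\int\limits_{\bbR^d}|\eta_\delta(y-x)|^2\,dx\right)dy=\int\limits_{\bbR^d}|\psi(y)|^2\|\eta_\delta\|^2\,dy=\|\psi\|^2.
\end{align*}

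There is essentially no hard part here: the only point requiring a word of care is the justification for interchanging the order of the $x$- and $p$-integrations, which is handled uniformly by Tonelli since the integrand is pointwise non-negative, so no dominated-convergence-type estimate is needed. The passage from the quadratic form identity to the operator identity $P_\delta(\bbR^{2d})=\id$ is then immediate from self-adjointness and polarization.
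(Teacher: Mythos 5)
Your proof is correct and follows essentially the same route as the paper: compute the quadratic form $\braket{\psi,P_\delta(\bbR^{2d})\psi}$, apply Proposition \ref{MomentumInt} to do the $p$-integration, use translation invariance and $\|\eta_\delta\|=1$ to conclude the form equals $\|\psi\|^2$, and then pass to the operator identity by self-adjointness. The only difference is that you spell out the Tonelli justification for the order of integration, which the paper leaves implicit.
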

\begin{proof}
This is a direct application of the above:
\begin{align*}
    \braket{\psi,P_\delta(\bbR^{2d})\psi}&=(2\pi)^{-d}\iiiint\limits_{\bbR^{2d}}|\braket{\eta_{x,p;\delta},\psi}|^2\,dx^\parallel\,dp^\parallel\, dx^\perp \, dp ^\perp\\
    &=\iiiint\limits_{\bbR^{2d}}|\overline{\eta}_\delta(y-x)\psi(y)|^2\,dx^\parallel\,dy^\parallel\, dx^\perp \, dy ^\perp=\|\eta_\delta\|^2\|\psi\|^2=\|\psi\|^2
\end{align*}
from which it follows that $P_\delta(\bbR^{2d})=\id$ because a self-adjoint operator is determined by its diagonal matrix elements.
\end{proof}
\begin{corollary}\label{SpaceLoc}
For $A^\parallel\subset \bbR^k, A^\perp \subset \bbR^{d-k}$ let $E=A^\parallel \times\bbR^k\times A^\perp \times\bbR^{d-k}$, and $A= A^\parallel\times A^\perp$. Then for any $\delta>0$ and $\psi\in \calS$
\begin{align*}
   (P_\delta(E)\psi)(y)=[(|\eta_\delta|^2*\chi_{A})\psi](y)
\end{align*}
\end{corollary}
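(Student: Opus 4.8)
The plan is to verify the identity \emph{weakly}: I would compute the matrix element $\braket{\varphi, P_\delta(E)\psi}$ against an arbitrary test function $\varphi \in \calS$ and show it equals $\braket{\varphi, (|\eta_\delta|^2 * \chi_A)\psi}$. Since $\calS$ is dense in $\calH$ and the multiplier $|\eta_\delta|^2 * \chi_A$ satisfies $0 \le |\eta_\delta|^2 * \chi_A \le \|\eta_\delta\|^2 = 1$ (so it is a bounded function and the right-hand side is a genuine $L^2$ vector), this suffices.

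First I would note that in the coordinates $(x^\parallel, p^\parallel, x^\perp, p^\perp)$ the set $E = A^\parallel \times \bbR^k \times A^\perp \times \bbR^{d-k}$ is precisely $\{(x,p) : x \in A,\ p \in \bbR^d\}$ with $A = A^\parallel \times A^\perp \subset \bbR^d$. Using the (weak) definition of the positive-operator-valued measure,
\begin{align*}
    \braket{\varphi, P_\delta(E)\psi} = (2\pi)^{-d}\int_A\int_{\bbR^d}\braket{\eta_{x,p;\delta},\psi}\,\braket{\varphi,\eta_{x,p;\delta}}\,dp\,dx.
\end{align*}
For fixed $x$ I would do the $p$-integral first. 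Exactly as in the proof of Proposition \ref{MomentumInt}, setting $F(z) = \overline{\eta_\delta(z-x)}\psi(z)$ and $G(z) = \overline{\eta_\delta(z-x)}\varphi(z)$ one has $(2\pi)^{-d/2}\braket{\eta_{x,p;\delta},\psi} = e^{ipx}\calF(F)(p)$ and $(2\pi)^{-d/2}\braket{\varphi,\eta_{x,p;\delta}} = e^{-ipx}\overline{\calF(G)(p)}$, so the phases $e^{\pm ipx}$ cancel and Plancherel gives
\begin{align*}
    (2\pi)^{-d}\int_{\bbR^d}\braket{\eta_{x,p;\delta},\psi}\,\braket{\varphi,\eta_{x,p;\delta}}\,dp = \int_{\bbR^d}\calF(F)(p)\,\overline{\calF(G)(p)}\,dp = \braket{G,F} = \int_{\bbR^d}|\eta_\delta(y-x)|^2\,\overline{\varphi(y)}\,\psi(y)\,dy.
\end{align*}
Integrating this over $x \in A$ and exchanging the order of integration then yields
\begin{align*}
    \braket{\varphi, P_\delta(E)\psi} = \int_{\bbR^d}\overline{\varphi(y)}\,\psi(y)\left(\int_A |\eta_\delta(y-x)|^2\,dx\right)dy = \braket{\varphi, (|\eta_\delta|^2 * \chi_A)\psi},
\end{align*}
and since this holds for all $\varphi \in \calS$, the claimed pointwise (a.e.) identity follows.

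The only point that requires care is the justification of Fubini when $A$ has infinite measure (e.g. $A^\parallel = \bbR^k$). For this I would use Cauchy--Schwarz in $p$ together with Proposition \ref{MomentumInt} to bound the inner $p$-integral, at fixed $x$, by
\begin{align*}
    C\left(\int_{\bbR^d}|\eta_\delta(y-x)\varphi(y)|^2\,dy\right)^{1/2}\left(\int_{\bbR^d}|\eta_\delta(y-x)\psi(y)|^2\,dy\right)^{1/2},
\end{align*}
which decays faster than any polynomial in $x$ because $\eta_\delta, \varphi, \psi \in \calS$; hence the subsequent $x$-integral over $A \subseteq \bbR^d$ converges absolutely and every interchange above is legitimate. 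Everything else is bookkeeping with the paper's Fourier conventions, so I expect this absolute-integrability check to be the main (and essentially only) obstacle.
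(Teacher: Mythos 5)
Your proof is correct and takes essentially the same approach as the paper: both reduce the $p$-integral to Proposition \ref{MomentumInt} via Plancherel and then integrate over $x\in A$. The only cosmetic difference is that the paper computes the diagonal matrix element $\braket{\psi,P_\delta(E)\psi}$ (where the integrand is non-negative, so Tonelli handles the interchange for free) and appeals to self-adjointness, whereas you polarize to a general matrix element $\braket{\varphi,P_\delta(E)\psi}$ and therefore correctly flag and justify the Fubini step explicitly.
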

\begin{proof}
This is also a direct application, where we used the short hand $x=(x^\parallel,x^\perp),p=(p^\parallel,p^\perp)$:
\begin{align*}
    \braket{\psi,P_\delta(E)\psi}&=\int\limits_{A^\parallel}\int\limits_{\bbR^k}\int\limits_{A^\perp}\int\limits_{\bbR^{d-k}}|\braket{\eta_{x,p;\delta},\psi}|^2\,dx^\parallel\,dp^\parallel\, dx^\perp \, dp ^\perp=\int\limits_A\int\limits_{\bbR^d}|\overline{\eta}_\delta(y-x)\psi(y)|^2\,dy\,dx\\
    &=\int\limits_{\bbR^d}(|\eta_\delta|^2*\chi_A)(y)|\psi(y)|^2\,dy
\end{align*}
from which the claim follows.
\end{proof}
\begin{proposition}\label{OpNormAndPos}
For each $E\subset \bbR^{2d}$ Borel, $0\leq P_\delta(E)\leq \id$. In particular
\begin{align*}
    \|P_\delta(E)\|_{\mathrm{op}}\leq 1
\end{align*}
\begin{proof}
\begin{align*}
    0\leq \braket{\psi,P_\delta(E)\psi}&=(2\pi)^{-d}\iiiint\limits_E|\braket{\eta_{x,p;\delta},\psi}|^2\,dx^\parallel\,dp^\parallel\, dx^\perp \, dp ^\perp\\
    &\leq (2\pi)^{-d}\iiiint\limits_{\bbR^{2d}}|\braket{\eta_{x,p;\delta},\psi}|^2\,dx^\parallel\,dp^\parallel\, dx^\perp \, dp ^\perp=\|\psi\|^2
\end{align*}
The operator norm bound comes from the fact that for a self-adjoint operator $A$
\begin{align*}
    \|A\|_\textrm{op}=\sup_{\|\psi\|=1}|\braket{\psi,A\psi}|
\end{align*}
from which the claim is immediate. 
\end{proof}
\end{proposition}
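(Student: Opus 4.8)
The plan is to reduce the statement to the representation $P_\delta(E) = \mathscr{F}_{\eta_\delta}^{*}\chi_E\mathscr{F}_{\eta_\delta}$ recorded in Section \ref{HsurTildeDef}, together with the scalar identity underlying Proposition \ref{MomentumInt}.

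First I would expand the defining weakly convergent integral: for any $\psi \in \calH$,
\[
    \braket{\psi, P_\delta(E)\psi} = (2\pi)^{-d}\iint_E |\braket{\eta_{x,p;\delta},\psi}|^2\,dx\,dp,
\]
which is a non-negative real number because the integrand is pointwise non-negative. This simultaneously shows that $P_\delta(E)$ is symmetric — hence self-adjoint, being bounded — and that $P_\delta(E)\geq 0$.

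Next, since $E\subset\bbR^{2d}$, the same identity gives
\[
    \braket{\psi, P_\delta(E)\psi} \leq (2\pi)^{-d}\iint_{\bbR^{2d}} |\braket{\eta_{x,p;\delta},\psi}|^2\,dx\,dp = \braket{\psi,P_\delta(\bbR^{2d})\psi} = \|\psi\|^2,
\]
the last step being the preceding corollary ($P_\delta(\bbR^{2d})=\id$), which itself follows by integrating the identity of Proposition \ref{MomentumInt} in $x$ and using $\|\eta_\delta\|=1$. Thus $0\leq P_\delta(E)\leq \id$. The operator-norm bound is then immediate from the standard fact that a bounded self-adjoint operator $A$ satisfies $\|A\|_{\mathrm{op}} = \sup_{\|\psi\|=1}|\braket{\psi,A\psi}|$, so the two-sided form bound forces $\|P_\delta(E)\|_{\mathrm{op}}\leq 1$.

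There is no genuine obstacle here; the only point needing a word of justification is the quadratic-form identity above, i.e.\ that testing the weak operator integral against $\psi$ produces the scalar integral $\iint_E|\braket{\eta_{x,p;\delta},\psi}|^2\,dx\,dp$. This is cleanest via the factorization $P_\delta(E)=\mathscr{F}_{\eta_\delta}^{*}\chi_E\mathscr{F}_{\eta_\delta}$: self-adjointness is then manifest, and $0\leq P_\delta(E)\leq\id$ follows at once from $0\leq\chi_E\leq 1$ together with the isometry property $\|\mathscr{F}_{\eta_\delta}\psi\| = \|\psi\|$, which is again a consequence of Proposition \ref{MomentumInt}.
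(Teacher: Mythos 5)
Your proposal is correct and takes essentially the same route as the paper: evaluate the quadratic form as a non-negative integral over $E$, bound it by the integral over all of $\bbR^{2d}$ (which is $\|\psi\|^2$ by $P_\delta(\bbR^{2d})=\id$), and conclude via the variational characterization of the operator norm for self-adjoint operators. Your closing remark recasting this through the factorization $P_\delta(E)=\mathscr{F}_{\eta_\delta}^{*}\chi_E\mathscr{F}_{\eta_\delta}$ is a clean alternative packaging of the same calculation but not a materially different argument.
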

Next we want to be able to bound the operator norm of $AP_\delta(E)$ for another operator $A$:
\begin{proposition}\label{PNormBound}
For any $\delta>0$, and $A$ any operator we have, and any Borel set $E\subset \bbR^{2d}$:
\begin{align*}
    \|AP_\delta(E)\|_{\mathrm{op}}^2\leq (2\pi)^{-d}\iint\limits_E \|A\eta_{x,p;\delta}\|^2\,dx^\parallel\,dp^\parallel\, dx^\perp \, dp ^\perp
\end{align*}
\end{proposition}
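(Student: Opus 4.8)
The plan is to expand $AP_\delta(E)$ against test vectors using the \emph{weakly} convergent integral that defines $P_\delta(E)$, and then to apply the Cauchy--Schwarz inequality in $L^2(E,dx\,dp)$. We may assume the right-hand side is finite, since otherwise there is nothing to prove; moreover it suffices to carry out the argument for bounded $A$ (which is all that is needed in the applications), the general closed case following by the same computation once the resulting integral is seen to converge in norm.

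First I would record that, for $\psi\in\calH$ and any $\phi\in\calH$ with $\|\phi\|=1$, testing the integral defining $P_\delta(E)\psi$ against the vector $A^*\phi$ and using $\braket{A^*\phi,\eta_{x,p;\delta}}=\braket{\phi,A\eta_{x,p;\delta}}$ gives
\begin{align*}
\braket{\phi,AP_\delta(E)\psi}=(2\pi)^{-d}\iint\limits_E \braket{\eta_{x,p;\delta},\psi}\,\braket{\phi,A\eta_{x,p;\delta}}\,dx\,dp.
\end{align*}
Applying Cauchy--Schwarz to this integral, and distributing the prefactor so that each factor carries a full $(2\pi)^{-d}$, yields
\begin{align*}
|\braket{\phi,AP_\delta(E)\psi}|\leq \left((2\pi)^{-d}\iint\limits_E |\braket{\eta_{x,p;\delta},\psi}|^2\,dx\,dp\right)^{1/2}\left((2\pi)^{-d}\iint\limits_E |\braket{\phi,A\eta_{x,p;\delta}}|^2\,dx\,dp\right)^{1/2}.
\end{align*}

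The two factors are then estimated separately. For the first, the identity $\braket{\psi,P_\delta(E)\psi}=(2\pi)^{-d}\iint_E|\braket{\eta_{x,p;\delta},\psi}|^2\,dx\,dp$ used in the proof of Proposition \ref{OpNormAndPos} (which itself rests on the Plancherel computation of Proposition \ref{MomentumInt}), together with $0\leq P_\delta(E)\leq\id$, bounds it by $\|\psi\|$. For the second, since $\|\phi\|=1$ we have $|\braket{\phi,A\eta_{x,p;\delta}}|\leq\|A\eta_{x,p;\delta}\|$, so it is at most $\bigl((2\pi)^{-d}\iint_E\|A\eta_{x,p;\delta}\|^2\,dx\,dp\bigr)^{1/2}$. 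Taking the supremum over unit $\phi$, then over unit $\psi$, and squaring gives exactly the claimed inequality. I do not expect any genuine obstacle here; the only point meriting a word of care is the interchange of $A$ with the integral, which is automatic for bounded $A$ via the weak-integral representation above and otherwise is justified by the same Cauchy--Schwarz bound under the finiteness hypothesis.
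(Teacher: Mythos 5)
Your proof is correct, and the core step---applying Cauchy--Schwarz to the integral over $E$ that defines $P_\delta(E)$, then using $(2\pi)^{-d}\iint_E|\braket{\eta_{x,p;\delta},\psi}|^2\,dx\,dp = \braket{\psi,P_\delta(E)\psi}\leq\|\psi\|^2$---is exactly what the paper does. The only difference is in the packaging: the paper writes $\|AP_\delta(E)\psi\|^2$ as $\int_{\bbR^d}|\iint_E\braket{\eta_{x,p;\delta},\psi}\,A\eta_{x,p;\delta}(y)\,dx\,dp|^2\,dy$, applies Cauchy--Schwarz to the inner integral pointwise in $y$, and then integrates in $y$ to produce $\|A\eta_{x,p;\delta}\|^2$; you instead pair against an auxiliary unit vector $\phi$, Cauchy--Schwarz once, and bound $|\braket{\phi,A\eta_{x,p;\delta}}|\leq\|A\eta_{x,p;\delta}\|$ before taking a supremum over $\phi$. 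These are the same argument in duality; your version is marginally cleaner in that it avoids the pointwise-in-$y$ bookkeeping, while the paper's version stays entirely concrete. Your remark about restricting to bounded $A$ is sensible and harmless since that is the only case invoked in the paper (multiplication by indicators composed with unitary propagators).
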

\begin{proof}
\begin{align*}
    &\|AP_\delta(E)\psi\|^2=(2\pi)^{-2d}\int\limits_{\bbR^d}\left|\iint\limits_E\braket{\eta_{x,p;\delta},\psi}A\eta_{x,p;\delta}(y)\,dx^\parallel\,dp^\parallel\, dx^\perp \, dp ^\perp\right|^2\,dy\\
    &\leq (2\pi)^{-2d}\int\limits_{\bbR^d}\iint\limits_{\bbR^{2d}}|\braket{\eta_{x,p;\delta},\psi}|^2\,dx^\parallel\,dp^\parallel\, dx^\perp \, dp ^\perp\iint\limits_E|A\eta_{x,p;\delta}(y)|^2\,dx^\parallel\,dp^\parallel\, dx^\perp \, dp ^\perp\, dy\\
    &=(2\pi)^{-d}\|\psi\|^2\iint\limits_E \|A\eta_{x,p;\delta}\|^2\,dx^\parallel\,dp^\parallel\, dx^\perp \, dp ^\perp
\end{align*}
as needed.
\end{proof}

\begin{proposition}\label{psupport}
Let $\psi \in \calH$ be such that $\supp \hat{\psi}\subset D^\parallel\times D^\perp =D$ and let $E\subset \bbR^k\times B^\parallel\times \bbR^{d-k}\times B^\perp$ Borel where $B=B^\parallel\times B^\perp\subset \bbR^d$ satisfies $d(D,B)\geq\delta$. Then
\begin{align*}
    P_\delta(E)\psi =0
\end{align*}
Furthermore, if $F\subset\bbR^k\times D^\parallel\times \bbR^{d-k}\times D^\perp$ then
\begin{align*}
    P_{\frac{\delta}{2}}(E)P_{\frac{\delta}{2}}(F)=0
\end{align*}
\end{proposition}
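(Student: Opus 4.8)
The plan is to deduce both statements from the elementary fact that two $L^2$ functions with disjoint Fourier supports are orthogonal, combined with the frequency localization of the coherent states: from $\widehat{\eta_{x,p;\delta}}(\xi)=e^{-ix\xi}\hat{\eta}_\delta(\xi-p)$ and $\supp\hat{\eta}_\delta\subset B_\delta$ (the open ball) one has $\supp\widehat{\eta_{x,p;\delta}}\subset\{\xi\mid\|\xi-p\|<\delta\}$. For the first claim I would fix $(x,p)\in E$. Since $E\subset\bbR^k\times B^\parallel\times\bbR^{d-k}\times B^\perp$, the momentum coordinate $p=(p^\parallel,p^\perp)$ lies in $B=B^\parallel\times B^\perp$, so every $\xi\in\supp\widehat{\eta_{x,p;\delta}}$ obeys $\|\xi-p\|<\delta\leq d(D,B)$ and hence $\xi\notin D$ (if $\xi\in D$ then $\|\xi-p\|\geq d(D,B)$, a contradiction). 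Thus $\supp\widehat{\eta_{x,p;\delta}}$ and $\supp\hat{\psi}\subset D$ are disjoint, so by Plancherel $\braket{\eta_{x,p;\delta},\psi}=\braket{\widehat{\eta_{x,p;\delta}},\hat{\psi}}=0$. As the integrand of the weakly convergent integral defining $P_\delta(E)\psi$ vanishes at every $(x,p)\in E$, we get $P_\delta(E)\psi=0$. The argument used nothing about $D$ beyond $d(D,B)\geq\delta$, so it in fact shows: $P_\delta(E)u=0$ for every $u\in\calH$ with $d(\supp\hat{u},B)\geq\delta$.

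For the second claim I would apply this remark, with $\delta$ replaced by $\delta/2$, to the coherent states appearing in $P_{\delta/2}(F)\psi$. Fix $(x,p)\in F$; since $F\subset\bbR^k\times D^\parallel\times\bbR^{d-k}\times D^\perp$ the momentum coordinate $p$ lies in $D$, whence $d(p,B)\geq d(D,B)\geq\delta$. The Fourier support of $\eta_{x,p;\delta/2}$ lies in $\{\xi\mid\|\xi-p\|<\delta/2\}$, a set whose distance to $B$ is at least $d(p,B)-\delta/2\geq\delta/2$ by the triangle inequality. Hence, by the remark above (with parameter $\delta/2$ and the same $B$), $P_{\delta/2}(E)\eta_{x,p;\delta/2}=0$ for every $(x,p)\in F$. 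Since $P_{\delta/2}(E)$ is bounded and self-adjoint, it may be moved inside the weakly convergent integral defining $P_{\delta/2}(F)\psi$: for all $\xi,\psi\in\calH$,
\begin{align*}
\braket{\xi,P_{\delta/2}(E)P_{\delta/2}(F)\psi}=(2\pi)^{-d}\iint\limits_F\braket{\eta_{x,p;\delta/2},\psi}\,\braket{\xi,P_{\delta/2}(E)\eta_{x,p;\delta/2}}\,dx\,dp=0,
\end{align*}
and therefore $P_{\delta/2}(E)P_{\delta/2}(F)=0$.

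I do not expect a genuine obstacle. The only points needing care are the open-ball convention for $\supp\hat{\eta}_\delta$ (which makes $\|\xi-p\|<\delta$ strict, so that the hypothesis $d(D,B)\geq\delta$, rather than a strict inequality, suffices) and the routine fact that a bounded operator commutes past a weakly convergent vector-valued integral. Conceptually, the statement simply records that $P_\delta(E)$ filters onto frequencies within $\delta$ of $B$: it annihilates any state frequency-localized at distance $\geq\delta$ from $B$, in particular every coherent state $\eta_{x,p;\delta}$ with $p$ far from $B$, and $P_{\delta/2}(F)$ builds its output entirely from such coherent states as soon as $d(D,B)\geq\delta$.
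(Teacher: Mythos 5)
Your proof is correct and takes essentially the same approach as the paper: both parts rest on the frequency localization $\supp\widehat{\eta_{x,p;\delta}}\subset B_\delta+p$ and disjointness of Fourier supports via Plancherel. The only cosmetic difference is in the second claim: the paper first observes $\supp\widehat{P_{\delta/2}(F)\varphi}\subset D+B_{\delta/2}$ and then applies the first part, whereas you apply the first part (with parameter $\delta/2$) directly to each coherent state $\eta_{x,p;\delta/2}$ and move the bounded operator $P_{\delta/2}(E)$ inside the weakly convergent integral; these are equivalent formulations of the same observation.
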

\begin{proof}
The first equality follows directly from the fact that
\begin{align*}
    \braket{\eta_{x,p;\delta},\psi}=\int\limits_{\bbR^d} e^{ix\xi}\bar{\hat{\eta}}_\delta (\xi-p)\hat{\psi}(\xi)\,d\xi=0
\end{align*}
for $p\in B$ since $\supp \hat{\eta}_{x,p;\delta}\subset B_\delta+p$.\par
Similarly, the second equality comes from the fact that for any $\varphi \in \calH$ 
\begin{align*}
    \supp \widehat{P_{\frac{\delta}{2}}(F)\varphi}\subset D+B_{\frac{\delta}{2}}
\end{align*}
and an application of the first equality. 
\end{proof}
\begin{proposition}\label{BoundedByIndicator}
For any $\delta>0$, and for any Borel set $D\subset \bbR^d$, suppose that\\
${E\subset D^\parallel\times \bbR^k \times D^\perp\times \bbR^{d-k}}$ is a Borel set, and denote $D=D^\parallel\times D^\perp$. Then for any $\varphi \in \calH$
\begin{align}\label{eq:bound}
    \|P_{\delta}(E)\varphi\|^2\leq \|(|\eta_\delta|^2*\chi_{D})\varphi\|\|\varphi\|
\end{align}
\end{proposition}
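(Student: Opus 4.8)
The plan is to reduce the claim to an estimate on the quadratic form $\braket{\varphi,P_\delta(E)\varphi}$, which can be evaluated explicitly via Proposition \ref{MomentumInt}. First I would observe that, by Proposition \ref{OpNormAndPos}, both $P_\delta(E)$ and $\id-P_\delta(E)$ are non-negative self-adjoint operators, and they commute, so the product $P_\delta(E)(\id-P_\delta(E))=P_\delta(E)-P_\delta(E)^2$ is non-negative; that is, $P_\delta(E)^2\leq P_\delta(E)$. Consequently
\begin{align*}
    \|P_\delta(E)\varphi\|^2=\braket{\varphi,P_\delta(E)^2\varphi}\leq\braket{\varphi,P_\delta(E)\varphi}=(2\pi)^{-d}\iint\limits_E|\braket{\eta_{x,p;\delta},\varphi}|^2\,dx\,dp,
\end{align*}
the last equality being the definition of the positive-operator-valued measure $P_\delta$.

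Next I would use the hypothesis on $E$. In the coordinates $(x^\parallel,p^\parallel,x^\perp,p^\perp)$ of $\bbR^{2d}$, the inclusion $E\subset D^\parallel\times\bbR^k\times D^\perp\times\bbR^{d-k}$ says precisely that $x=(x^\parallel,x^\perp)\in D$ for every $(x,p)\in E$, with $p$ unrestricted. So I would enlarge the region of integration and then carry out the $p$-integral first using Proposition \ref{MomentumInt}:
\begin{align*}
    (2\pi)^{-d}\iint\limits_E|\braket{\eta_{x,p;\delta},\varphi}|^2\,dx\,dp\leq\int\limits_D\left((2\pi)^{-d}\int\limits_{\bbR^d}|\braket{\eta_{x,p;\delta},\varphi}|^2\,dp\right)dx=\int\limits_D\int\limits_{\bbR^d}|\eta_\delta(y-x)|^2|\varphi(y)|^2\,dy\,dx.
\end{align*}
By Tonelli this equals $\int_{\bbR^d}\left(\int_D|\eta_\delta(y-x)|^2\,dx\right)|\varphi(y)|^2\,dy$, and the inner integral is exactly $(|\eta_\delta|^2*\chi_D)(y)$, since $\int_D|\eta_\delta(y-x)|^2\,dx=\int_{\bbR^d}|\eta_\delta|^2(y-x)\chi_D(x)\,dx$.

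Finally, since $|\eta_\delta|^2*\chi_D$ is a non-negative function (and bounded, by $\|\eta_\delta\|^2=1$), the last expression is $\braket{\varphi,(|\eta_\delta|^2*\chi_D)\varphi}$ with $|\eta_\delta|^2*\chi_D$ regarded as a multiplication operator, and Cauchy--Schwarz in $\calH$ yields
\begin{align*}
    \braket{\varphi,(|\eta_\delta|^2*\chi_D)\varphi}\leq\|(|\eta_\delta|^2*\chi_D)\varphi\|\,\|\varphi\|,
\end{align*}
which is exactly (\ref{eq:bound}). There is no real obstacle in this argument --- it is a short chain assembled from Propositions \ref{MomentumInt} and \ref{OpNormAndPos} --- and the only points needing a moment's care are the operator inequality $P_\delta(E)^2\leq P_\delta(E)$ and the bookkeeping in the convolution step, making sure the convolution is written with the correct orientation of the argument.
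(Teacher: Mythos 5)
Your proof is correct and follows essentially the same route as the paper: bound $\|P_\delta(E)\varphi\|^2$ by the quadratic form $\braket{\varphi,P_\delta(E)\varphi}$ via $P_\delta(E)^2\leq P_\delta(E)$, enlarge the phase-space region to $D\times\bbR^d$, identify the resulting operator with multiplication by $|\eta_\delta|^2*\chi_D$, and finish with Cauchy--Schwarz. The only cosmetic difference is that you unwind the middle step by hand using Proposition \ref{MomentumInt} and Tonelli, whereas the paper invokes monotonicity of $P_\delta$ together with Corollary \ref{SpaceLoc}, which packages exactly that computation.
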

\begin{proof}
The inequality (\ref{eq:bound}) is a result of the fact that $P_\delta^2(E)\leq P_\delta(E)$ (which is easy to establish since $0\leq P_\delta(E)\leq\id$):
\begin{align*}
    \|P_\delta(E)\varphi\|^2&= \braket{P_\delta^2(E)\varphi,\varphi}\leq \braket{P_\delta(E)\varphi,\varphi}\leq \braket{P_\delta( D^\parallel\times \bbR^k\times D^\perp\times \bbR^{d-k})\varphi,\varphi}\\
    &=\braket{(|\eta_\delta|^2*\chi_{D})\varphi,\varphi}\leq \|(|\eta_\delta|^2*\chi_{D})\varphi\|\|\varphi\|
\end{align*}
as needed.
\end{proof}
For the following claims, suppose that 
\begin{align*}
    \eta(x)=\eta^\parallel(x^\parallel)\eta^\perp(x^\perp)
\end{align*}
where $\eta^\parallel$ and $\eta^\perp$ are functions in $\calS(\bbR^k)$ and $\calS(\bbR^{d-k})$, respectively, of $L^2$ norm $1$. It is easy to see that in this case
\begin{align*}
    \eta_{x,p;\delta}(y)=\eta^\parallel_{x^\parallel,p^\parallel;\delta}(y^\parallel)\eta_{x^\perp,p^\perp;\delta}^\perp(y^\perp)
\end{align*}
where the shifted functions $\eta^\parallel_{x^\parallel,p^\parallel;\delta}(y^\parallel)$ and $\eta_{x^\perp,p^\perp;\delta}^\perp(y^\perp)$ are defined analogously to before. Furthermore, $P_\delta^\parallel$ and $P_\delta^\perp$ are defined as operators on $L^2(\bbR^k)$ and $L^2(\bbR^{d-k})$, respectively, in the obvious way.
 \begin{proposition} \label{TensorClaim}
 Under the above choice of $\eta$,
 if $E= E^\parallel\times E^\perp\subset \bbR^{2k}\times \bbR^{2(d-k)}$ then we have
\begin{align*}
    P_\delta(E)=P^\parallel_\delta(E^\parallel)\otimes P^\perp_\delta(E^\perp)
\end{align*}

 \end{proposition}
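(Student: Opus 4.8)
The plan is to verify the claimed operator identity at the level of quadratic forms on a dense set of simple tensors, where the weakly convergent integral defining $P_\delta(E)$ becomes \emph{absolutely} convergent and Fubini's theorem applies.

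First I would reduce to testing against simple tensors. Both sides of the claimed identity are bounded operators with norm at most $1$: for $P_\delta(E)$ this is Proposition \ref{OpNormAndPos}, and for $P^\parallel_\delta(E^\parallel)\otimes P^\perp_\delta(E^\perp)$ it follows from the same proposition applied in the lower-dimensional settings together with $\|A\otimes B\|_{\mathrm{op}}=\|A\|_{\mathrm{op}}\|B\|_{\mathrm{op}}$. Hence it suffices to show $\braket{\phi,P_\delta(E)\psi}=\braket{\phi,(P^\parallel_\delta(E^\parallel)\otimes P^\perp_\delta(E^\perp))\psi}$ for $\psi=\psi^\parallel\otimes\psi^\perp$ and $\phi=\phi^\parallel\otimes\phi^\perp$ with all four factors in the respective Schwartz spaces, since such products span a dense subspace of $\calH$.

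Next I would establish absolute integrability of the form integral. Writing out the weakly convergent definition, $\braket{\phi,P_\delta(E)\psi}=(2\pi)^{-d}\iint_E\braket{\eta_{x,p;\delta},\psi}\braket{\phi,\eta_{x,p;\delta}}\,dx\,dp$. By the Cauchy--Schwarz inequality in $L^2(\bbR^{2d},dx\,dp)$ and the resolution of the identity recorded after Proposition \ref{MomentumInt}, namely $\iint_{\bbR^{2d}}|\braket{\eta_{x,p;\delta},\psi}|^2\,dx\,dp=(2\pi)^d\|\psi\|^2$, one gets $\iint_{\bbR^{2d}}|\braket{\eta_{x,p;\delta},\psi}\braket{\phi,\eta_{x,p;\delta}}|\,dx\,dp\leq(2\pi)^d\|\psi\|\|\phi\|<\infty$. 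Thus Fubini is available on $E=E^\parallel\times E^\perp$. Then I would invoke the factorization of the coherent states: since $\eta_{x,p;\delta}(y)=\eta^\parallel_{x^\parallel,p^\parallel;\delta}(y^\parallel)\,\eta^\perp_{x^\perp,p^\perp;\delta}(y^\perp)$, the inner products split as $\braket{\eta_{x,p;\delta},\psi}=\braket{\eta^\parallel_{x^\parallel,p^\parallel;\delta},\psi^\parallel}\braket{\eta^\perp_{x^\perp,p^\perp;\delta},\psi^\perp}$ and likewise with $\phi$. Plugging this in, writing $(2\pi)^{-d}=(2\pi)^{-k}(2\pi)^{-(d-k)}$, and noting that the resulting integrand factors into a function of $(x^\parallel,p^\parallel)$ times a function of $(x^\perp,p^\perp)$, Fubini yields $\braket{\phi,P_\delta(E)\psi}=\braket{\phi^\parallel,P^\parallel_\delta(E^\parallel)\psi^\parallel}\,\braket{\phi^\perp,P^\perp_\delta(E^\perp)\psi^\perp}$, where each factor is recognized from the (analogously defined) lower-dimensional phase space observables. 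The right-hand side equals $\braket{\phi^\parallel\otimes\phi^\perp,(P^\parallel_\delta(E^\parallel)\otimes P^\perp_\delta(E^\perp))(\psi^\parallel\otimes\psi^\perp)}$ by definition of the tensor product of operators, which finishes the argument.

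The only real obstacle is rigor rather than depth: the vector-valued integral defining $P_\delta(E)$ converges merely weakly, so one cannot apply Fubini to it directly; the remedy, as above, is to pass to scalar quadratic forms, where Cauchy--Schwarz combined with the resolution of the identity supplies genuine absolute integrability. Everything else is bookkeeping with the tensor factorization of $\eta$ and the splitting of the constant $(2\pi)^{-d}$.
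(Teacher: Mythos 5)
Your proof is correct and takes essentially the same approach as the paper: reduce to elementary tensors and factor the defining integral over $E=E^\parallel\times E^\perp$ using the product structure of the coherent state $\eta_{x,p;\delta}=\eta^\parallel_{x^\parallel,p^\parallel;\delta}\otimes\eta^\perp_{x^\perp,p^\perp;\delta}$. The paper applies $P_\delta(E)$ directly to a simple tensor and splits the weakly convergent integral without comment, whereas you pass to scalar quadratic forms and invoke Cauchy--Schwarz plus the resolution of the identity to justify Fubini explicitly---a more careful write-up of the same argument.
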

\begin{proof}
For $\psi^\parallel\in L^2(\bbR^k)$ and $\psi^\perp\in L^2(\bbR^{d-k})$
\begin{align*}
    &P^d_\delta(E)(\psi^\parallel\otimes \psi^\perp)\\
    &=(2\pi)^{-d}\iint\limits_{E^\parallel}\,\iint\limits_{E^\parallel} \braket{\eta^\parallel_{x^\parallel,p^\parallel;\delta}\otimes \eta^\perp_{x^\perp,p^\perp;\delta},\psi_k\otimes \psi_{d-k}} \eta^\parallel_{x^\parallel,p^\parallel;\delta}\otimes \eta^\perp_{x^\perp,p^\perp;\delta} \,dx^\parallel\,dx^\perp\,dp^\parallel\,dp^\perp\\
    &=  P^\parallel_\delta(E^\parallel)\psi^\parallel\otimes P^\perp_\delta(E^\perp) \psi^\perp
\end{align*}
Since $P_\delta(E)$ acts as claimed on elementary tensors, the claim is established by the definition of the tensor product of two operators. 
\end{proof}

\begin{corollary}\label{PNormBoundTensor}
For any $\delta>0$, let $A=B\otimes C$  where $B$ is an operator acting on $L^2(\bbR^k)$ and $C$ acts on $L^2(\bbR^{d-k})$. Then for $E$ of the above form
\begin{align*}
    \|AP_\delta(E)\|_{\mathrm{op}}^2\leq (2\pi)^{-d}\iint\limits_{E^\parallel} \|B\eta_{x,p;\delta}\|^2\,dx\,dp\cdot \iint\limits_{E^\perp} \|C\eta_{x,p;\delta}\|^2\,dx\,dp
\end{align*}
and 
\begin{align*}
    \|P_\delta(E)\|_{\mathrm{op}}=\| P^\parallel_\delta(E^\parallel)\|_{\mathrm{op}}\| P^\perp_\delta(E^\perp) \|_{\mathrm{op}}
\end{align*}
\end{corollary}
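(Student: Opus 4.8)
The plan is to reduce both assertions to the tensor factorization of $P_\delta(E)$ already established in Proposition \ref{TensorClaim}, together with the standard identity $\|X\otimes Y\|_{\mathrm{op}}=\|X\|_{\mathrm{op}}\|Y\|_{\mathrm{op}}$ (as cited earlier from \cite{RSVol1}, page 299), and the one-factor bound of Proposition \ref{PNormBound}.

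First I would record that, since $E=E^\parallel\times E^\perp$, Proposition \ref{TensorClaim} gives $P_\delta(E)=P_\delta^\parallel(E^\parallel)\otimes P_\delta^\perp(E^\perp)$, and since $A=B\otimes C$ we have the factorization $AP_\delta(E)=\bigl(BP_\delta^\parallel(E^\parallel)\bigr)\otimes\bigl(CP_\delta^\perp(E^\perp)\bigr)$, using that $(B\otimes C)(X\otimes Y)=(BX)\otimes(CY)$ on elementary tensors and hence everywhere by density. Taking operator norms and applying the multiplicativity of the norm under tensor products yields
\begin{align*}
\|AP_\delta(E)\|_{\mathrm{op}}^2=\|BP_\delta^\parallel(E^\parallel)\|_{\mathrm{op}}^2\,\|CP_\delta^\perp(E^\perp)\|_{\mathrm{op}}^2
\end{align*}
and likewise $\|P_\delta(E)\|_{\mathrm{op}}=\|P_\delta^\parallel(E^\parallel)\|_{\mathrm{op}}\,\|P_\delta^\perp(E^\perp)\|_{\mathrm{op}}$, which is the second claim.

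For the first claim it then remains to estimate each factor. I would apply Proposition \ref{PNormBound} in dimension $k$ to $B$ and $P_\delta^\parallel(E^\parallel)$, and in dimension $d-k$ to $C$ and $P_\delta^\perp(E^\perp)$, obtaining
\begin{align*}
\|BP_\delta^\parallel(E^\parallel)\|_{\mathrm{op}}^2\leq (2\pi)^{-k}\iint\limits_{E^\parallel}\|B\eta^\parallel_{x,p;\delta}\|^2\,dx\,dp,\qquad
\|CP_\delta^\perp(E^\perp)\|_{\mathrm{op}}^2\leq (2\pi)^{-(d-k)}\iint\limits_{E^\perp}\|C\eta^\perp_{x,p;\delta}\|^2\,dx\,dp
\end{align*}
Multiplying these two inequalities and using $(2\pi)^{-k}(2\pi)^{-(d-k)}=(2\pi)^{-d}$ gives exactly the stated bound (with the understanding that the coherent states appearing in the $E^\parallel$ integral are the $\bbR^k$-coherent states $\eta^\parallel_{x,p;\delta}$ and those in the $E^\perp$ integral the $\bbR^{d-k}$-coherent states $\eta^\perp_{x,p;\delta}$).

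The argument is essentially bookkeeping; the only point requiring a little care is to make sure the dimension-dependent normalization constants in Proposition \ref{PNormBound} are tracked correctly through the two factors so that they recombine to $(2\pi)^{-d}$, and to justify the elementary-tensor manipulation $AP_\delta(E)=(BP_\delta^\parallel(E^\parallel))\otimes(CP_\delta^\perp(E^\perp))$ by density of finite sums of simple tensors in $\calH=L^2(\bbR^k)\otimes L^2(\bbR^{d-k})$. I do not anticipate any genuine obstacle.
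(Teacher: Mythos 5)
Your proof is correct and follows the same route the paper intends: the paper's own proof simply declares the corollary "immediate from Proposition \ref{TensorClaim} and Proposition \ref{PNormBound}," and your argument is exactly the expected unwinding of that remark, using the tensor factorization, multiplicativity of the operator norm, and \ref{PNormBound} applied in dimensions $k$ and $d-k$. The care you take with the normalization constants and with interpreting $\eta_{x,p;\delta}$ as $\eta^\parallel_{x,p;\delta}$ or $\eta^\perp_{x,p;\delta}$ in the respective integrals is exactly what the paper's abbreviated notation is eliding.
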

\begin{proof}
This is immediate from Proposition \ref{TensorClaim} and Proposition \ref{PNormBound}.
\end{proof}
\section{Potentials that Decay in $x^\perp$}\label{appendix:Decay}
In this appendix, we explain how our proofs may be adjusted to accommodate potentials satisfying
\begin{align}\label{Decay:cond}
	&\|V\chi_{S_R^c}\|_{\textrm{op}}\in L^1 \\
	&\sup_{x\in\bbR^d}|V(x)|=M<\infty
\end{align}\par
To see the existence, or part \ref{existenceTheorem} of Theorem \ref{thms}, for such potentials, we fix $\varepsilon\in (0,2\alpha)$ and change inequality (\ref{ext:PotDecay}) so that it reads
\begin{align*}
	&\|Ve^{-itH_0}\psi\|\leq M\|\psi^\parallel\|\|\chi_{B_{\varepsilon t}}e^{-itH_0^\perp}\psi^\perp\|+\|V\chi_{S_{\varepsilon t}^c}\|_{\textrm{op}}\|\psi\|
\end{align*}
The condition on $\varepsilon$ guarantees that $2\alpha >\frac{\varepsilon t}{t}$, which allows us to bound the first summand in the above by $C(1+t)^{-\ell+d}$ for any $\ell>0$ (compare to inequality (\ref{ExistEq})). This, combined with the condition (\ref{Decay:cond}), lets us conclude the existence of the wave operators.\par
For part \ref{Complete} of Theorem \ref{thms}, in the proof of Lemma \ref{omegalem} must be modified by fixing \\
$\varepsilon <\frac{1}{8}$ and replacing (\ref{eq:1}) by
\begin{align*}
    &\|(\Omega^--\id)\varphi_{n;\textrm{out}}\|\\
    &\leq M\int\limits_0^\infty \|\chi_{S_{\varepsilon (n+mt)}}e^{-itH_0} \varphi_{n;\textrm{out}}\|\,dt+\int\limits_0^\infty \|V\chi_{S_{\varepsilon (n+mt)}}\|_{\textrm{op}}\| \varphi_{n;\textrm{out}}\|\,dt \label{eq:1}
\end{align*}
Again, the second summand decays as per  condition (\ref{Decay:cond}). For the first summand, we must only change Claim \ref{geoclaim} to allow $y\in S_{\varepsilon(n+mt)}$, which is achieved via the restriction on $\varepsilon$ . Similar adjustment will give the result for Lemma \ref{VertClaim}. After this, the proof works as written. 

\bibliographystyle{amsplain}
\bibliography{bibliography}
\end{document}